\newtheorem{define}{Definition}
\newtheorem{claim}{Claim}
\newtheorem{theorem}{Theorem}
\newtheorem{lemma}{Lemma}
\renewcommand{\paragraph}[1]{\vspace{0.5em}\noindent{\bf #1:}}
\newcommand{\equsize}{\normalsize}
\title{Reuse It Or Lose It: More Efficient Secure Computation\\
Through Reuse of Encrypted Values}
\author{
{\normalsize Benjamin Mood} \\
{\sf\normalsize bmood@ufl.edu}\\
University of Florida
\and
{\normalsize Debayan Gupta} \\
{\sf\normalsize debayan.gupta@yale.edu}\\
Yale University
\and
{\normalsize Kevin Butler} \\
{\sf\normalsize butler@ufl.edu}\\
University of Florida
\and
{\normalsize Joan Feigenbaum} \\
{\sf\normalsize joan.feigenbaum@yale.edu}\\
Yale University
}
\begin{document}

\maketitle

\begin{abstract}
\noindent
Two-party secure function evaluation (SFE) has become significantly more
feasible, even on resource-constrained devices, because of advances in
server-aided computation systems. However, there are still bottlenecks,
particularly in the input validation stage of a computation. Moreover, SFE
research has not yet devoted sufficient attention to the important problem of
retaining state after a computation has been performed so that expensive
processing does not have to be repeated if a similar computation is done again.
This paper presents PartialGC, an SFE system that allows the reuse of encrypted
values generated during a garbled-circuit computation. We show that using
PartialGC can reduce computation time by as much as 96\% and bandwidth by as
much as 98\% in comparison with previous outsourcing schemes for secure
computation. We demonstrate the feasibility of our approach with two sets of
experiments, one in which the garbled circuit is evaluated on a mobile device
and one in which it is evaluated on a server. We also use PartialGC to build a
privacy-preserving ``friend finder'' application for Android. The reuse of
previous inputs to allow stateful evaluation represents a new way of looking at
SFE and further reduces computational barriers.
\end{abstract}

\section{Introduction}

Secure function evaluation, or {\it SFE}, allows multiple parties to jointly
compute a function while maintaining input and output privacy. The two-party
variant, known as 2P-SFE, was first introduced by Yao in the
1980s~\cite{Yao1982} and was largely a theoretical curiosity. Developments in
recent years have made 2P-SFE vastly more
efficient~\cite{Huang2011,Kreuter2012,shelat13}. However, computing a function
using SFE is still usually much slower than doing so in a non-privacy-preserving
manner. Garbled circuits, described by Yao, are a powerful mechanism for 
performing SFE, with modern variants allowing malicious security for complex 
programs.

As mobile devices become more powerful and ubiquitous, users expect more
services to be accessible through them. When SFE is performed on mobile
devices (where resource constraints are tight), it is extremely slow\ --\
{\em if} the computation can be run at all without exhausting the memory,
which can happen for non-trivial input sizes and algorithms~\cite{CMTB2013}.
One way to allow mobile devices to perform SFE is to use a server-aided
computational model~\cite{CMTB2013,Kamara2012}, allowing the majority of an
SFE computation to be ``outsourced'' to a more powerful device while still
preserving privacy. Past approaches, however, have not considered the ways
in which mobile computation differs from the desktop. Often, the mobile
device is called upon to perform {\em incremental} operations that are
continuations of a previous computation. 

Consider, for example, a
{\it friend finder} application where the location of users is updated
periodically to determine whether a contact is in proximity. Traditional
applications disclose location information to a central server. A
privacy-preserving {\it friend finder} could perform these operations in a
mutually oblivious fashion. However, every incremental location update would
require a full re-evaluation of the function with fresh inputs in a standard
SFE solution. Our examination of an outsourced SFE scheme for mobile devices
by Carter et al.~\cite{CMTB2013} (hereon CMTB), determined that the
cryptographic consistency checks performed on the inputs to an SFE
computation {\em themselves} can constitute the greatest bottleneck to
performance.

Additionally, many other applications require the ability to save state, a
feature that current garbled circuit implementations do not possess. The
ability to save state and reuse an intermediate value from one garbled
circuit execution to another would be useful in many other ways, {\it
e.g.}, we could split a large computation into a number of smaller pieces.
Combined with efficient input validation, this becomes an extremely
attractive proposition.

In this paper, we show that it is possible to reuse an encrypted value in an
outsourced SFE computation (we use a cut-and-choose garbled circuit
protocol) even if one is restricted to primitives that are part of standard
garbled circuits. Our system, PartialGC, which is based on CMTB, provides a
way to take encrypted output wire values from one SFE computation, save
them, and then reuse them as input wires in a new garbled circuit. Our
method vastly reduces the number of cryptographic operations compared to the
trivial mechanism of simply XOR'ing the results with a one-time pad, which
requires either generating inside the circuit, or inputting, a very large
one-time pad, both complex operations.  Through the use of improved input
validation mechanisms proposed by shelat and Shen~\cite{shelat13} (hereon
sS13) and new methods of {\em partial input} gate checks and evaluation, we
improve on previous proposals. There are other approaches to the creation of
reusable garbled circuits~\cite{Goldwasser2013,Gentry2013,Bran2013}, and
previous work on reusing encrypted values in the ORAM
model~\cite{LO13,GHLORW14,LO14}, but these earlier schemes have not been
implemented. By contrast, we have implemented our scheme and found it to be
both practical and efficient; we provide a performance analysis and a sample
application to illustrate its feasibility (Section~\ref{sec:experiments}),
as well as a simplified example execution (Appendix~\ref{appendix:example}).

By breaking a large program into smaller pieces, our system allows 
interactive I/O throughout the garbled circuit computation. To the best of our
knowledge this is the first practical protocol for performing interactive I/O in
the middle of a cut-and-choose garbled circuit computation. 

%

Our system comprises three parties - a generator, an evaluator, and a third
party (``the cloud''), to which the evaluator outsources its part of the
computation. Our protocol is secure against a malicious adversary, assuming
that there is no collusion by either party with the cloud. We also provide 
a semi-honest version of the protocol.

\begin{figure}[t]
\centering
\includegraphics[width=3in]{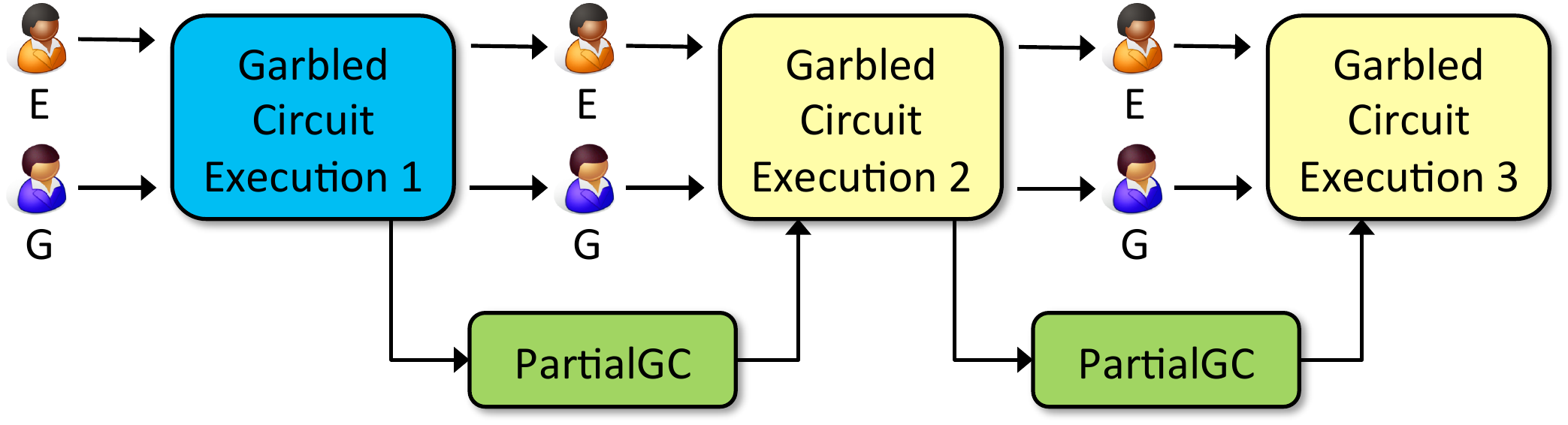}
\caption{PartialGC Overview. E is evaluator and G is generator. The blue box is a standard execution that produces partial outputs (garbled values); yellow boxes represent executions that take partial inputs and produce partial outputs.}
\label{fig:overview}
\end{figure}

\begin{figure}[t]
\centering
\includegraphics[width=3in]{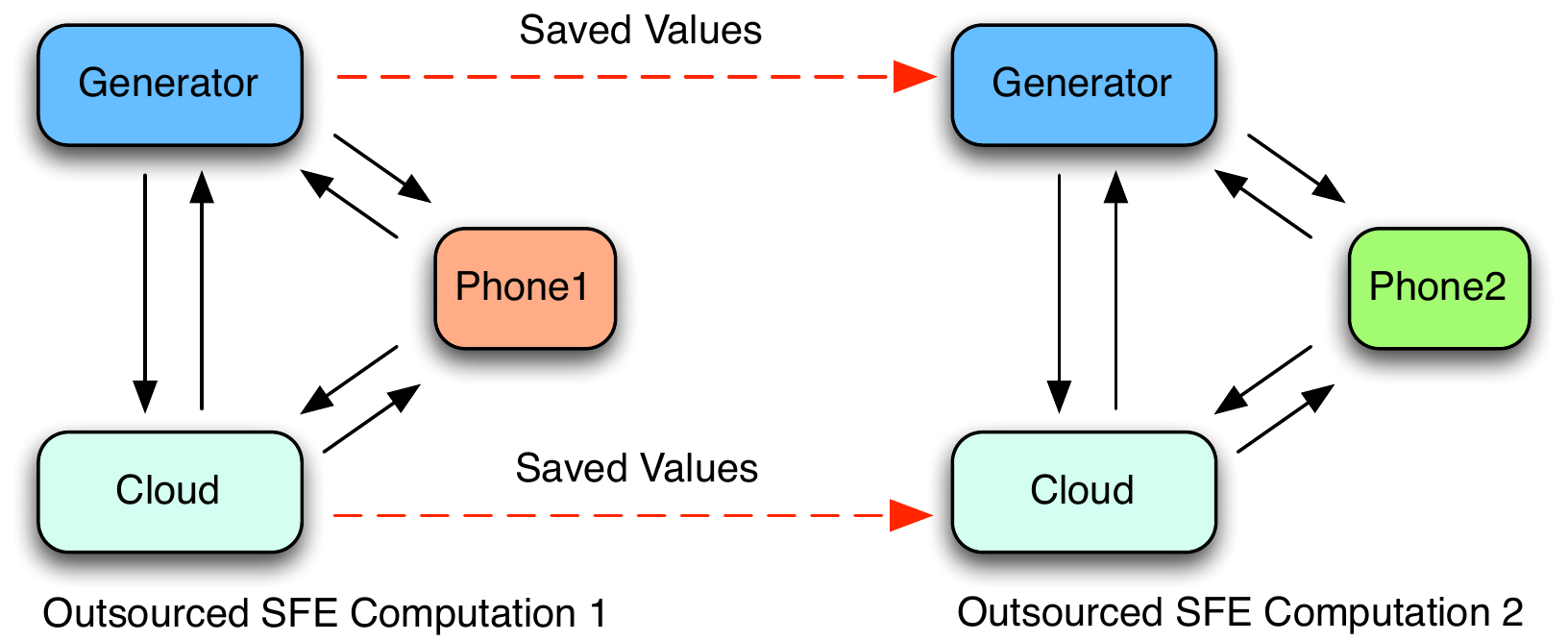}
\caption{Our system has three parties. Only the cloud and generator have to save intermediate values - this means that we can have different phones in different computations.}
\label{fig:outsource}
\end{figure}

Figure~\ref{fig:overview} shows how PartialGC works at a high level: First,
a standard SFE execution (blue) takes place, at the end of which we ``save''
some intermediate output values. All further executions use intermediate
values from previous executions.  In order to reuse these values,
information from both parties -- the generator and the evaluator -- has to
be saved. In our protocol, it is the cloud -- rather than the evaluator --
that saves information. This allows multiple distinct evaluators to
participate in a large computation over time by saving state in the cloud
between different garbled circuit executions. For example, in a scenario
where a mobile phone is outsourcing computation to a cloud, PartialGC can
save the encrypted intermediate outputs to the cloud instead of the phone
(Figure~\ref{fig:outsource}). This allows the phones to communicate with
each other by storing encrypted intermediate values in the cloud, which is
more efficient than requiring them to directly participate in the saving of
values, as required by earlier 2P-SFE systems. Our friend finder
application, built for an Android device, reflects this usage model and
allows multiple friends to share their intermediate values in a cloud.
Other friends use these saved values to check whether or not someone is in
the same map cell as themselves without having to copy and send data.


By incorporating our optimizations, we give the following contributions:

\setlist{nolistsep=3mm,leftmargin=4mm}
\begin{enumerate}

\item {\it Reusable Encrypted Values} -- We show how to reuse an encrypted value, using only garbled circuits, by mapping one garbled value into another.

\item {\it Reduced Runtime and Bandwidth} -- We show how reusable encrypted values can be used in practice to reduce the execution time for a garbled-circuit computation; we get a 96\% reduction in runtime and a 98\% reduction in bandwidth over CMTB. 
\begin{techreport}
Impressively, we can reduce the amount of bandwidth required by the mobile party {\it arbitrarily} when no input checks have to be performed on the partial (intermediate) inputs in our protocol.
\end{techreport}

\item {\it Outsourcing Stateful Applications} -- We show how our system increases the scope of SFE applications by allowing multiple evaluating parties over a period of time to operate on the saved state of an SFE computation without the need for these parties to know about each other. 

\end{enumerate}

\noindent
The remainder of our paper is organized as follows:
Section~\ref{sec:background} provides some background on SFE.
Section~\ref{sec:pgc} introduces the concept of partial garbled circuits in detail. The
PartialGC protocol and its implementation are described in
Section~\ref{sec:protocolchanges}, while its security is analyzed in 
Section~\ref{sec:sec}.
Section~\ref{sec:experiments} evaluates PartialGC and introduces the friend
finder application. Section~\ref{sec:relwork}
discusses related work and Section~\ref{sec:conc} concludes.

\section{Background}
\label{sec:background}

Secure function evaluation (SFE) addresses scenarios where two or more
mutually distrustful parties $P_1,\dots, P_n$, with private inputs
$x_1,\dots, x_n$, want to compute a given function $y_i = f(x_1,\dots, x_n)$
($y_i$ is the output received by $P_i$), such that no $P_i$ learns anything
about any $x_j$ or $y_j$, $i \neq j$ that is not logically implied by $x_i$
and $y_i$. Moreover, there exists no trusted third party -- if there was, the
$P_i$s could simply send their inputs to the trusted party, which would
evaluate the function and return the $y_i$s.

SFE was first proposed in the 1980s in Yao's seminal paper~\cite{Yao1982}.
The area has been studied extensively by the cryptography community, leading
to the creation of the first general purpose platform for SFE,
Fairplay~\cite{Malkhi2004} in the early 2000s. Today, there exist many such
platforms
~\cite{Burkhart2010,Damgard2009,Henecka2010,Holzer2012,Kreuter2013,Shelat2011,Zhang2013}.

The classic platforms for 2P-SFE, including Fairplay, use garbled circuits.
A garbled circuit is a Boolean circuit which is encrypted in such a way that it
can be evaluated when the proper input wires are entered. The party that
evaluates this circuit does not learn anything about what any particular wire
represents. In 2P-SFE, the two parties are: the {\it generator}, which creates
the garbled circuit, and the {\it evaluator}, which evaluates the garbled
circuit. Additional cryptographic techniques are used for input and output; we
discuss these later.

A two-input Boolean gate has four truth table entries. A two-input garbled
gate also has a truth table with four entries representing 1s and 0s, but these
entries are encrypted and can only be retrieved when the proper keys are used.
The values that represent the 1s and 0s are random strings of bits. The truth
table entries are permuted such that the evaluator cannot determine which entry
she is able to decrypt, only that she is able to decrypt an entry. The entirety
of a garbled gate is the four encrypted output values.

Each garbled gate is then encrypted in the following way: Each entry in the
truth table is encrypted under the two input wires, which leads to the result,
$truth_i=Enc(input_x | |$ $ input_y)\oplus output_i$, where $truth_i$ is a value in
the truth table, $input_x$ is the value of input wire $x$, $input_y$ is the
value of input wire $y$, and $output_i$ is the non-encrypted value, which
represents either 0 or 1.We use AES as the $Enc$ function. If the evaluator has
$input_x$ and $input_y$, then she can also receive $output_i$, and the encrypted
truth tables are sent to her for evaluation.

For the evaluator's input, 1-out-of-2 oblivious transfers
(OTs)~\cite{Bellare1990,Ishai2003,Naor1999a,Naor2001} are
used. In a 1-out-of-2 OT, one party offers up two possible
values while the other party selects one of the two values
without learning the other. The party that offers up the
two values does not learn which value was selected. Using this
technique, the evaluator gets the wire labels for her input
without leaking information.

The only way for the evaluator to get a correct output value from a garbled
gate is to know the correct decryption keys for a specific entry in the
truth table, as well as the location of the value she has to decrypt.

During the permutation stage, rather than simply randomly permuting the values,
the generator permutes values based on a specific bit in $input_x$ and
$input_y$, such that, given $input_x$ and $input_y$ the evaluator knows that the
location of the entry to decrypt is $bit_x*2+bit_y$. These bits are called the
{\it permutation bits}, as they show the evaluator which entry to select based
on the permutation; this optimization, which does not leak any information, is
known as {\em point and permute}~\cite{Malkhi2004}.

\subsection{Threat Models}

Traditionally, there are two threat models discussed in SFE work,
semi-honest and malicious. The above description of garbled circuits is the
same in both threat models. In the semi-honest model users stay true to the
protocol but may attempt to learn extra information from the system by
looking at any message that is sent or received. In the
malicious model, users may attempt to change anything with the goal of
learning extra information or giving incorrect results without being
detected; extra techniques must be added to achieve
 security against a malicious adversary.

There are several well-known attacks a malicious adversary could use against
a garbled circuit protocol. A protocol secure against malicious adversaries
must have solutions to all potential pitfalls, described in turn:

{\it Generation of incorrect circuits}:
If the generator does not create a correct garbled circuit, he could learn extra
information by modifying truth table values to output the evaluator's input; he
is limited only by the external structure of the garbled circuit the evaluator
expects.

{\it Selective failure of input}:
If the generator does not offer up correct input wires to the evaluator, and
the evaluator selects the wire that was not created properly, the generator
can learn up to a single bit of information based on whether the computation
produced correct outputs.

{\it Input consistency}:
If either party's input is not consistent across all circuits, then it might
be possible for extra information to be retrieved. 

{\it Output consistency}:
In the two-party case, the output consistency check verifies that the evaluator
did not modify the generator's output before sending it.

\subsubsection{Non-collusion}\label{sec:background:noncollusion}
CMTB assumes non-collusion, as quoted below:

\smallskip

\noindent{\it ``The outsourced two-party SFE protocol securely computes a function f(a,b) in the following two corruption scenarios: 
(1)The cloud is malicious and non-cooperative with respect to the rest of the parties, while all other parties are semi-honest, (2)All but one party is malicious, while the cloud is semi-honest.''}

\smallskip

This is the standard definition of non-collusion used in server-aided works such as Kamara et al.~\cite{Kamara2012}. Non-collusion does not mean the parties are trusted; it only means the two parties are not working together in order to cheat. In CMTB, any individual party that attempts to cheat to gain additional information will still be caught, but collusion between multiple parties could leak information. For instance, the generator could send the cloud the keys to decrypt the circuit and see what the intermediate values are of the garbled function.

\section{Partial Garbled Circuits}
\label{sec:pgc}

We introduce the concept of {\em partial garbled circuits} (PGCs), which allows
the encrypted wire outputs from one SFE computation to be used as inputs to
another. This can be accomplished by {\em mapping} the encrypted output wire
values to valid input wire values in the next computation. In order to better
demonstrate their structure and use, we first present PGCs in a semi-honest
setting, before showing how they can aid us against malicious adversaries.

\subsection{PGCs in the Semi-Honest Model}

In the semi-honest model, for each wire value, the generator can simply send two
values to the evaluator, which transforms the wire label the evaluator owns to
work in another garbled circuit. Depending on the point and permute bit of the
wire label received by the evaluator, she can map the value from a previous
garbled circuit computation to a valid wire label in the next computation. 

Specifically, for a given wire pair, the generator has wires $w_0^{t-1}$ and
$w_1^{t-1}$, and creates wires $w_0^{t}$ and $w_1^{t}$. Here, $t$ refers to a
particular computation in a series, while 0 and 1 correspond to the values of
the point and permute bits of the $t-1$ values. The generator sends the values
$w_0^{t-1}\oplus w_0^{t}$ and $w_1^{t-1}\oplus w_1^{t}$ to the evaluator.
Depending on the point and permute bit of the $w_i^{t-1}$ value she possesses,
the evaluator selects the correct value and then XORs her $w_i^{t-1}$ with the
($w_i^{t-1}\oplus w_i^{t}$) value, thereby giving her $w_i^{t}$, the valid
partial input wire.

\subsection{PGCs in the Malicious Model}

In the malicious model we must allow the evaluation of a circuit with
partial inputs and verification of the mappings, while preventing a selective
failure attack. The following features are necessary to accomplish these goals:

\smallskip
\noindent{\it Verifiable Mapping}:
The generator $G$ is able to create a secure mapping from a saved garbled wire
value into a new computation that can be checked by the evaluator $E$, without
$E$ being able to reverse the mapping. During the evaluation and check phase, $E$ must be able to
verify the mapping $G$ sent. $G$ must have either committed to the mappings
before deciding the partition of evaluation and check circuits, or never learned
which circuits are in the check versus the evaluation sets.

\begin{figure}[t]
\centering
\includegraphics[width=2in]{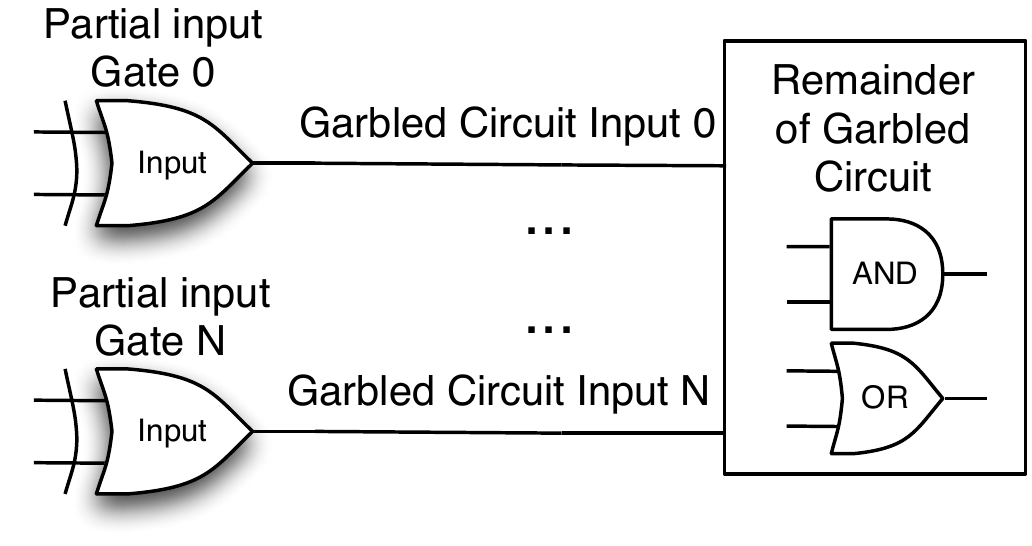}
\caption{This figure shows how we create a single {\it partial input gate} for each input bit for each circuit and then link the {\it partial input gates} to the remainder of the circuit.}
\label{fig:pinputs}
\end{figure}

\smallskip
\noindent{\it Partial Generation and Partial Evaluation}:
$G$ creates the garbled gates necessary for $E$ to enter the previously output
intermediate encrypted values into the next garbled circuit. These garbled gates
are called {\it partial input gates}. As shown in Figure \ref{fig:pinputs} each
garbled circuit is made up of two pieces: the partial input gates and the
remainder of the garbled circuit.

\smallskip
\noindent{\it Revealing Incorrect Transformations}:
Our last goal is to let $E$ inform $G$ that incorrect values have been
detected. Without a way to limit leakage, $G$ could gain information based on
whether or not $E$ informs $G$ that she caught him cheating. This is a selective
failure attack and is not present in our protocol.

\section{PartialGC Protocol} \label{sec:protocolchanges}
We start with the CMTB protocol and add cut-and-choose operations from sS13
before introducing the mechanisms needed to save and reuse values. We defer to
the original papers for full details of the outsourced oblivious
transfer~\cite{CMTB2013} and the generator's input consistency
check~\cite{shelat13} sub-protocols that we use as primitives in our protocol.

Our system operates in the same threat model as CMTB (see
Section~\ref{sec:background:noncollusion}): we are secure against a malicious
adversary under the assumption of non-collusion. A description of the CMTB
protocol is available in Appendix~\ref{appendix:CMTB}.


\subsection{Preliminaries} 


There are three participants in the protocol: 

{\it Generator} -- The generator is the party that generates the
garbled circuit for the 2P-SFE. 

{\it Evaluator} --The evaluator is the other party in the 2P-SFE; it outsources computation to the cloud. 

{\it Cloud} -- The cloud is the third party that executes the garbled
circuit outsourced by the evaluator.

\smallskip\smallskip\noindent{\bf Notation}


\smallskip\noindent {\it $C_i$} - The $i$th circuit.

\smallskip\noindent {\it $CKey_i$} - Circuit key used for the free XOR
optimization~\cite{Kolesnikov2008}. The key is randomly generated and then used
as the difference between the 0 and 1 wire labels for a circuit $C_i$.

\smallskip\noindent {\it $CSeed_i$} - This value is created by the generator's
PRNG and is used to generate a particular circuit $C_i$.

\smallskip\noindent {\it $POut\#_{i,j}$} - The {\it partial output} values are the encrypted wire values
output from an SFE computation. These are encrypted garbled circuit values that
can be reused in another garbled circuit computation. \# is replaced in our
protocol description with either a 0, 1, or x, signifying whether it represents
a 0, 1, or an unknown value (from the cloud's point of view). $i$ denotes the
circuit the $POut$ value came from and $j$ denotes the wire of the $POut_i$
circuit.


\smallskip\noindent {\it $PIn\#_{i,j}$} - The {\it partial input} values are the re-entered
{\it POut} values after they have been obfuscated to remove the circuit key from
the previous computation. These values are input to the {\it partial input
gates}. \#, $i$, and $j$, are the same as above.

\smallskip\noindent {\it $GIn\#_{i,j}$} - The {\it garbled circuit input} values
are the results of the partial input gates and are input into the remaining
garbled circuit, as shown in Figure~\ref{fig:pinputs}. \#, $i$, and $j$, are
the same as above.

\smallskip\noindent {\it Partial Input Gates} - These are garbled gates that
take in $PIn$ values and output $GIn$ values. Their purpose is to transform the
{\it PIn} values into values that are under $CKey_i$ for the current circuit.

\begin{algorithm}[h]

\scriptsize

\DontPrintSemicolon
 \SetKwData{Left}{left}\SetKwData{This}{this}\SetKwData{Up}{up}
  \SetKwFunction{algCutandChoose}{Cut\_and\_Choose}
\SetKwFunction{algEvaluatorInput}{Evaluator\_Input}
\SetKwFunction{algGeneratorInputCheck}{Generator\_Input\_Check}
\SetKwFunction{algPartialInput}{Partial\_Input}
\SetKwFunction{algCircuitExecution}{Circuit\_Execution}
\SetKwFunction{algCircuitOutput}{Circuit\_Output}
\SetKwFunction{algPartialOutput}{Partial\_Output}
  \SetKwInOut{Input}{Input}\SetKwInOut{Output}{Output}

\setcounter{algocf}{-1}
\Input{Circuit\_File, Bit\_Security, Number\_of\_Circuits, Inputs, Is\_First\_Execution}
\Output{Circuit File Output}
\caption{PartialComputation}

\algCutandChoose{is\_First\_Execution}\;

Eval\_Garbled\_Input $\leftarrow$ \algEvaluatorInput{Eval\_Select\_Bits, Possible\_Eval\_Input}\;

\algGeneratorInputCheck{Gen\_Input}\;

Partial\_Garbled\_Input  $\leftarrow$ \algPartialInput{Partial\_$Output_{time-1}$}\;

Garbled\_Output, Partial\_Output $\leftarrow$ \algCircuitExecution{Garbled\_Input (Gen, Eval, Partial) }\;

\algCircuitOutput{Garbled\_Output}\;

\algPartialOutput{Partial\_Output}\;
\end{algorithm}

\subsection{Protocol}

Each computation is self-contained; other than what is explicitly described as
saved in the protocol, each value or property is only used for a single part of
the computation ({\it i.e.} randomness is different across computations).

\smallskip\noindent{\it Common Inputs:} The program circuit file, the bit level
security $K$, the circuit level security (number of circuits) $S$, and encryption
and commitment functions.
 
\smallskip\noindent{\it Private Inputs:} The evaluator's input $evlInput$ and
generator's input $genInput$.

\smallskip\noindent{\it Outputs:} The evaluator and generator can both receive
garbled circuit outputs.

\smallskip\noindent
{\bf Phase 1: Preparation and Cut-and-choose}

\begin{algorithm}[h]

\scriptsize

\DontPrintSemicolon
 \SetKwData{Left}{left}\SetKwData{This}{this}\SetKwData{Up}{up}
  \SetKwFunction{algCutandChoose}{Cut\_and\_Choose}
\SetKwFunction{algEvaluatorInput}{Evaluator\_Input}
\SetKwFunction{algGeneratorInputCheck}{Generator\_Input\_Check}
\SetKwFunction{algPartialInput}{Partial\_Input}
\SetKwFunction{algCircuitExecution}{Circuit\_Execution}
\SetKwFunction{algCircuitOutput}{Circuit\_Output}
\SetKwFunction{algPartialOutput}{Partial\_Output}
  \SetKwInOut{Input}{Input}\SetKwInOut{Output}{Output}
\SetKwData{Left}{left}\SetKwData{This}{this}\SetKwData{Up}{up}
\Input{is\_First\_Execution}
\caption{Cut\_and\_Choose}

 \If{is\_First\_Execution}{
circuitSelection $\leftarrow$ rand() // bit-vector of size $S$ \;
}
$N \leftarrow \frac{2}{5}S$ // Number of evaluation circuits \;

//Generator creates his garbled input and circuit seeds for each
circuit \;
\For{$i\leftarrow 0$ \KwTo $S$}{
	$CSeed_i \leftarrow rand()$\;
	$garbledGenInput_i \leftarrow garble(genInput, rand())$\;
	//generator creates or loads keys\;
	 \eIf{is\_First\_Execution}{
	$checkKey_i \leftarrow rand()$\;
	$evlKey_i \leftarrow rand()$ \;
	}
	{
		loadKeys();\;
		$checkKey_i \leftarrow hash(loadedCheckKey_i) $\;
		$evlKey_i \leftarrow hash(loadedEvlKey_i) $\;
	}
	// encrypts using unique one-time XOR pads\;
	$encSeedIn_i \leftarrow CSeed_i \oplus evlKey_i$ \;
	$encGarbledIn_i \leftarrow garbledGenInput_i \oplus checkKey_i$ \;
}

 \eIf{is\_First\_Execution}{
// generator offers input OR keys for each circuit seed\;
$selectedKeys \leftarrow OT(circuitSelection,\{evlKey,checkKey\})$\;
}
{
loadSelectedKeys()\;
}

\For{$i\leftarrow 0$ \KwTo $S$}{
	$ genSendToEval(hash(checkKey_i),$ $hash(evaluationKey_i))$\;
}

\For{$i\leftarrow 0$ \KwTo $S$}{
	$cloudSendToEval(hash(selectedKey_i),$ $isCheckCircuit_i)$\;
}

// If all values match, the evaluator learns split, else abort. \;
\For{$i\leftarrow 0$ \KwTo $S$}{
	$ j \leftarrow isCheckCircuit_i$\;
	$correct \leftarrow ( recievedGen_{i,j} == recievedEvl_i ) $\;
	\If{!correct}{
		abort()\;
	}
}
\end{algorithm}

\noindent{\it Preparation}:

The generator creates two seeds for each circuit $C_0 \dots C_{S-1}$, $CSeed_i = \{0,1\}^K$ .

We prepare our circuits such that any output to the generator or evaluator
is output under a one-time pad, encrypted inside of the circuit. That is we augment all circuits such that $out_{evl} = out_{evl} \oplus outputKey_{evl}$ and $out_{gen} = out_{gen} \oplus outKey_{gen}$, where $out_{evl}$ and $out_{gen}$ is the initial output.

The generator and evaluator's input is extended to include the corresponding $outputKey$ and  a $K$-bit secret key for a MAC.

Using the same technique as CMTB for input encoding to split the evaluator's input in $K$ bits, where $bit_{j,0} \oplus \dots \oplus bit_{j,K-1} = evlInput_j$ for the $j$th bit of the evaluator's input. The generator then creates the possible evaluator's input for each circuit $C_i$. To create the evaluator's input, the generator creates a key $IKey_i = \{0,1\}^K$ for the $i$th circuit, and a set of seeds, $evlInputSeeds0_{j} = \{0,1\}^K$ and $evlInputSeeds1_{j} = \{0,1\}^K$,  where for $0 <= j < len(evlInput)$. Two seeds are created for each bit, representing 0 and 1. The garbled input values are then created:

\vspace{-4mm}

\equsize \begin{eqnarray*}
garbledInputEvl0_{ij} = hash(evlInputSeeds0_{j}, IKey_{i})\\
garbledInputEvl1_{ij} = hash(evlInputSeeds1_{j}, IKey_{i})
\end{eqnarray*} \normalsize

As with CMTB, the possible evaluator's inputs are permuted for each different circuit to prevent the cloud from understanding what the evaluator's input maps to. The generator commits to each input value so the cloud will be able to verify he did not swap values.

\smallskip\noindent{\it Cut-and-choose}:

Unlike some other GC protocols we do not commit to the various circuits before we execute the cut-and-choose. We modify the cut-and-choose mechanism described in sS13 as we have an extra
party involved in the computation. In this cut-and-choose, the cloud selects
which circuits are evaluation circuits and which circuits are check circuits,

\vspace{-4mm}

\equsize \begin{eqnarray*}
circuitSelection = \{0,1\}^S
\end{eqnarray*} \normalsize


\noindent  
where 0 is an evaluation circuit and 1 is a check circuit. $N$ evaluation circuits and $S-N$
check circuits are selected (like sS13, we use $N=\frac{2}{5}S$). The generator does not learn the circuit selection.

The generator generates garbled versions of his input and circuit seeds for each
circuit. He encrypts these values using unique one-time XOR pad
key for each circuit. He also encrypts the evaluator's possible input. For $0 \leq i < S$,

\vspace{-4mm}

\equsize\begin{eqnarray*}
 garbledGenInput_i = garbleInput(genInput)\\
checkKey_i =  \{0,1\}^K \\ evlKey_i =  \{0,1\}^K\\
encGarbledIn_i = garbledGenInput_i \oplus evlKey_i \\
encSeedIn_i = CSeed_i \oplus checkKey_i\\
encInputEvl = garbledInputEvl \oplus checkKey_i
\end{eqnarray*}
\normalsize

\noindent
where $garbleInput()$ takes in the input, and produces a vector of $\{0,1\}^K$ bit strings, one for each bit of the generator's input for a given $C_i$ and $garbledInputEvl$ is the garbled input\\ $(garbledInputEvl0_{i,0} || \dots || garbledInputEvl0_{i,len-1} $\\$|| garbledInputEvl1_{i,0} $ $ || \dots || garbledInputEvl1_{i,len-1})$ \\ and $len$ is the length of $evlInput$.


The cloud and generator perform an oblivious transfer where the generator offers
up decryption keys for his input and decryption keys for the circuit seed and possible evaluator's input for
each circuit. The cloud can select the key to decrypt the generator's input or
the key to decrypt the circuit seed and possible evaluator's input for a circuit but not both. 

\vspace{-4mm}

\equsize 
\begin{eqnarray*}
\quad selectedKeys = OT(circuitSelection,\{evlKey,checkKey\})
\end{eqnarray*} 
\normalsize


\noindent
For each circuit, if the cloud selects the decryption key for the circuit seed and possible evaluator's input in the
oblivious transfer, then the circuit is used as a check circuit. If the cloud selects the key for the generator's input then the circuit is used as an evaluation circuit. 

The generator sends the encrypted garbled inputs and check circuit
information for all circuits to the cloud. The cloud decrypts the information he
can decrypt using its keys. Both the cloud and generator save the decryption keys so they can be used in future computations, which use saved values.

\smallskip\noindent The evaluator must also learn the circuit split. The
generator sends a hash of each possible encryption key the cloud could have
selected to the evaluator for each circuit as an ordered pair. For $0 \leq i < S$,

\vspace{-4mm}

\equsize \begin{eqnarray*}
genSend(hash(checkKey_i), hash(evaluationKey_i))
\end{eqnarray*} \normalsize


\noindent
The cloud sends a hash of the value received to the evaluator for each circuit.
The cloud also sends bits to indicate which circuits were selected as check or
evaluation circuits to the evaluator. For $0 \leq i < S$,

\vspace{-4mm}

\equsize \begin{eqnarray*}
cloudSend(hash(selectedKey_i),isCheckCircuit_i)
\end{eqnarray*} \normalsize


\noindent
The evaluator compares the hash the cloud sent to one of the hashes the
generator sent, which is selected by the circuit selection sent by the cloud. For $0 \leq i < S$,

\vspace{-4mm}

\equsize \begin{eqnarray*}
j = isCheckCircuit_i \\
correct = (receivedGen_{i,j} ==receivedEvl_i )
\end{eqnarray*} \normalsize


\noindent
If all values match, the evaluator uses the
$isCheckCircuit_i$ to learn the split between check and evaluator circuits. Otherwise, abort.

We only perform the cut-and-choose oblivious transfer for the initial
computation. For any subsequent computations, the generator and evaluator hash
the saved decryption keys and use those hashes as the new encryption and
decryption keys. The circuit split selected by the cloud is saved and stays the
same across computations.

At the conclusion of this step (1) the cloud has all the information to evaluate the evaluation circuits when they are sent by the generator, i.e. the generator's input for each evaluation circuit, (2) the cloud has all the information to validate the check circuits when the generator sends those over, {\it i.e.}, each circuit seed and the possible evaluator's input for the check circuits (3) the cloud and evaluator know the check and evaluation circuit split, (4) the generator does not know the circuit split.

\smallskip\noindent
{\bf Phase 2: Evaluator's Input and Oblivious Transfer}

\begin{algorithm}[h]

\scriptsize

\DontPrintSemicolon
 \SetKwData{Left}{left}\SetKwData{This}{this}\SetKwData{Up}{up}
  \SetKwFunction{algCutandChoose}{Cut\_and\_Choose}
\SetKwFunction{algEvaluatorInput}{Evaluator\_Input}
\SetKwFunction{algGeneratorInputCheck}{Generator\_Input\_Check}
\SetKwFunction{algPartialInput}{Partial\_Input}
\SetKwFunction{algCircuitExecution}{Circuit\_Execution}
\SetKwFunction{algCircuitOutput}{Circuit\_Output}
\SetKwFunction{algPartialOutput}{Partial\_Output}
  \SetKwInOut{Input}{Input}\SetKwInOut{Output}{Output}
\SetKwData{Left}{left}\SetKwData{This}{this}\SetKwData{Up}{up}
\Input{Eval\_Select\_Bits, Possible\_Eval\_Input}
\Output{Eval\_Garbled\_Input }
\caption{Evaluator\_Input}

// cloud gets selected input wires
// generator offers both possible input wire values for each input wire; evaluator selects its input \;
$outSeeds = BaseOOT(bitsEvl, possibleInputs)$.  \;

// the generator sends unique IKey values for each circuit to the evaluator\;
\For{$i\leftarrow 0$ \KwTo $S$}{
	$genSendToEval( IKey_i )$\;
}

// the evaluator sends IKey values for all evaluation circuits to the cloud \;
\For{$i\leftarrow 0$ \KwTo $S$}{
	\If{!isCheckCircuit(i)}{
		$EvalSendToCloud( IKey_i )$\;
	}
}

// cloud uses this to learn appropriate inputs\;
\For{$i\leftarrow 0$ \KwTo $S$}{
\For{$j\leftarrow 0$ \KwTo $len(evlInputs)$}{
	\If{!isCheckCircuit(i)}{
		$inputEvl_{ij} \leftarrow hash(IKeys_{i}, outSeeds_{j})$\;
	}
}
}

return inputEvl\;
\end{algorithm}

We use the base outsourced oblivious transfer (OOT) of CMTB. In CMTB's OOT, the evaluator enters in the inputs buts and the generator enters in both possible inputs. The evaluator and generator perform a single OT operation before extending it, using the Ishai OT extension, to all the input bits.

After extending it across each input bit, it is then extended across each garbled circuit using the same technique described in the algorithm. After the OOT is finished, the cloud has the selected input wire values, which represent the evaluator's input. 

As with CMTB, which uses the results from a single OOT as seeds to create the evaluator's input for all circuits, the cloud in our system also uses seeds from a single base OT  (called ``BaseOOT'' below) to generate the input for the evaluation circuits. The cloud receives the seeds for each input bit selected by the evaluator.

\vspace{-4mm}

\equsize \begin{eqnarray*}
outSeeds = BaseOOT(evlInput, evlInputSeeds).  
\end{eqnarray*} \normalsize


\noindent
where $outSeeds$ are the seeds selected by the evaluator's input.

The generator sends the $IKey_i$ keys (from phase 1) to the evaluator for each circuit. The evaluator sends the keys for the evaluation circuits to the cloud. The cloud then uses these keys and the $outSeeds$ to attain the evaluator's input. For $0 \leq i < S$, for $0 \leq j < len(evlInputs)$ where $!isCheckCircuit(i)$,

\vspace{-4mm}

\equsize \begin{eqnarray*}
inputEvl_{ij} = hash(IKey_{i},outSeeds_{j})
\end{eqnarray*} \normalsize


\smallskip\noindent
{\bf Phase 3: Generator's Input Consistency Check}

\begin{algorithm}[h]

\scriptsize

\DontPrintSemicolon
 \SetKwData{Left}{left}\SetKwData{This}{this}\SetKwData{Up}{up}
  \SetKwFunction{algCutandChoose}{Cut\_and\_Choose}
\SetKwFunction{algEvaluatorInput}{Evaluator\_Input}
\SetKwFunction{algGeneratorInputCheck}{Generator\_Input\_Check}
\SetKwFunction{algPartialInput}{Partial\_Input}
\SetKwFunction{algCircuitExecution}{Circuit\_Execution}
\SetKwFunction{algCircuitOutput}{Circuit\_Output}
\SetKwFunction{algPartialOutput}{Partial\_Output}
  \SetKwInOut{Input}{Input}\SetKwInOut{Output}{Output}
\SetKwData{Left}{left}\SetKwData{This}{this}\SetKwData{Up}{up}
\Input{Generator\_Input}
\caption{Generator\_Input\_Check}

// The cloud takes a hash of the generator's input or each evaluation circuit
\For{$i\leftarrow 0$ \KwTo $S$}{
	\If{isCheckCircuit(i)}{
		$t_i \leftarrow UHF(garbledGenInput_i)$\;
	}
}

//If a single hash is different then the cloud knows the generator tried to cheat.
$correct \leftarrow ((t_0==t_1) \&(t_0 == t_2) \& \dots \& (t_0 == t_{N-1}))$\;
\If{!correct}{
	abort()\;
}
\end{algorithm}

We use the input consistency check of sS13. In this check, a universal hash is used to prove consistency of the generator's input across each evaluation circuit (attained in phase 1). Simply put, if any hash is different in any of the evaluation circuits, we know the generator did not enter consistent input. More formally, a hash of the generator's input is taken for each circuit.  For $0 < i < S$ where $!isCheckCircuit(i)$,

\vspace{-4mm}

\equsize \begin{eqnarray*}
t_i = UHF(garbledGenInput_i, C_i)
\end{eqnarray*} \normalsize

\vspace{-1mm}

\noindent
The results of these universal hashes are compared. If a single hash is different then the cloud knows the generator tried to cheat.

\vspace{-4mm}

\equsize \begin{eqnarray*}
correct =  ((t_0==t_1) \&(t_0 == t_2) \& \dots \& (t_0 == t_{N-1}))
\end{eqnarray*} \normalsize


\smallskip\noindent
{\bf Phase 4: Partial Input Gate Generation, Check, and Evaluation}

\begin{algorithm}[h]

\scriptsize

\DontPrintSemicolon
 \SetKwData{Left}{left}\SetKwData{This}{this}\SetKwData{Up}{up}
  \SetKwFunction{algCutandChoose}{Cut\_and\_Choose}
\SetKwFunction{algEvaluatorInput}{Evaluator\_Input}
\SetKwFunction{algGeneratorInputCheck}{Generator\_Input\_Check}
\SetKwFunction{algPartialInput}{Partial\_Input}
\SetKwFunction{algCircuitExecution}{Circuit\_Execution}
\SetKwFunction{algCircuitOutput}{Circuit\_Output}
\SetKwFunction{algPartialOutput}{Partial\_Output}
  \SetKwInOut{Input}{Input}\SetKwInOut{Output}{Output}
\SetKwData{Left}{left}\SetKwData{This}{this}\SetKwData{Up}{up}
\Input{Partial\_Output}
\Output{Partial\_Garbled\_Input}
\caption{Partial\_Input}

// Generation: the generator creates a {\it partial input gate}, which transforms a wire's saved values, $POut0_{i,j}$ and $POut1_{i,j}$, into values that can be used in the current garbled circuit execution, $GIn0_{i,j}$ and $GIn1_{i,j}$.\;

\For{$i\leftarrow 0$ \KwTo $S$}{
	$R_i \leftarrow PRNG.random()$\;
	\For{$j\leftarrow 0$ \KwTo $len(savedWires)$}{
		$t0 \leftarrow hash(POut0_{i,j} \oplus R_i)$\;
		$t1 \leftarrow hash(POut1_{i,j}\oplus R_i)$\;
		$PIn0_{i,j}, PIn1_{i,j} \leftarrow setPPBitGen(t0, t1)$\;

		$GIn0_{i,j} \leftarrow TT0_{i,j} \oplus PIn0_{i,j} $\;
		$GIn1_{i,j} \leftarrow TT1_{i,j} \oplus PIn1_{i,j} $\;

		GenSendToCloud( Permute([$TT0_{i,j},$ $TT1_{i,j}$]), permute\_bit\_locations )\;
	}
	GenSendToCloud($R_i$)\;
}

// Check: The cloud checks the gates to make sure the generator didn't cheat\;

\For{$i\leftarrow 0$ \KwTo $S$}{
	\If{isCheckCircuit(i)}{
		\For{$j\leftarrow 0$ \KwTo $len(savedWires)$}{
			// the cloud has received the truth table information, $TT0_{i,j}, TT1_{i,j}$, bit locations from $setPPBitGen$, and $R_i$\;

			$correct \leftarrow ( generateGateFromInfo() == receivedGateFromGen() )$\;

			// If any gate does not match, the cloud knows the generator tried to cheat.\;
			\If{!correct}{
				abort();\;
			}
		}
	}
}

// Evaluation\;

\For{$i\leftarrow 0$ \KwTo $S$}{
	\If{!isCheckCircuit(i)}{
		\For{$j\leftarrow 0$ \KwTo $len(savedWires)$}{

			//The cloud, using the previously saved $POutx_{i,j}$ value, and the location (point and permute) bit sent by the generator, creates $PInx_{i,j}$\;
			$PInx_{i,j} \leftarrow setPPBitEval( hash(R_i \oplus POutx_{i,j}) , location)$\;

			// Using $PInx_{i,j}$, the cloud selects the proper truth table entry $TTx_{i,j}$ from either $TT0_{i,j}$ or $TT1_{i,j}$ to decrypt\;

			// Creates $GInx_{i,j}$ to enter into the garbled circuit\;
			$GInx_{i,j} \leftarrow TTx_{i,j} \oplus POutx_{i,j}$\;
		}
	}
}

return GIn;
\end{algorithm}

\smallskip\noindent
{\it Generation}:

For $0 \leq i < S$, for $0 \leq j < len(savedWires)$ the generator creates a
{\it partial input gate}, which transforms a wire's saved values, $POut0_{i,j}$
and $POut1_{i,j}$, into values that can be used in the current garbled circuit
execution, $GIn0_{i,j}$ and $GIn1_{i,j}$.

For each circuit $0 \leq i < S$, the
generator creates a pseudorandom transformation value $R_i = \{0,1\}^K$, to assist with the
transformation. 




For each set of $POut0_{i,j}$ and $POut1_{i,j}$, the generator XORs each value
with $R_i$. Both results are then hashed, and put through a function to
determine the new permutation bit, as hashing removes the old permutation bit.

\vspace{-4mm}

\equsize \begin{eqnarray*}
t0 = hash(POut0_{i,j} \oplus R_i) \\ t1 = hash(POut1_{i,j}\oplus R_i)\\
PIn0_{i,j}, PIn1_{i,j} =setPPBitGen(t0, t1)
\end{eqnarray*} \normalsize


\noindent
This function, {\it setPPBitGen}, pseudo-randomly finds a bit that is different
between the two values of the wire and notes that bit to be the permutation bit.
{\it setPPBitGen} is seeded from $CSeed_i$, allowing the cloud to regenerate these values for the check
circuits.

For each $PIn0_{i,j}, PIn1_{i,j}$ pair, a set of values, $GIn0_{i,j}$ and $GIn1_{i,j}$, are created under the master key of $C_i$ -- where $CKey_i$ is the difference between 0 and 1 wire labels for the circuit. In classic garbled gate style, two truth table values, $TT0_{i,j}$ and $TT1_{i,j}$, are created such that:

\vspace{-4mm}

\equsize \begin{eqnarray*}
\quad\quad TT0_{i,j} \oplus PIn0_{i,j}  = GIn0_{i,j} \\ TT1_{i,j} \oplus PIn1_{i,j}  = GIn1_{i,j}
\end{eqnarray*} \normalsize


\noindent
The truth table, $TT0_{i,j}$ and $TT1_{i,j}$, is permuted so that the permutation bits of $PIn0_{i,j}$ and $PIn1_{i,j} $ tell the cloud which entry to select.  Each {\it partial input gate}, consisting of the permuted  $TT0_{i,j},$ $TT1_{i,j}$ values, the bit location from $setPPBitGen$, and each $R_i$, is sent to the cloud.

\smallskip\noindent
{\it Check}:

For all the check circuits, ({\it i.e.}, $\forall i : 0 \leq i < S$ where $isCheckCircuit(i)$ is true), for $0 \leq j < len(savedWires)$, the cloud receives the truth table information, $TT0_{i,j}, TT1_{i,j}$, and bit location from $setPPBitGen$, and proceeds to regenerate the gates based on the check circuit information. The cloud uses $R_i$ (sent by the generator), $POut0_{i,j}$ and  $POut1_{i,j}$ (saved during the previous execution), and $CSeed_i$ (recovered during the cut-and-choose) to generate the {\it partial input gates} in the same manner as described previously. The cloud then compares these gates to those the generator sent. If any gate does not match, the cloud knows the generator tried to cheat.

\smallskip\noindent
{\it Evaluation}:

For $0 \leq i < S$ where $!isCheckCircuit(i)$, for $0 \leq j < len(savedWires)$ the cloud receives the truth table information, $TTa_{i,j}, TTb_{i,j}$ and bit location from $setPPBitGen$. $a$ and $b$ are used to denote the two permuted truth table values. The cloud, using the previously saved $POutx_{i,j}$ value, creates the $PInx_{i,j}$ value

\vspace{-4mm}

\equsize \begin{eqnarray*}
\quad PInx_{i,j} = setPPBitEval( hash(R_i \oplus POutx_{i,j}) , location)
\end{eqnarray*} \normalsize


\noindent
 where $location$ is the location of the point and permute bit sent by the generator. Using the point and permute bit of $PInx_{i,j}$, the cloud selects the proper truth table entry $TTx_{i,j}$ from either $TTa_{i,j}$ or $TTb_{i,j}$ to decrypt, creates $GInx_{i,j}$ and then enters $GInx_{i,j}$ into the garbled circuit.

\vspace{-4mm}

\equsize \begin{eqnarray*}
GInx_{i,j} = TTx_{i,j} \oplus POutx_{i,j}
\end{eqnarray*} \normalsize


\smallskip\noindent
{\bf Phase 5: Circuit Generation and Evaluation}

\begin{algorithm}[h]

\scriptsize

\DontPrintSemicolon
 \SetKwData{Left}{left}\SetKwData{This}{this}\SetKwData{Up}{up}
  \SetKwFunction{algCutandChoose}{Cut\_and\_Choose}
\SetKwFunction{algEvaluatorInput}{Evaluator\_Input}
\SetKwFunction{algGeneratorInputCheck}{Generator\_Input\_Check}
\SetKwFunction{algPartialInput}{Partial\_Input}
\SetKwFunction{algCircuitExecution}{Circuit\_Execution}
\SetKwFunction{algCircuitOutput}{Circuit\_Output}
\SetKwFunction{algPartialOutput}{Partial\_Output}
  \SetKwInOut{Input}{Input}\SetKwInOut{Output}{Output}
\SetKwData{Left}{left}\SetKwData{This}{this}\SetKwData{Up}{up}
\Input{Generator\_Input, Evaluator\_Input, Partial\_Input}
\Output{Partial\_Output, Garbled\_Output}
\caption{Circuit\_Execution}
// The generator generates each garbled gate and sends it to the cloud. Depending on whether the circuit is a check or evaluation circuit, the cloud verifies that the gate is correct or evaluates the gate.\;

\For{$i\leftarrow 0$ \KwTo $S$}{
\For{$j\leftarrow 0$ \KwTo len(circuit)}{
		$g \leftarrow genGate(C_i,j)$\;
		$send(g)$\;
}
}

// the cloud receives all gates for all circuits, and then checks OR evaluates each circuit\;
\For{$i\leftarrow 0$ \KwTo $S$}{
\For{$j\leftarrow 0$ \KwTo len(circuit)}{
		$g \leftarrow recvGate()$\;
		\eIf{isCheckCircuit(i)}{
			\If{ ! $verifyCorrect(g)$ }{
				abort()\;
			}
		}{
			$eval(g)$\;
		}
}
}
return Partial\_Output, Garbled\_Output\;
\end{algorithm}


\noindent{\it Circuit Generation}:

The generator generates every garbled gate for each circuit and sends them to the cloud. Since the generator does not know the check and evaluation circuit split, nothing changes between the generation for check and evaluation circuits. For $0 \leq  i < S,$ For $0 \leq j <  len(circuit)$,

\vspace{-4mm}

\equsize
\begin{eqnarray*}
g = garbleGate(C_i,j)\\
send(g)
\end{eqnarray*}
\normalsize


\noindent{\it Circuit Evaluation and Check}:

The cloud receives garbled gates for all circuits. For evaluation circuits the cloud evaluates those garbled gates. For check circuits the cloud generates the correct gate, based on the circuit seed, and  is able to verify it is correct. For $0 \leq i < S,$ For $0 \leq j <  len(circuit)$, 


\vspace{-4mm}

\equsize
\begin{eqnarray*}
g = recvGate()\\
if(isCheckCircuit(i)) \quad  verifyCorrect(g)\\
else \quad eval(g)
\end{eqnarray*}
\normalsize

If a garbled gate is found not to be correct, the cloud informs the evaluator and generator of the incorrect gate and safely aborts. 

\smallskip\noindent
{\bf Phase 6: Output and Output Consistency Check}

\begin{algorithm}[h]

\scriptsize

\DontPrintSemicolon
 \SetKwData{Left}{left}\SetKwData{This}{this}\SetKwData{Up}{up}
  \SetKwFunction{algCutandChoose}{Cut\_and\_Choose}
\SetKwFunction{algEvaluatorInput}{Evaluator\_Input}
\SetKwFunction{algGeneratorInputCheck}{Generator\_Input\_Check}
\SetKwFunction{algPartialInput}{Partial\_Input}
\SetKwFunction{algCircuitExecution}{Circuit\_Execution}
\SetKwFunction{algCircuitOutput}{Circuit\_Output}
\SetKwFunction{algPartialOutput}{Partial\_Output}
  \SetKwInOut{Input}{Input}\SetKwInOut{Output}{Output}
\SetKwData{Left}{left}\SetKwData{This}{this}\SetKwData{Up}{up}
\Input{Garbled\_Output}
\caption{Circuit\_Output}

// a MAC of the output is generated inside the garbled circuit, and both the resulting garbled circuit output and the MAC are encrypted under a one-time pad.\;
$outEvlComplete = outEvl || MAC(outEvl)$\;
$result = ( outEvlMAC == MAC(outEvl) ) $\;

\If{!result}{
	abort() // output check fail\;
}
\end{algorithm}

As the final step of the garbled circuit execution, a MAC of the output is
generated inside the garbled circuit, based on a $k$-bit secret key entered into
the function.

\vspace{-4mm}

\equsize
\begin{eqnarray*}
outEvlComplete = outEvl || MAC(outEvl)
\end{eqnarray*}
\normalsize


\noindent
Both the resulting
garbled circuit output and the MAC are encrypted under the one-time pad (from phase 1 before) leaving the garbled circuit.

To receive output from the garbled circuit for any particular output bit $x$, a majority vote is taken across all evaluation circuits. For $0 \leq i < S$ where $!isCheckCircuit(i)$,

\vspace{-4mm}

\equsize
\begin{eqnarray*}
result = majority(COut_{0,x} \dots COut_{i-1,x} )
\end{eqnarray*}
\normalsize

Where $COut_{i,j}$ is the output bits, $i$ is the $i$th circuit and $j$ is the $j$th output bit from circuit $i$.

The cloud sends the
corresponding encrypted (under the one-time pad introduced in phase 1) output to each party.

The generator and evaluator then decrypt the received ciphertext by using
their one-time pad keys and
perform a MAC over real output to verify the cloud did not modify the
output by comparing the generated MAC with the MAC calculated within the garbled
circuit. 

\vspace{-4mm}

\equsize \begin{eqnarray*}
result = (outEvlMAC == MAC(outEvl))
\end{eqnarray*} \normalsize

Both parties, the generator and evaluator, now have their output.


\smallskip\noindent
{\bf Phase 7: Partial Output}

\begin{algorithm}[h]

\scriptsize

\DontPrintSemicolon
 \SetKwData{Left}{left}\SetKwData{This}{this}\SetKwData{Up}{up}
  \SetKwFunction{algCutandChoose}{Cut\_and\_Choose}
\SetKwFunction{algEvaluatorInput}{Evaluator\_Input}
\SetKwFunction{algGeneratorInputCheck}{Generator\_Input\_Check}
\SetKwFunction{algPartialInput}{Partial\_Input}
\SetKwFunction{algCircuitExecution}{Circuit\_Execution}
\SetKwFunction{algCircuitOutput}{Circuit\_Output}
\SetKwFunction{algPartialOutput}{Partial\_Output}
  \SetKwInOut{Input}{Input}\SetKwInOut{Output}{Output}
\SetKwData{Left}{left}\SetKwData{This}{this}\SetKwData{Up}{up}
\Input{Partial\_Output}
\caption{Partial\_Output}

\For{$i\leftarrow 0$ \KwTo S}{
\For{$j\leftarrow 0$ \KwTo len(Partial\_Output)}{
	//The generator saves both possible wire values \;
	GenSave($Partial\_Output0_{i,j}$)\;
	GenSave($Partial\_Output1_{i,j}$)\;
}
}

\For{$i\leftarrow 0$ \KwTo S}{
\For{$j\leftarrow 0$ \KwTo len(Partial\_Output)}{
	\eIf{ isCheckCircuit(i) } {
	EvlSave($Partial\_Output0_{i,j}$)\;
	EvlSave($Partial\_Output1_{i,j}$)\;
	}
	{ // circuit is evaluation circuit
		EvlSave($Partial\_OutputX_{i,j}$)\;
	}
}
}
\end{algorithm}

The generator saves both possible wire values for each partial output wire. For each evaluation circuit the cloud saves the partial output wire value. For check circuits the cloud saves both possible output values.

\subsection{Implementation}
\label{sec:design}

As with most garbled circuit systems there are two stages to our
implementation. The first stage is a compiler for creating garbled
circuits, while the second stage is an execution system to evaluate the
circuits.

We modified the compiler from Kreuter et al.~\cite{Kreuter2012} (hereon KSS12 compiler)   to allow for the saving of intermediate wire labels and loading wire  labels from a different SFE computation. By using the KSS12 compiler, 
we have an added benefit of being able to compare
circuits of almost identical size and functionality between our system and
CMTB, whereas other protocols compare circuits of sometimes vastly different sizes.

For our execution system, we started with the CMTB system and modified it according to our protocol requirements.
PartialGC automatically performs the output consistency check, and we
implemented this check at the circuit level.
We became aware and corrected issues with CMTB relating to too many primitive OT operations ($S*inputs$ instead $inputs$) performed in the outsourced oblivious transfer when using a high circuit parameter and too low a general security parameter ($log_2(input)$ instead of 80). The fixes reduced the run-time of the OOT, though the exact amount varied.

\section{Security of PartialGC}
\label{sec:sec}

In this section, we provide a proof of the PartialGC protocol, showing that our protocol preserves the standard security guarantees provided by traditional garbled circuits - that is, none of the parties learns anything about the private inputs of the other parties that is not logically implied by the output it receives. Section~\ref{sec:sec-sketch} provides a high-level overview of the proof. Section~\ref{sec:sec-model} goes over models and definitions, followed by security guarantees in Section~\ref{sec:sec-g} and a full proof in Section~\ref{sec:sec-full}.

\subsection{Proof Sketch}\label{sec:sec-sketch}

We know that the protocol described in CMTB allows us to garble individual circuits and securely outsource their evaluation. In this paper, we modify certain portions of the protocol to allow us to transform the output wire values from a previous circuit execution into input wire values in a new circuit execution. These transformed values, which can be checked by the evaluator, are created by the generator using circuit ``seeds.''

We also use some aspects of sS13, notably their novel cut-and-choose technique which ensures that the generator does not learn which circuits are used for evaluation and which are used for checking - this means that the generator must create the correct transformation values for all of the cut-and-choose circuits.

Because we assume that the CMTB garbled circuit scheme can securely garble any circuit, we can use it individually on the circuit used in the first execution and on the circuits used in subsequent executions. We focus on the changes made at the end of the first execution and the beginning of subsequent executions which are introduced by PartialGC.

The only difference between the initial garbled circuit execution and any other garbled circuit in CMTB is that the output wires in an initial PartialGC circuit are stored by the cloud, and are not delivered to the generator or the evaluator. This prevents them from learning the output wire labels of the initial circuit, but cannot be less secure than CMTB, since no additional steps are taken here.

Subsequent circuits we wish to garble differ from ordinary CMTB garbled circuits only by the addition, before the first row of gates, of a set of partial input gates. These gates don't change the output along a wire, but differ from normal garbled gates in that the two possible labels for each input wire are not chosen randomly by the generator, but are derived by using the two labels along each output wire of the initial garbled circuit.

This does not reduce security. In PartialGC, the input labels for partial input gates have the same property as the labels for ordinary garbled input gates: the generator knows both labels, but does not know which one corresponds to the evaluator's input, and the evaluator knows only the label corresponding to its input, but not the other label. This is because the evaluator's input is exactly the output of the initial garbled circuit, the output labels of which were saved by the evaluator. The evaluator does not learn the other output label for any of the output gates because the output of each garbled gate is encrypted. If the evaluator could learn any output labels other than those which result from an evaluation of the garbled circuit, the original garbled circuit scheme itself would not be secure.

The generator, which also generated the initial garbled circuit, knows both possible input labels for all partial evaluation gates, because it has saved both potential output labels of the initial circuit's output gates. Because of the outsourced oblivious transfer used in CMTB, the generator did not know which input labels to use for the initial garbled circuit, and therefore will not have been able to determine the output labels for that circuit. Therefore, the generator will likewise not know which input labels are being used for subsequent garbled circuits.

\subsection{Model and definitions}\label{sec:sec-model}

Throughout our protocol, we assume that none of the parties involved ever collude with the cloud. It is known that theoretical limitations exist when considering collusion in secure multiparty computation, and other schemes considering secure computation with multiple parties require similar restrictions on who and how many parties may collude while preserving security. If an outsourcing protocol is secure when both the generator and the cloud are malicious and colluding, this implies a secure two-party scheme where one party has sub-linear work with respect to the size of the circuit, which is currently only possible with fully homomorphic encryption~\cite{Kamara2012}. However, making the assumption that the cloud will not collude with the participating parties makes outsourcing securely a theoretical possibility.

While it is unlikely that a reputable cloud provider would allow external parties to illegally control or modify computations within their systems, we cannot assume the cloud will automatically be semi-honest. For example, our protocol requires a number of consistency checks to be performed by the cloud that ensure the participants do not cheat. Without mechanisms to force the cloud to make these checks, a ``lazy'' cloud provider could save resources by simply returning that all checks verified without actually performing them.

The work of Kamara et al.~\cite{Kamara2012} formalizes the idea of a non-colluding cloud based on the ideal-model/real-model security definitions common in secure multiparty computation. We apply their definitions to our protocol (for the two-party case in particular) as described below.

\smallskip
\noindent
{\bf Real-model execution.} The protocol takes place between two parties $(P_1,P_2)$ executing the protocol and a server $P_3$, where each of the executing parties provides input $x_i$, auxiliary input $z_i$, and random coins $r_i$. The server provides only auxiliary input $z_3$ and random coins $r_3$. There exists some subset of independent parties $(A_1,..A_m),m \leq 3$ that are malicious adversaries. Each adversary corrupts one executing party and does not share information with other adversaries. For all honest parties, let $OUT_i$ be its output, and for corrupted parties let $OUT_i$ be its view of the protocol execution. The $i^{th}$ partial output of a real execution is defined as $REAL^{(i)}(k,x;r)= \lbrace OUT_j : j \in H \rbrace \cup OUT_i$, where $H$ is the set of honest parties and $r$ is all random coins of all players.

\smallskip
\noindent
{\bf Ideal-model execution.} In the ideal model, the setup of participants is the same except that all parties are interacting with a trusted party that evaluates the function. All parties provide inputs $x_i$, auxiliary input $z_i$, and random coins $r_i$. If a party is semi-honest, it provides its actual inputs to the trusted party, while if the party is malicious or non-colluding, it provides arbitrary input values. In the case of the server $P_3$, this means simply providing its auxiliary input and random coins, as no input is provided to the function being evaluated. Once the function is evaluated by the trusted third party, it returns the result to the parties $P_1$ and $P_2$, while the server $P_3$ does not receive the output. If a party aborts early or sends no input, the trusted party immediately aborts. For all honest parties, let $OUT_i$ be its output to the trusted party, and for corrupted parties let $OUT_i$ be some value output by $P_i$. The $i^{th}$ partial output of an ideal execution in the presence of some set of independent simulators is defined as $IDEAL^{(i)}(k,x;r)= \lbrace OUT_j : j \in H \rbrace \cup OUT_i$ where H is the set of honest parties and r is all random coins of all players.

\begin{define}
A protocol securely computes a function f if there exists a set of probabilistic polynomial-time (PPT) simulators $\lbrace Sim_i \rbrace_{i \in [3]}$ such that for all PPT adversaries $(A_1, \dots , A_3), x, z$, and for all $i \in [3]$, we have 

$ \lbrace REAL^{(i)} (k, x; r)\rbrace_{k \in N} \approx \lbrace IDEAL^{(i)}(k, x; r) \rbrace_{k \in N}$

\noindent
Where $S = (S_1, \dots , S_3), S_i = Sim_i(A_i)$, and $r$ is random and uniform.
\end{define}

\subsection{Security Guarantees}\label{sec:sec-g}

\noindent
{\bf Generator's Input Consistency Check}

During the cut-and-choose, multiple copies of the garbled circuit are constructed and then either checked or evaluated. A malicious generator may provide inconsistent inputs to different evaluation circuits. For some functions, it is possible to use inconsistent inputs to extract information of Eval's input~\cite{Lindell2007}.

\begin{claim}
The generator in our protocol cannot trick the evaluator into using different inputs for different evaluation circuits with greater than negligible probability.
\end{claim}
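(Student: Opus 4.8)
The plan is to reduce the claim to the already-proven soundness of the sS13 input-consistency check, and then verify that the three-party structure of PartialGC preserves the two preconditions on which that soundness rests. First I would fix the threat model: by the non-collusion assumption, when the generator is the malicious party the cloud is semi-honest, so the cloud faithfully computes the universal-hash values $t_i = UHF(garbledGenInput_i, C_i)$ for the evaluation circuits and performs the comparison $correct = ((t_0 = t_1)\,\&\,\cdots\,\&\,(t_0 = t_{N-1}))$, aborting whenever the test fails. It therefore suffices to show that a malicious generator who encodes inconsistent inputs into two evaluation circuits is caught by this comparison except with negligible probability.

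Next I would make ``inconsistent'' precise and connect it to the hash. Call the generator's behavior consistent if the logical input encoded by $garbledGenInput_i$ is identical across all evaluation circuits. The key tool is the (almost-)universality of the $UHF$ family: for any two distinct preimages, the collision probability over the choice of hash key is at most $2^{-K}$. Hence if the encoded inputs differ between evaluation circuits $i$ and $i'$, then $t_i \neq t_{i'}$ except with probability $\le 2^{-K}$, so the equality test exposes the discrepancy and the cloud aborts. A union bound over the at most $\binom{N}{2}$ pairs of evaluation circuits keeps the probability of \emph{undetected} inconsistency negligible in $K$.

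The precondition that does the real work is that the generator must be bound to his garbled inputs for all $S$ circuits \emph{before}, and independently of, learning the check/evaluation split; otherwise he could tailor inconsistencies to land only in circuits he knows will never be compared. Here I would invoke the Phase~1 cut-and-choose: the cloud, not the generator, selects the split via the cut-and-choose oblivious transfer, and conclusion~(4) of Phase~1 states that the generator does not learn the split. Together with the commitments the generator makes to his input values, this fixes his inputs across all circuits before the split is determined, so no adaptive placement of inconsistencies is possible.

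The main obstacle, and the subtle step to get right, is relating inconsistency at the level of garbled wire labels---which live under distinct circuit keys $CKey_i$ and distinct per-circuit randomness---to inconsistency at the level of the logical input, and arguing that the generator cannot exploit his freedom in choosing garbled representations to force a hash collision. This is exactly what the combination of binding commitments and near-universality of $UHF$ rules out, and is handled by the sS13 analysis; my remaining task would be to check that none of the PartialGC modifications (hashing saved keys across successive computations, and the cloud performing the check in the evaluator's stead) weakens this binding. That follows because those modifications touch only the partial-input machinery and the output stage, leaving the generator's input encoding and the universal-hash test untouched.
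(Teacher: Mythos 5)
Your proposal is correct and takes essentially the same route as the paper: both arguments reduce the claim to the soundness of the sS13 input-consistency check and both invoke the non-collusion assumption to justify letting the cloud, rather than the evaluator, perform the check. The only difference is one of detail---you unpack what the paper merely cites (the universal-hash collision bound, the union bound over evaluation-circuit pairs, and the binding of the generator's inputs independently of the cut-and-choose split established in Phase~1), whereas the paper defers all of this to the sS13 proof in a single sentence.
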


We use the generator's input consistency check from \cite{shelat13}, and defer to the proof provided in that paper, noting that simulators $S_1$ and $S_2$ can be constructed such that any malicious generator (resp. evaluator) cannot tell whether it is working with $S_1$ (resp. $S_2$) in the ideal model, or with an honest evaluator (resp. generator) in the real model.

We further note there is no problem with allowing the cloud to perform this check; for the generator's inconsistent input to pass the check, the cloud would have to see the malicious input and ignore it, which would violate the non-collusion assumption.

\noindent
{\bf Validity of Evaluator Inputs}

To ensure that the generator cannot learn anything about the evaluator's inputs by corrupting the garbled values sent during the OT, we use from CMTB the random input encoding technique by Lindell and Pinkas~\cite{Lindell2007}. This technique allows the evaluator to encode each input bit as the XOR of a set of input bits. Thus, if the generator corrupts one of those input bits as in a selective failure attack, it reveals nothing about the evaluator's true input. Additionally, we use the commitment technique employed by Kreuter et al.~\cite{Kreuter2012} to ensure that the generator cannot swap garbled input wire labels between the zero and one value. To accomplish this, the generator commits to the wire labels before the cut and choose. During the cut and choose, the input labels for the check circuits are opened to ensure that they correspond to only one value across all circuits. Then, during the OOT, the commitment keys for the labels that will be evaluated are sent instead of the wire labels themselves. Because our protocol implements this technique directly from previous work, we do not make any additional claims of security.

\noindent
{\bf Correctness of Saved Values}

Scenarios where either party enters incorrect values in the next computation reduce to previously solved problems in garbled circuits. If the generator does not use the correct values, then it reduces to the problem of creating an incorrect garbled circuit. If the evaluator does not use the correct saved values then it reduces to the problem of the evaluator entering garbage values into the garbled circuit execution; this would be caught by the output consistency check. 

\noindent
{\bf Garbled Circuit Generation}

To ensure the evaluated circuits are generated honestly, we require two properties. First, we limit the generator's ability to trick the evaluator into evaluating a corrupted circuit using a cut-and-choose technique similar to a typical, two-party garbled circuit evaluation. Second, we ensure that a lazy Cloud attempting to conserve system resources cannot bypass the circuit checking step without being discovered.

\begin{claim}
Security: Assuming that the hash function $UHF( x )$ (as used in phase 3) is a one-way, collision-resistant hash and that the commitment scheme used is fully binding, then the generator has at best a $2^{-k}$ probability of tricking the evaluator into evaluating a majority of corrupted circuits, where $k$ is the number of circuits generated.
\end{claim}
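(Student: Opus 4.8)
The plan is to show that for the generator to make a majority of the evaluation circuits corrupted, it must simultaneously (i) avoid having any corrupted circuit land in the secret check set and (ii) place a strict majority of its corrupted circuits among the $N = \frac{2}{5}S$ evaluation circuits, and then to bound the probability of this joint event combinatorially following sS13. First I would fix the adversarial strategy: before the split is revealed to it, the malicious generator decides on some set of $b$ circuits to build incorrectly, where ``incorrectly'' covers corrupting either an ordinary garbled gate of the circuit body (Phase~5) or one of the partial input gates (Phase~4). Because every output bit is decided by a majority vote over the evaluation circuits (Phase~6), a corrupted output can only result when more than half of the $N$ evaluation circuits are corrupted, so the generator must arrange for $b > N/2$ of its corrupted circuits to all fall in the evaluation set.

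Next I would argue that any corrupted circuit assigned to the check set is detected essentially with certainty. For a check circuit, the cloud recovers $CSeed_i$ during the cut-and-choose and deterministically regenerates every garbled gate and every partial input gate (using $CSeed_i$, $R_i$, and the saved $POut$ values), then compares them against what the generator actually sent, aborting on any mismatch; the non-collusion assumption is what guarantees the cloud really performs this check rather than lazily skipping it. Here the two stated assumptions enter: fully-binding commitments force the generator to fix its seed- and label-defining values before the split is chosen, so it cannot equivocate by presenting one circuit for checking and a different one for evaluation; and collision-resistance of the hash (the same primitive used as $UHF$ in Phase~3) guarantees that the generator cannot craft a corrupted gate whose hashed value matches that of the honestly regenerated gate. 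Hence a corrupted check circuit escapes detection only with negligible probability.

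I would then establish that the generator gains no information about which circuits are check versus evaluation. This follows from the sS13 cut-and-choose instantiated over the outsourced OT of Phase~1: the cloud, not the generator, selects the circuit partition, and the generator receives only encrypted inputs and seeds, so it could learn the split only by breaking the OT or by colluding with the cloud, both of which are excluded (the latter by the non-collusion assumption of Section~\ref{sec:background:noncollusion}). Consequently, from the generator's view the assignment of its $b$ corrupted circuits is uniform over the partition, and the event that all $b$ avoid the check set while forming a majority of the evaluation set is exactly the combinatorial event analyzed in sS13. Maximizing over the choice of $b$ under the split $N = \frac{2}{5}S$ then yields the bound claimed above, $2^{-k}$ in the number of circuits $k = S$, whose precise counting I would import from the cut-and-choose analysis of sS13.

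The main obstacle I anticipate is not the headline combinatorics, which can be imported, but verifying that the added partial input gates do not open a cheating avenue that escapes the check. I must confirm that the Phase~4 check regenerates each partial input gate from exactly the committed randomness ($CSeed_i$, $R_i$) that the evaluator relies on during Phase~4 evaluation, so that corrupting a partial input gate is caught on precisely the same circuits, and with the same probability, as corrupting an ordinary gate; only then may ``corrupted circuit'' be treated uniformly regardless of where the corruption sits, letting the single combinatorial bound cover both gate types. Establishing this equivalence, together with checking that the three-party split-obliviousness argument is genuinely inherited from the two-party OT guarantee of sS13 rather than merely assumed, is the step that will require the most care.
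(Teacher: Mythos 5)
Your proposal is correct and follows essentially the same route as the paper: the paper's own proof is a two-sentence deferral to the sS13 cut-and-choose analysis plus the observation that fooling the check requires a hash collision (probability at most $1/\left\vert B\right\vert$), which is exactly the argument you reconstruct in detail -- generator obliviousness to the cloud-chosen split, certain detection of corrupted check circuits via seed-based regeneration, binding commitments preventing equivocation, and the imported combinatorial bound for the $N=\frac{2}{5}S$ split. The extra care you flag for the partial input gates is sound and is in fact already built into the protocol (the Phase~4 check regenerates those gates from $CSeed_i$, $R_i$, and the saved $POut$ values, with $setPPBitGen$ seeded by $CSeed_i$), so corruption there is caught by the same mechanism, as you anticipated.
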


This claim follows directly from sS13. The probability of the generator finding a hash collision and thus fooling the evaluator is at most $1/\left\vert B\right\vert$, where $B$ is the range of the hash function.

\begin{claim}
Proof-of-work: Assuming the hash function is one-way and collision resistant, the Cloud has a negligible probability of producing a check hash that passes the seed check without actually generating the check circuit.
\end{claim}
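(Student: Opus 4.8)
The plan is to treat this as a genuine proof-of-work statement rather than a key-possession statement: even a cloud that already holds $CSeed_i$ (and is therefore \emph{able} to regenerate the check circuit) must \emph{exercise} that ability -- run the full generation -- in order to produce a passing check hash. First I would fix the object being hashed. For a check circuit $C_i$, let $\widehat{C_i}$ be the garbled check circuit obtained by deterministically expanding $CSeed_i$ through the generation procedure, and let the \emph{check hash} be $h_i = hash(\widehat{C_i})$. The \emph{seed check} accepts only when the value the cloud reports equals the independently-pinned target $h_i$, so the cloud cannot steer the target. The claim then becomes: any cloud strategy that outputs $h_i$ with non-negligible probability must have computed $\widehat{C_i}$ as an intermediate value, which is exactly ``generating the check circuit.''

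The argument proceeds by contraposition. Suppose a lazy cloud $L$ outputs an accepting check hash with probability $\epsilon$ without ever computing $\widehat{C_i}$. Working in the random-oracle model for $hash$ (the paper assumes only one-wayness and collision resistance; the random oracle gives the cleanest query-counting form and can be weakened afterwards), a value equal to $h_i = hash(\widehat{C_i})$ can arise in only three ways: (a) $L$ queried $hash$ on the true preimage $\widehat{C_i}$, but presenting $\widehat{C_i}$ as the query argument \emph{is} the generation step, contradicting the hypothesis; (b) $L$ produced a distinct $\widehat{C_i}'$ with $hash(\widehat{C_i}') = h_i$, breaking collision resistance and hence negligible; or (c) $L$ guessed the output directly, succeeding with probability at most $1/\lvert B\rvert$ per circuit, where $B$ is the range of $hash$. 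By a union bound over the $S - N$ check circuits, the probability that $L$ skips generation on \emph{any} check circuit without being caught is at most $(S-N)/\lvert B\rvert$ plus the collision-finding advantage, which is negligible in the security parameter.

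To make ``$L$ never computed $\widehat{C_i}$'' precise I would phrase the result as an extraction statement in the random-oracle model: from the list of oracle queries of any cloud that passes the seed check with non-negligible probability, one can read off a query equal to $\widehat{C_i}$ with overwhelming probability, and emitting that query is, by definition, performing the generation. This converts ``doing the work'' into a counting question about oracle queries and avoids reasoning about plain-model internal state. I expect the main obstacle to be exactly this modeling step, since -- unlike the companion Security claim -- this is not a real/ideal indistinguishability argument but a proof-of-work argument: the burden is to define ``generating the check circuit'' as the well-specified deterministic expansion of $CSeed_i$ into its garbled gates, and to argue that $\widehat{C_i}$ is the \emph{unique} feasible preimage of $h_i$, so that knowledge of a valid check hash is equivalent to knowledge of the generated circuit. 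Once that equivalence is established, the negligible bound follows immediately from one-wayness and collision resistance, mirroring the $1/\lvert B\rvert$ analysis already used for the preceding claim.
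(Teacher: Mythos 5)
Your proposal is correct and its core mechanism is the same as the paper's: the target hashes are pinned in advance (the generator sends the hashed circuit values $H_1(GC_i)$ to the evaluator before the check begins), so a lazy cloud must independently produce $H'_i = H_i$ for every $i \in Chk$, and hash security makes this negligible. Where you genuinely diverge is in rigor and model: the paper's proof is two sentences that appeal to ``security guarantees of the hash'' plus non-collusion, whereas you correctly observe that one-wayness and collision resistance alone do not formally capture a proof-of-work statement (they bound finding preimages or collisions, not the cost of emitting a correct hash value), and you repair this by moving to the random-oracle model with an explicit three-way case analysis -- preimage query (which \emph{is} the generation work, since the query argument is the seed expansion $\widehat{C_i}$), collision (negligible), or blind guess (at most $1/\lvert B\rvert$ per circuit, matching the paper's bound for the companion Security claim) -- plus a union bound over the $S-N$ check circuits and an extraction statement that makes ``actually generating the check circuit'' precise as the appearance of $\widehat{C_i}$ among the oracle queries. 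What the paper's route buys is brevity and staying within its stated assumptions, at the cost of leaving ``without actually generating'' undefined; your route buys a well-posed theorem at the cost of the stronger RO idealization. One small point worth making explicit: the paper invokes non-collusion because the cheapest way for the cloud to obtain $H_i$ is to receive it from the generator; your phrase ``independently-pinned target'' silently assumes this, and under the paper's threat model (malicious cloud implies semi-honest generator) that assumption is justified, but you should state it rather than leave it implicit.
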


As previously stated, before the circuit check begins the generator sends the evaluator $k$ hashed circuit values $H_1(GC_i)$. Once the evaluation circuits are selected, the cloud must generate some circuits and hash them into check hashes $H_1(GC'_i)$. If the cloud attempts to skip the generation of the check circuits, it must generate hash values $H'_i = H_i$ for $i \in Chk$. Based on security guarantees of the hash, and the non-collusion property, the cloud has a negligible probability of correctly generating these hash values.

\noindent
{\bf Abort on Check Failure}

If any of the check circuits fail, the cloud reports the incorrect check circuit to both the generator and evaluator. At this point, the remaining computation and any saved values must be abandoned. However, as is standard in SFE, the cloud cannot abort on an incorrect evaluation circuit even when it is known to be incorrect.

\noindent
{\bf Concatenation of Incorrect Circuits}

If the generator produces a single incorrect circuit and the cloud does not abort, the generator learns that the circuit was used for evaluation, and not as a check circuit. This leaks no information about the input or output of the computation; to do that, the generator must corrupt a majority of the evaluation circuits without modifying a check circuit. An incorrect circuit that goes undetected in one execution has no effect on subsequent executions as long the total amount of incorrect circuits is less than the majority of evaluation circuits.

\noindent
{\bf Using Multiple Evaluators}

One of the benefits of our outsourcing scheme is that the state is saved at the generator and cloud allowing the use of different evaluators in each computation. Previously, it was shown a group of users working with a single server using 2P-SFE was not secure against malicious adversaries, as a malicious server and last $k$ parties, also malicious, could replay their portion of the computation with different inputs and gain more information than they can with a single computation~\cite{Halevi11}. However, this is not a problem in our system as at least one of our servers, either the generator or cloud, must be semi-honest due to non-collusion, which obviates the attack stated above.

\noindent
{\bf Threat Model}

As we have many computations involving the same generator and cloud, we have to extend the threat model for how the parties can act in different computations. There can be no collusion in each singular computation. However, the malicious party can change between computations as long as there is no chain of malicious users that link the generator and cloud -- this would break the non-collusion assumption.

\subsection{Proof}\label{sec:sec-full}

We formally prove the security of our protocol with the following theorem, which gives security guarantees identical to that of CMTB and the protocol by Kamara et al.~\cite{Kamara2012}.

\begin{theorem}
The outsourced two-party SFE protocol securely computes a function $f(a,b)$ in the following two corruption scenarios: (1)The Cloud is malicious and non-cooperative with respect to the rest of the parties, while all other parties are semi-honest, and (2)All but one party is malicious, while the Cloud is semi-honest.
\end{theorem}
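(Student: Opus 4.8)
The plan is to exhibit, for each of the two corruption scenarios, a set of PPT simulators $\{Sim_1, Sim_2, Sim_3\}$ satisfying Definition~1, and to argue $REAL^{(i)} \approx IDEAL^{(i)}$ by a hybrid argument that peels off one sub-protocol at a time. Since the protocol is assembled from CMTB and sS13 primitives whose simulators already exist, I would treat the outsourced oblivious transfer (Phase 2), the generator's input-consistency check (Phase 3), and the cut-and-choose (Phase 1) as secure sub-protocols invoked as black boxes, importing their simulators directly. This reduces the theorem to simulating the genuinely new messages introduced by PartialGC: the partial input gate material of Phase 4 (the permuted truth tables $TT0_{i,j}, TT1_{i,j}$, the permutation-bit locations, and the transformation values $R_i$) together with the saved partial outputs of Phase 7 and the forward-hashed keys that persist across a series of executions.

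For Scenario (1), where the Cloud is malicious and the Generator and Evaluator are semi-honest, I would build $Sim_3$ so that, holding the honest parties' true inputs, it emulates $G$ and $E$ exactly as in the real protocol. The Cloud receives no output in the ideal model, so $Sim_3$ need only produce a faithful transcript; the point-and-permute bits and the fresh mask $R_i$ ensure the $TTb_{i,j}$ values and re-randomized wire labels the Cloud sees are distributed independently of which wire value is the ``real'' one, so the partial input gates leak nothing beyond an ordinary garbled gate. Any deviation by the malicious Cloud --- a corrupted output, or a skipped check --- is caught by the output MAC and the proof-of-work argument (Claims~2 and~3), and $Sim_3$ maps such an event to an ideal-world abort.

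For Scenario (2), where the Cloud is semi-honest and exactly one of $G, E$ is malicious, I would give $Sim_1$ and $Sim_2$ as input extractors. Against a malicious generator, $Sim_1$ uses the binding commitments and the Phase~3 consistency check (Claim~1) to extract the single input to which $G$ is committed across evaluation circuits, recovers both saved output labels (which $G$ itself stores in Phase~7), verifies the partial input gates as the honest Cloud would, and forwards the extracted input to the trusted party; a cheating $G$ that corrupts a majority of circuits is caught except with probability $2^{-k}$. Against a malicious evaluator, $Sim_2$ extracts $E$'s input from its OOT selections and then simulates the garbled circuits and partial input gates; here the crucial observation from the proof sketch is that the reused encrypted values inherit exactly the one-label-per-wire property of freshly transferred inputs --- $E$ holds only the output label matching the evaluated value and cannot derive the complementary label without breaking the encryption of the underlying garbled gate --- so the standard garbled-circuit simulator applies unchanged.

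I expect the main obstacle to be the cross-execution step rather than any single phase in isolation. Because state is carried forward (the cut-and-choose keys are re-hashed and the saved labels become next-round partial inputs), the simulators must remain mutually consistent over an unbounded sequence of computations while respecting the non-collusion restriction, and I must argue that reusing a saved label as a partial input preserves both hidden-bit properties simultaneously: that the generator, having been blinded by the initial OOT, still does not know which label $E$ holds, and that $E$ still does not know the other label. Composing these invariants through the hybrids --- so that a distinguisher for the whole series yields a distinguisher for one of the imported CMTB/sS13 sub-protocols or for the garbled-gate encryption --- is where the real care is needed.
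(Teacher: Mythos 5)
Your overall architecture --- a hybrid argument that peels off one sub-protocol at a time, importing the existing simulators for the OOT, the sS13 input-consistency check, and the cut-and-choose, then isolating the genuinely new Phase~4/Phase~7 messages, with a per-corruption case analysis --- matches the paper's proof, and your Scenario~(2) treatment (extracting the malicious generator's input via the consistency check and commitments, extracting the malicious evaluator's input from its OOT selections, then substituting the trusted party's output and simulating the MAC/one-time-pad output check) tracks the paper's hybrids $Hybrid1^{(B)}$--$Hybrid6^{(B)}$ and $Hybrid1^{(A)}$--$Hybrid5^{(A)}$ closely.

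There is, however, a genuine gap in your Scenario~(1). You construct $Sim_3$ ``holding the honest parties' true inputs'' and have it emulate $G$ and $E$ exactly as in the real protocol. In the ideal model the paper adopts (Definition~1, following Kamara et al.), the honest parties' inputs go only to the trusted party; the simulator for the corrupted cloud never sees them. A simulator handed the true inputs reproduces the real transcript trivially, but this proves nothing --- the entire content of the claim that the cloud learns nothing about $a$ and $b$ is that its view can be generated \emph{without} those inputs. The paper does the real work here in $Hybrid1^{(C)}$ and $Hybrid2^{(C)}$: it replaces the evaluator's OOT input $ea$ with the all-zero string $0^{2 \cdot l \cdot n}$ and the generator's input $b$ with $0^{2 \cdot n}$, and argues indistinguishability of the cloud's view from the randomness of the permutation string $p$ and matrix $T$, the hash-masked commitment keys, and the hiding property of garbled wire labels, which $Hybrid3^{(C)}$ then extends to the partial input gates. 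Your proposal needs this input-replacement step; as written, Scenario~(1) is not established. (Your flagged concern about cross-execution consistency is legitimate, but note the paper itself handles it only informally in the proof sketch rather than in the formal hybrids, so it is not where your argument diverges from the paper's.)
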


\begin{proof}
To demonstrate that $\lbrace REAL^{(i)}(k, x; r)\rbrace_{k \in N} \approx \\ \lbrace IDEAL^{(i)}(k, x; r) \rbrace_{k \in N}, \forall i \in \lbrace generator, evaluator, cloud \rbrace$, we consider separately each case where a party deviates from the protocol, and then show that these cases reveal no additional information, by considering each point at which the parties interact.
\end{proof}

For the remaining portion of the proof, we shall call the generator, the evaluator, and the cloud, as A, B, and C, respectively. To denote that a party is malicious, we shall use A*, B*, and C*, respectively.

\subsubsection{Malicious Evaluator}

Both the generator and the cloud participate honestly in the protocol. During the protocol execution, the evaluator only exchanges messages with the other participants at certain points: the cut-and-choose, the oblivious transfer, sending decryption information at the end of the OOT, checking the generator's input consistency, and receiving the proof of validity and output from the garbled circuit. Thus, our simulator need only ensure that these sections of the protocol are indistinguishable to the adversary - e.g., we need not consider phase 4. Our proof comprises the following hybrid experiments and lemmas:

\smallskip
\noindent
{\bf Simulating the random number generation / coin-flip}

$Hybrid1^{(A)}(k, x; r)$: This experiment is the same as \\ $REAL^{(A)}(k, x; r)$ except that instead of running a fair coin toss protocol with A*, the experiment chooses a random string $\rho$, and a coin-flipping simulator $S_{coinFlip} (\rho, 1^k )$ produces the protocol messages that output $\rho$.

\begin{lemma}
$REAL^{(A)}(k, x; r) \approx Hybrid1^{(A)}(k, x; r)$
\end{lemma}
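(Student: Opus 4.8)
The plan is to prove this first lemma in the standard way that one reasons about coin-flipping subprotocols in a hybrid argument: by reducing indistinguishability of the two experiments directly to the security of the coin-flipping simulator $S_{coinFlip}$. The only difference between $REAL^{(A)}(k,x;r)$ and $Hybrid1^{(A)}(k,x;r)$ is how the shared random string that determines the circuit selection (and any other jointly-sampled randomness) is produced: in the real experiment it comes from an actual fair coin-toss protocol run with the possibly-malicious $A^*$, whereas in the hybrid it is fixed to a uniformly chosen $\rho$ and the simulator $S_{coinFlip}(\rho,1^k)$ is invoked to manufacture protocol transcripts consistent with the output $\rho$. Everything downstream of the coin toss is computed identically in both experiments, so the two partial outputs can differ only through the coin-flip messages seen by $A^*$.

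First I would invoke the guarantee already available to us, namely that $S_{coinFlip}$ is a secure simulator for the coin-flipping functionality: for a uniformly random $\rho$, the transcript it produces (together with forcing the output to be $\rho$) is computationally indistinguishable from the transcript of a genuine execution of the coin-toss protocol with the same adversary $A^*$. Since $\rho$ is itself uniform, the induced distribution on the agreed string is identical in the two worlds; what must be argued is that $A^*$'s \emph{view} of the message exchange is indistinguishable. I would then argue that any distinguisher $D$ separating $REAL^{(A)}$ from $Hybrid1^{(A)}$ with non-negligible advantage yields a distinguisher against $S_{coinFlip}$: embed the challenge coin-flip transcript in place of the coin-toss messages, run the remainder of the protocol honestly (the generator and cloud behave honestly here, so this is efficiently computable given the adversary's code $A^*$), and output whatever $D$ outputs. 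Because the post-coin-flip computation is a fixed polynomial-time function of the coin-flip output and the parties' inputs, this reduction is tight and the advantage carries over unchanged.

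The main obstacle I anticipate is not the high-level reduction but making precise the claim that the rest of the experiment is a deterministic (or at least fixed, efficiently samplable) function of the coin-flip output together with $A^*$'s subsequent messages, so that substituting one coin-flip transcript for another does not silently change correlated randomness elsewhere. In particular I would need to check that the circuit-selection bit-vector derived from $\rho$ and all values the honest generator and cloud compute after the toss depend on the toss only through its published outcome $\rho$, and that $A^*$ cannot cause an abort whose probability differs between the two worlds. Handling aborts cleanly requires that $S_{coinFlip}$ preserve the abort behavior of the real coin-toss (aborting exactly when the real protocol would), which is a standard property of such simulators and which I would state explicitly as the precondition we rely on.

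Concretely, the proof is a one-line hybrid reduction once these bookkeeping points are settled: assume for contradiction a PPT distinguisher with advantage $\varepsilon(k)$ between the two experiments, build the coin-flip distinguisher above, and conclude $\varepsilon(k)$ is negligible by the security of $S_{coinFlip}$. Hence $REAL^{(A)}(k,x;r) \approx Hybrid1^{(A)}(k,x;r)$, which is exactly the claimed lemma and sets up the next hybrid (replacing the oblivious-transfer messages) to proceed analogously.
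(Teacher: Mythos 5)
Your proposal is correct and takes essentially the same approach as the paper: both arguments reduce the indistinguishability of $REAL^{(A)}(k,x;r)$ and $Hybrid1^{(A)}(k,x;r)$ to the security (simulatability) of the fair coin-toss protocol, observing that everything after the toss is computed identically in both experiments. The paper's proof is a two-sentence invocation of this fact; you simply make explicit the reduction and the abort/bookkeeping details that it leaves implicit.
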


\begin{proof}
Based on the security of the fair coin toss protocol, we know that there exists a simulator $S_{coinFlip}( \cdot , \cdot )$ such that an interaction with $S_{coinFlip}( \cdot , \cdot )$ is indistinguishable from a real protocol interaction. Since everything else in $Hybrid1^{(A)}(k, x; r)$ is exactly the same as in $REAL^{(A)}(k, x; r)$, this proves the lemma.
\end{proof}

\noindent
{\bf Simulating the primitive OT}

$Hybrid2^{(A)}(k, x; r)$: This experiment is the same as \\$Hybrid1^{(A)}(k, x; r)$ except that during the Outsourced Oblivious Transfer, the experiment invokes a simulator $S_{OT}$ to simulate the primitive oblivious transfer operation with A*. The simulator sends A* a random string s and receives the columns of the matrix Q*.

\begin{lemma}
$Hybrid1^{(A)}(k, x; r) \approx Hybrid2^{(A)}(k, x; r)$
\end{lemma}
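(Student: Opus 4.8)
The plan is to mirror the structure of the preceding lemma's proof and reduce the indistinguishability of the two hybrids directly to the simulation security of the primitive oblivious transfer used inside the OOT. First I would invoke the guarantee that the base OT protocol underlying CMTB's outsourced OT admits a simulator $S_{OT}$ whose output, when interacting with a malicious party, is computationally indistinguishable from the messages produced in a real execution of the primitive OT. This is precisely the security property we assume of the OT primitive imported from CMTB, so I would cite it as a black box rather than re-prove it.

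Next I would argue that $Hybrid2^{(A)}(k,x;r)$ differs from $Hybrid1^{(A)}(k,x;r)$ in exactly one respect: the real primitive-OT messages exchanged with A* are replaced by the output of $S_{OT}$, which sends A* a uniformly random string $s$ and receives from A* the columns of the matrix $Q^*$. Every other message --- the coin-flip already replaced in $Hybrid1$, the OT-extension layer built on top of the base OT, the seed derivation, the generator's input consistency check, the garbled gates, and the final output --- is computed identically in the two experiments, using the same honest-party code and the same randomness. Hence the only possible source of distinguishing advantage is the substituted OT transcript.

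Therefore I would finish by a reduction: any PPT distinguisher that separates the two hybrids with non-negligible advantage yields a distinguisher against the base OT. The reduction packages all the surrounding (identical) protocol computation into an environment for a single OT instance, forwarding either a real OT transcript (yielding $Hybrid1^{(A)}$) or a simulated one from $S_{OT}$ (yielding $Hybrid2^{(A)}$), and echoes the distinguisher's output. Its advantage equals the distinguisher's, contradicting OT security and establishing $Hybrid1^{(A)}(k,x;r) \approx Hybrid2^{(A)}(k,x;r)$.

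I expect the main obstacle to be the clean isolation of the OT substitution rather than the reduction itself. Because the OOT feeds the base-OT outputs forward as seeds that determine the evaluator's garbled inputs across all circuits, I must verify that $S_{OT}$ can be spliced in without altering the joint distribution of those downstream seeds from A*'s view --- that is, that what A* learns from the simulated transcript is distributed exactly as in the real OT, and that recovering A*'s effective choice bits from $Q^*$ does not demand information the simulator lacks. Handling the malicious case, in which A* may submit malformed matrix columns, rather than an honest-but-curious one, is the delicate point, and it is exactly where the assumed malicious-security of the primitive OT carries the argument.
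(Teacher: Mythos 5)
Your proposal is correct and follows essentially the same route as the paper: both invoke the malicious security of the primitive OT to obtain the simulator $S_{OT}$ (whose interaction sends A* a random string $s$ and receives the columns of $Q^*$), observe that every other part of the experiment is computed identically in the two hybrids, and conclude indistinguishability. Your explicit reduction packaging and the remark about downstream seed propagation merely spell out what the paper compresses into its appeal to OT security, so there is no substantive difference.
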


\begin{proof}
Based on the malicious security of the OT primitive, we know that there exists a simulator $S_{OT}$ such that an interaction with this simulator is indistinguishable from a real execution of the oblivious transfer protocol. Since everything else in $Hybrid2^{(A)}(k, x; r)$ is identical to $Hybrid1^{(A)}(k, x; r)$, this proves the lemma.
\end{proof}

\noindent
{\bf Checking the output of OOT}

$Hybrid3^{(A)}(k, x; r)$: This experiment is the same as \\$Hybrid2^{(A)}(k, x; r)$ except that the experiment aborts if the matrix Q* is not formed correctly (that is, if A* used inconsistent input values $ea^*$ for any column in generating Q*).

\begin{lemma}
$Hybrid2^{(A)}(k, x; r) \approx Hybrid3^{(A)}(k, x; r)$
\end{lemma}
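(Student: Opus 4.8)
The plan is to show that the abort condition newly introduced in $Hybrid3^{(A)}$ fires, with all but negligible probability, precisely when the consistency-checking step already built into the outsourced OT would itself force an abort; adding the abort therefore cannot shift the adversary's view by more than a negligible amount. Recall that the sole difference between the two experiments is that $Hybrid3^{(A)}(k, x; r)$ halts whenever the extension matrix $Q^*$ reconstructed from A*'s messages is not well formed, i.e.\ whenever A* failed to use a single consistent selection vector $ea^*$ across all columns. Everything else — including the simulated primitive OT inherited from $Hybrid2^{(A)}$, in which $S_{OT}$ sends A* a random string $s$ and receives the columns of $Q^*$ — is left untouched. Note that because the experiment already possesses all columns of $Q^*$, it can decide well-formedness directly and perform the abort; the content of the lemma is that doing so is consistent with the behavior of the protocol that $Hybrid2^{(A)}$ faithfully reproduces.

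First I would partition the probability space according to whether $Q^*$ is well formed. When $Q^*$ is well formed, neither experiment aborts at this step and the two views are distributed identically, so nothing needs to be argued. The entire burden therefore rests on the event that $Q^*$ is malformed. Here I would appeal to the malicious-receiver consistency check that the Ishai et al.\ OT extension used by CMTB performs on the matrix columns. Since both the generator and the cloud behave honestly in this case, the check is executed faithfully, and a malformed $Q^*$ survives only if A* can force agreement on the checked values (hashes, or the checked random linear combinations of columns) despite having used inconsistent selection vectors across columns.

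The crux of the argument — and the step I expect to be the main obstacle — is bounding the probability of this last event. The plan is to show that, by the collision resistance of the hash used in the check together with the randomness of the challenge that combines the columns, any pair of inconsistent columns disagrees on the checked value except with probability negligible in the security parameter $k$; a union bound over the polynomially many columns then caps the total probability that a malformed $Q^*$ passes. Since the protocol underlying $Hybrid2^{(A)}$ aborts on every malformed $Q^*$ save for this negligible event, while $Hybrid3^{(A)}$ aborts on all of them, the statistical distance between the two output distributions is negligible, establishing $Hybrid2^{(A)}(k, x; r) \approx Hybrid3^{(A)}(k, x; r)$. For the detailed probability accounting I would defer to the OT-extension consistency analysis of CMTB, invoking it here as a primitive in exactly the way the neighboring hybrids invoke $S_{coinFlip}$ and $S_{OT}$.
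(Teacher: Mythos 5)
Your overall skeleton matches the paper's: condition on whether $Q^*$ is well formed, note that the two views coincide when it is, and argue that when it is not, $Hybrid2^{(A)}$ already aborts except with negligible probability, so the explicit abort added in $Hybrid3^{(A)}$ shifts the view only negligibly. The gap is in the mechanism you invoke for that abort. You attribute it to a malicious-receiver consistency check inside the Ishai et al.\ OT extension (hashes, or random linear combinations over the matrix columns). No such check exists in the CMTB/PartialGC outsourced OT: the basic IKNP-style extension used here performs no column-consistency test at all, which is precisely why a malicious A* is able to submit inconsistent columns in the first place. Since the step you expect to be ``the main obstacle'' is discharged by deferring to this nonexistent check as a primitive, the argument as written cannot be completed.

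The actual reason the real protocol aborts lies downstream of the OT: the transferred values are keys for commitments to the garbled input labels (the Kreuter et al.\ commitment technique). If A* forms column $i$ using some $ea' \neq ea^*$ whose $i$th bit is $b \oplus 1$ instead of $b$, then every row of $Q^*$ encrypts the $i$th value in the $b \oplus 1$ entry rather than the $b$ entry; when A* later sends $ea^* \oplus p^*$ and the honest cloud decrypts the $i$th choice, it opens the wrong entry, which with probability $1 - \epsilon$ (for negligible $\epsilon$) is not a valid commitment key. The garbled input values then fail to de-commit and the cloud aborts. Meanwhile $Hybrid3^{(A)}$, which observes $Q^*$, $p^*$, and $ea^* \oplus p^*$, can recover $ea^*$ directly and abort on any inconsistency. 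Substituting this commitment-failure argument for your appeal to an OT-extension consistency check closes the gap; the rest of your structure (the partition on well-formedness and the negligible statistical distance) then goes through essentially as you wrote it.
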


\begin{proof}
Let us consider a case in $Hybrid2^{(A)}(k, x; r)$ where for some value of $i$, A* sends the column value $T^i  \oplus ea'$ for some $ea' \neq ea^*$ such that the $i^{th}$ bit is $b$ in $ea^*$ and $b \oplus 1$ in $ea'$. Then, for every row in Q*, the $i^{th}$ bit will be encrypted in the $b \oplus 1$ entry.

However, when A* sends the value $ea^* \oplus p^*$ to the cloud for decryption, the cloud will decrypt the $i^{th}$ choice, and then it will decrypt the $b \oplus 1$ entry instead of the $b$ entry. This will result in an invalid decryption with probability $1 - \epsilon$ for a negligible value of $\epsilon$. This decryption is not (with high probability) a valid commitment key, so the garbled input values won't de-commit, causing the cloud to abort. In $Hybrid3^{(A)}(k, x; r)$, since the experiment observes the messages Q* , p* , and $ea^* \oplus p^*$ , it can recover $ea^*$ and check Q* for consistency. (The cloud always aborts if an inconsistency is detected.)
\end{proof}

\noindent
{\bf Simulating consistency check and substituting inputs}

$Hybrid4^{(A)}(k, x; r)$: This experiment is the same as \\$Hybrid3^{(A)}(k, x; r)$ except that the experiment provides a string of $2 \cdot n$ zeros, denoted $0^{2 \cdot n}$, during the consistency check to replace the generator's input.

\begin{lemma}
$Hybrid3^{(A)}(k, x; r) \approx Hybrid4^{(A)}(k, x; r)$
\end{lemma}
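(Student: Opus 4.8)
The plan is to bound the distinguishing advantage between $Hybrid3^{(A)}$ and $Hybrid4^{(A)}$ by the security guarantee already established for the sS13 generator-input consistency check (Section~\ref{sec:sec-g}) together with the pseudorandomness of the garbled wire labels. First I would isolate exactly which parts of the malicious evaluator $A^*$'s view are touched by the substitution. Since the \emph{only} change is that the honest generator's real input is replaced by $0^{2\cdot n}$ during the consistency check, the affected quantities are the garbled generator-input labels $garbledGenInput_i$ feeding the universal hashes $t_i = UHF(garbledGenInput_i, C_i)$, together with any commitments or openings associated with those labels. Everything else — the coin-flip, the primitive OT, and the $Q^*$-consistency abort introduced in the earlier hybrids — is literally identical in the two experiments, so it suffices to argue that the consistency-check transcript is computationally independent of the underlying generator input.

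Next I would show that these affected quantities leak nothing about the actual bit values of the generator's input. The garbled input labels are generated pseudorandomly under $CKey_i$, so each label is indistinguishable from a uniform $K$-bit string whether it encodes the real bit or $0$; the commitments to those labels are hiding; and $t_i$ only certifies \emph{equality} of the garbled input across the evaluation circuits, not its value. Consequently the transcript produced from the real input and the transcript produced from $0^{2\cdot n}$ differ by at most a negligible amount. Formally, I would invoke the simulator guaranteed by the generator-input-consistency-check guarantee of Section~\ref{sec:sec-g}, which produces consistency-check messages without knowledge of the real input; composing that simulator with the honest-cloud behavior reproduces exactly the view in $Hybrid4^{(A)}$, whereas running the honest protocol on the real input reproduces $Hybrid3^{(A)}$. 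A standard reduction then turns any non-negligible distinguisher into a break of the hiding property or of the pseudorandomness of the labels.

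The main obstacle I anticipate is ruling out a subtle correlation that $A^*$ might exploit between the published hash values $t_i$ and a particular candidate generator input. Because $UHF$ is keyed and collision-resistant, and because $A^*$ never obtains the de-committed labels for the evaluation circuits before the substitution point, I expect this to reduce cleanly to commitment hiding and label pseudorandomness; but I would verify carefully that the abort condition inherited from $Hybrid3$ (which aborts on a malformed $Q^*$) fires with identical probability under both inputs — this should hold because that abort depends on $A^*$'s own OOT messages rather than on the generator's input, so the substitution cannot shift the abort distribution. Once that independence is confirmed, the hybrid argument yields $Hybrid3^{(A)}(k, x; r) \approx Hybrid4^{(A)}(k, x; r)$, completing the lemma.
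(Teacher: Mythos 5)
Your proposal is correct and takes essentially the same approach as the paper: the paper's own proof is a one-line citation to shelat and Shen's proof of the generator-input consistency check, noting that the messages sent are identical in content and only the entity sending them (the cloud rather than the evaluator) changes. Your write-up simply unpacks what that citation covers (pseudorandomness of the garbled input labels, hiding of the commitments, the fact that the $UHF$ values certify only equality, and that the $Hybrid3$ abort is independent of the generator's input), so the underlying reduction is the same one the paper invokes.
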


\begin{proof}
Here we cite the proof of shelat and Shen's scheme~\cite{shelat13}. Since the messages sent in our scheme are identical to theirs in content, we simply change the entity sending the message in the experiment and the lemma still holds.
\end{proof}

\noindent
{\bf Output and output consistency check}

$Hybrid5^{(A)}(k, x; r)$: This experiment is identical to \\$Hybrid4^{(A)}(k, x; r)$ except that instead of returning the output of the circuit, the experiment provides A* with the result sent from the trusted external oracle.

\begin{lemma}
$Hybrid4^{(A)}(k, x; r) \approx Hybrid5^{(A)}(k, x; r)$
\end{lemma}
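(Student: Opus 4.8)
The plan is to show that the output value delivered to $A^*$ is distributed identically, up to negligible statistical distance, in the two experiments, so that no PPT distinguisher can separate them. The only change from $Hybrid4^{(A)}$ to $Hybrid5^{(A)}$ is the source of the returned output: in $Hybrid4^{(A)}$ it is obtained by honestly evaluating the garbled circuits and taking the phase-6 majority vote, whereas in $Hybrid5^{(A)}$ it is the value $f(a,b^*)$ supplied by the trusted oracle, where $b^*$ is the effective evaluator input already extracted by $S_{OT}$ in $Hybrid2^{(A)}$ and $Hybrid3^{(A)}$. The whole argument therefore reduces to showing that these two output values coincide and are presented to $A^*$ in the same form.

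First I would use the fact that in this scenario both the generator and the cloud are honest. Consequently every evaluation circuit is generated and evaluated correctly, so the majority vote returns the correct function value $f(a,b^*)$ with probability $1$; there is no cut-and-choose miscomputation to bound here, unlike in the malicious-generator case. Next I would argue that $b^*$, the input on which the oracle is queried in $Hybrid5^{(A)}$, is exactly the input realized inside the garbled circuit in $Hybrid4^{(A)}$: the earlier hybrids guarantee that $S_{OT}$ recovers precisely the choice bits $A^*$ committed to through the (consistency-checked) OT matrix, and $Hybrid3^{(A)}$ aborts identically in both experiments whenever that matrix is inconsistent. Thus the plaintext output is the same value in both experiments.

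It then remains to argue that this common plaintext reaches $A^*$ in an identically distributed form. Here I would invoke the phase-1 output encoding: the evaluator's output leaves the circuit XOR'd with a one-time pad key $outputKey_{evl}$ and accompanied by a MAC, both carried in by $A^*$'s own input. Since both experiments encrypt the same plaintext $f(a,b^*)$ under the same pad and MAC it under the same key, the ciphertext $A^*$ receives, the result of its subsequent MAC check, and hence its entire view are identically distributed. Early aborts by $A^*$ (including supplying a malformed pad or MAC key) trigger the same behavior in both experiments and so cannot aid a distinguisher.

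The main obstacle I anticipate is the bookkeeping that ties the extracted input $b^*$ to the value genuinely realized by the partial-input gates and the circuit evaluation, and confirming that substituting the oracle output is sound precisely \emph{because} the delivered output is blinded by the one-time pad: the experiment can form the correct ciphertext knowing only $f(a,b^*)$, without access to the generator's real input or to any intermediate wire labels. Once these links are made explicit, the two experiments are seen to be identical in this honest-generator, honest-cloud setting (statistically indistinguishable in the general argument), which proves the lemma.
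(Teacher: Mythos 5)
Your proposal is correct and follows essentially the same route as the paper's proof: both equate the circuit output with the trusted-party output on the extracted evaluator input $ea^*$ (via garbled-circuit security/correctness and the honesty of generator and cloud), note that the majority vote is trivial to simulate, and use the fact that the one-time pad and MAC keys are part of $A^*$'s own (extracted) input to reproduce the delivered ciphertext and MAC identically. Your write-up simply spells out the bookkeeping that the paper's terser argument leaves implicit.
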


\begin{proof}
Based on the security of garbled circuits, the trusted third party output and the circuit output will be indistinguishable when provided with the input of A*, $ea^*$. Since we have the k-bit secret key and know the output, we can produce the necessary MAC under the one time pad (similar to Hybrid1). Simulating a majority vote is trivial; the evaluator then decrypts the ciphertext using its OTP key, and then performs a MAC on the output, and compares with the MAC produced within the garbled circuit. This, again, is indistinguishable from the circuit version since we know the output and can ensure that the MACs are identical.
\end{proof}

\subsubsection{Malicious Generator}

Here, both the evaluator and the cloud participate honestly in the protocol.

\noindent
{\bf Simulating the cut and choose}
$Hybrid1^{(B)}(k, x; r)$: This experiment is the same as $REAL^{(B)}(k, x; r)$, except during the oblivious transfer (the generator offers up decryption keys for his input as well as for the circuit seed and possible evaluator's input for each circuit), where we randomly choose check and evaluation circuits.
\begin{lemma}
$REAL^{(B)}(k, x; r) \approx Hybrid1^{(B)}(k, x; r)$
\end{lemma}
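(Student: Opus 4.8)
The plan is to establish this lemma via the standard simulation paradigm for a malicious generator $B^*$ in the cut-and-choose phase. In $REAL^{(B)}(k,x;r)$, the circuit split is determined by the cloud's oblivious-transfer selections, encoded by the $circuitSelection$ bit-vector; the generator never learns this split directly, only which keys the cloud chose to request. The claim of this first hybrid is that replacing the cloud's honest, input-dependent selection of check/evaluation circuits with a uniformly random selection of $N=\frac{2}{5}S$ evaluation circuits is computationally indistinguishable to $B^*$. First I would argue that from $B^*$'s view, the only messages that could possibly reveal the split are the generator's own inputs to the cut-and-choose OT, together with whatever the cloud returns during that OT.

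Second, the crucial observation is that in this protocol the generator is the \emph{sender} in the cut-and-choose OT and the cloud is the \emph{receiver}: the generator offers the pair $(evlKey_i, checkKey_i)$ for each circuit, and the cloud selects exactly one per circuit according to $circuitSelection$. By the receiver-privacy (security against a malicious sender) of the underlying OT, $B^*$'s view is independent of which key the cloud selected. Hence the distribution of messages $B^*$ receives is identical whether the split is chosen by an honest cloud on its true input or chosen uniformly at random by the experiment. I would phrase this as invoking the simulator $S_{OT}$ for the sender side, exactly paralleling the treatment of the primitive OT in the malicious-evaluator case (\emph{Hybrid2} there), so that the selection bits are information-theoretically (or computationally, depending on the OT instantiation) hidden from $B^*$.

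Third, I would confirm that nothing else in the experiment depends on the now-substituted split in a way visible to $B^*$ at this point. The later messages tied to the split --- the hashes $genSend(hash(checkKey_i),hash(evaluationKey_i))$ and the cloud's reply $cloudSend(hash(selectedKey_i),isCheckCircuit_i)$ --- are exchanged with the evaluator, not the generator, so they do not enter $B^*$'s view in $Hybrid1^{(B)}$; and since both the honest and random splits pick exactly $N$ evaluation circuits, the marginal distribution of the split is the same uniform distribution in both worlds. Everything else in $Hybrid1^{(B)}(k,x;r)$ is syntactically identical to $REAL^{(B)}(k,x;r)$, so the indistinguishability reduces entirely to the OT guarantee.

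The main obstacle I anticipate is being precise about \emph{which} flavor of OT security is needed and ensuring the reduction is clean: we need security against a malicious sender (receiver privacy), which is the opposite corruption from the malicious-evaluator analysis, so the simulator here extracts or simulates on the sender side rather than the receiver side. A secondary subtlety is that the subsequent-execution case reuses hashed saved keys rather than fresh OT outputs; strictly, this first hybrid concerns the initial execution where the real cut-and-choose OT runs, and I would note that in later executions the split is fixed and carried over, so no new OT --- and hence no new argument --- is required there. Once these points are pinned down, the lemma follows directly from OT sender-security, and the proof is a short reduction rather than a calculation.
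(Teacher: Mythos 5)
Your proposal is correct and follows essentially the same route as the paper's proof: both rest on the cut-and-choose OT hiding the cloud's (receiver's) selection from the generator as sender, so that $B^*$'s view is independent of the check/evaluation split, with the experiment retaining both keys so it can faithfully emulate all downstream decryptions and abort decisions. Your marginal-distribution observation (both worlds choose a uniform split of exactly $N$ evaluation circuits) plays the same role as the paper's remark that the experiment "has all the keys needed to check whether we should abort," ruling out a selective-failure distinguisher.
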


\begin{proof}
Since the generator does not learn the circuit split because of the security of the oblivious transfer, this portion of the protocol is indistinguishable from the real version. We can perform all the checks and decryptions required; we also have all the keys needed to check whether we should abort, so a selective failure attack does not help distinguish between the experiment and the real version.
\end{proof}

\noindent
{\bf Simulating the OT}
$Hybrid2^{(B)}(k, x; r)$: This experiment is the same as $Hybrid1^{(B)}(k, x; r)$ except that rather than run the primitive oblivious transfer with A, the experiment generates a random input string $ea'$ and a random matrix T, then runs a simulator $S_{OT}$ with B*, which gives B* exactly one element from the pair $(T^i , T^i \oplus ea')$ depending on the  $i^{th}$ selection bit of B*.

\begin{lemma}
$Hybrid1^{(B)}(k, x; r) \approx Hybrid2^{(B)}(k, x; r)$
\end{lemma}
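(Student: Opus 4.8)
The plan is to reduce the indistinguishability of these two experiments directly to the malicious security of the base (primitive) oblivious transfer used inside the OOT, mirroring the argument already made for $Hybrid1^{(A)}(k, x; r) \approx Hybrid2^{(A)}(k, x; r)$ but with the roles reversed. The only change introduced in $Hybrid2^{(B)}(k, x; r)$ is that the honest party no longer runs the real primitive OT with the malicious generator $B^*$; instead the experiment samples a random encoding string $ea'$ and a random matrix $T$ and hands $B^*$ exactly one element of each pair $(T^i, T^i \oplus ea')$ through the simulator $S_{OT}$, according to $B^*$'s base-OT selection bits. The key structural observation is that in the Ishai OT extension used by CMTB the generator is the \emph{receiver} of the base OT (it is the sender of the extended transfers), so only malicious-receiver security of the primitive is required.

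First I would invoke the malicious-security definition of the OT primitive to obtain the simulator $S_{OT}$: by assumption there exists $S_{OT}$ whose interaction with any malicious receiver is computationally indistinguishable from a real base-OT execution while revealing only the single chosen element of each pair. Second, I would observe that from $B^*$'s view the substituted transcript is distributed as in the real protocol: $B^*$ obtains one string per column, and because $T$ and $ea'$ are random, the revealed value is uniformly distributed in both experiments, so no gap is introduced by the substitution itself beyond that already bounded by OT security. Third, since every other message, check, and computation in $Hybrid2^{(B)}(k, x; r)$ is syntactically identical to $Hybrid1^{(B)}(k, x; r)$, any PPT distinguisher separating the two hybrids would yield a distinguisher against the real-versus-simulated base OT, contradicting its malicious security; this establishes $Hybrid1^{(B)}(k, x; r) \approx Hybrid2^{(B)}(k, x; r)$.

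The main obstacle I expect is not the reduction itself but verifying that $B^*$ never learns the \emph{other} element of any pair at a later stage, so that handing it only one element via $S_{OT}$ is faithful. Concretely, one must check that the withheld column value $T^i \oplus ea'$ (resp.\ $T^i$) is never subsequently needed by, or leaked to, the generator in the extension step or in the consistency and decryption messages that follow, and that the correlation the sender later relies on (the columns $Q^i = T^i \oplus (s_i \cdot ea')$ of the derived matrix) is reproducible by the experiment from the simulated base-OT outputs alone. Because the generator acts only as receiver in the primitive OT and the extension step reveals nothing further about the unselected base-OT messages, this faithfulness holds and the substitution is clean.
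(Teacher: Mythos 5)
Your proposal is correct and follows essentially the same route as the paper: both reduce the hybrid transition to the malicious security of the primitive OT, invoking the existence of $S_{OT}$ (which extracts $B^*$'s selection bits and is indistinguishable from a real execution) and noting that $B^*$ learns nothing distinguishing about the honest party's input since the revealed values are distributed as in the real protocol. Your added checks — that the generator acts only as base-OT receiver in the Ishai extension, and that the withheld pair element is never needed later — are sound elaborations of what the paper's terser argument implicitly relies on.
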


\begin{proof}
Based on the security of the primitive OT scheme, we know that the simulator $S_{OT}$ exists, that it can recover B*'s selection bits from the interaction, and that an interaction with it is indistinguishable from a real execution of the OT. Since B* cannot learn any distinguishing information from A's input, again based on the security of the OT primitive, indistinguishability holds between the two experiments.
\end{proof}

\noindent
{\bf Checking the output of OOT}
$Hybrid3^{(B)}(k, x; r)$: This experiment is the same as $Hybrid2^{(B)}(k, x; r)$ except that the experiment checks the validity of B*'s output from the OOT. Since the experiment possesses $T$, $ea'$, and $s^*$ (which was recovered by the oblivious transfer simulator $S_{OT}$ in the previous hybrid), the experiment can check whether or not the encrypted set of outputs Y* is well-formed. If not, the experiment aborts.

\begin{lemma}
$Hybrid2^{(B)}(k, x; r) \approx Hybrid3^{(B)}(k, x; r)$
\end{lemma}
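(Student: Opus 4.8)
The plan is to show that the sole difference between the two experiments --- the extra abort that $Hybrid3^{(B)}$ performs when B*'s encrypted output set $Y^*$ is malformed --- is invisible to the adversary, because the honest cloud already aborts in $Hybrid2^{(B)}$ in exactly those cases, up to a negligible discrepancy. This parallels the argument used for $Hybrid3^{(A)}$, where an inconsistent column in Q* was shown to force a decryption failure at the cloud; here the roles are reversed and the malformation originates with the generator B*, but the skeleton of the reduction is identical. Since everything outside the added check is syntactically the same in the two experiments, it suffices to argue that the new abort predicate matches the real execution's behavior.

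First I would record what the experiment can compute. Having recovered B*'s base-OT selection string $s^*$ in the previous hybrid and retaining the matrix $T$ and the random string $ea'$, and since the experiment itself plays the honest cloud and evaluator (and hence sees B*'s commitments and knows which entries the cloud will open), the experiment can reconstruct, for every opened position, the unique value a correctly formed ciphertext of $Y^*$ must carry. Thus the predicate ``$Y^*$ is well-formed'' is efficiently decidable by the experiment. When the predicate holds, $Hybrid3^{(B)}$ never triggers its extra abort, so the two experiments run identically and produce identical transcripts and outputs; the entire content of the lemma lives in the malformed case.

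For that case I would argue that whenever $Y^*$ fails the check, the honest cloud in $Hybrid2^{(B)}$ also aborts with overwhelming probability. A malformed entry of $Y^*$ decrypts, at the position the cloud actually opens, to a string that is not one of B*'s committed garbled input labels; by the binding property of the Kreuter et al. commitments and the security of the encryption, such a value de-commits only with negligible probability, so the honest cloud aborts exactly as the added consistency check in $Hybrid3^{(B)}$ does. Hence the two experiments differ only on the negligible-probability event that a malformed $Y^*$ nonetheless yields a valid de-commitment at an opened position; outside that event their joint distributions of messages and outputs coincide, giving $Hybrid2^{(B)}(k,x;r) \approx Hybrid3^{(B)}(k,x;r)$.

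The step I expect to be the main obstacle is making ``well-formed'' precise enough that the experiment's abort condition and the cloud's real abort condition provably agree up to a negligible event. I must rule out that B* can craft a $Y^*$ that the experiment flags as malformed yet still passes every check the honest cloud performs at the opened positions; this is precisely where the binding of the commitments and the structure of the Ishai OT-extension matrix --- which, given $T$ and $s^*$, fixes the unique value each opened ciphertext must carry --- do the work. Once that correspondence is pinned down, statistical indistinguishability follows immediately, since the remainder of the two experiments is identical by construction.
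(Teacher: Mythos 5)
Your proposal is correct and follows essentially the same route as the paper's proof: both argue that the explicit well-formedness check on $Y^*$ added in $Hybrid3^{(B)}$ coincides, up to a negligible event, with the honest cloud's natural abort in $Hybrid2^{(B)}$ when a malformed entry fails to yield a valid commitment key/de-commitment, with the commitment scheme's binding and the recovered values ($T$, $ea'$, $s^*$) making the check efficiently decidable by the experiment. The paper's version is terser but contains the same two ingredients --- decryption/de-commitment failure with probability $1-\epsilon$ in the real interaction, and the impossibility of label swapping due to the commitments --- so no gap separates your argument from theirs.
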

\begin{proof}
Recall that in $Hybrid2^{(B)}(k,x;r)$, if B* does not format the output of the OOT correctly, the cloud will fail to recover a valid commitment key with probability $1 - \epsilon$, where $\epsilon$ is negligible in the security parameter. In this case, the committed garbled circuit labels will fail to decrypt properly and the cloud will abort the protocol. In $Hybrid3^{(B)}(k, x; r)$, since the experiment has seen the values $Q$, $s^*$, $ea'$, and Y*, it can trivially check to see whether Y* is correctly formed; we about on failure, so a selective failure attack does not help distinguish between the experiment and the real version. Further, B* cannot swap any of A's input labels in the commitments~\cite{shelat13}.
\end{proof}

\noindent
{\bf Checking input consistency and recovering inputs}

$Hybrid4^{(B)}(k,x;r)$: This experiment is the same as \\$Hybrid3^{(B)}(k,x;r)$ except that the experiment recovers B*'s input b* during the input consistency check using the random seed recovered in $Hybrid1^{(B)}(k,x;r)$. If the consistency check does not pass or if B*'s input cannot be recovered, the experiment immediately aborts.
\begin{lemma}
$Hybrid3^{(B)}(k, x; r) \approx Hybrid4^{(B)}(k, x; r)$
\end{lemma}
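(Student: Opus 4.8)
The plan is to show that the extra bookkeeping introduced in $Hybrid4^{(B)}$ --- extracting the generator's logical input $b^*$ during the consistency check and aborting if that extraction fails --- is invisible to $B^*$ and alters the abort behavior only with negligible probability. The two experiments are syntactically identical except for this internal extraction and its associated abort condition, so it suffices to establish two facts: (i) $B^*$'s view is unchanged, because the extraction operates only on values the experiment already holds and produces no new protocol messages; and (ii) the event that $Hybrid4^{(B)}$ aborts because extraction fails coincides, up to a negligible difference, with the event that the real protocol's phase-3 check aborts.

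First I would recall what was recovered in $Hybrid1^{(B)}$: by simulating the cut-and-choose oblivious transfer, the experiment holds the decryption keys and circuit seeds for every circuit, and in particular can reconstruct the garbled generator-input labels for the evaluation circuits. Using these seeds, the experiment decodes the garbled input $garbledGenInput_i$ supplied by $B^*$ on any evaluation circuit into an underlying logical string and sets $b^*$ to this decoded value. Because this is purely local computation on data already in the experiment's possession, no message seen by $B^*$ changes, which establishes (i).

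Next I would argue (ii) using the soundness of the input-consistency construction of sS13. The check accepts exactly when the universal-hash values $t_i = UHF(garbledGenInput_i, C_i)$ agree across all evaluation circuits. By the collision-resistance and universal-hash property invoked in sS13, agreement of these hashes implies, with all but negligible probability, that the logical inputs carried by the evaluation circuits are identical, so the decoded $b^*$ is well defined and independent of which evaluation circuit is used; conversely, if $B^*$ feeds genuinely inconsistent inputs, the hashes differ except with negligible probability and both the real check and $Hybrid4^{(B)}$ abort. Hence extraction fails only when the check already fails, so the two abort events differ by at most the negligible collision probability, and combining (i) and (ii) yields statistical closeness of the two experiments.

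The main obstacle is part (ii): showing that a passing consistency check soundly guarantees a unique, extractable logical input rather than merely matching hashes of garbled labels that could decode to garbage. This is precisely the guarantee proved for the generator's input-consistency scheme in sS13, so I would lean on that proof, taking care that the binding property of the input commitments together with the recovered seeds rules out a malicious generator making the labels both pass the $UHF$ test and fail to decode to a consistent $b^*$.
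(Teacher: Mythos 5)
Your proposal is correct and follows essentially the same route as the paper's proof: use the circuit split and seeds recovered in $Hybrid1^{(B)}$ to perform the $UHF$-based consistency check locally, note that this is exactly what the cloud does in the real protocol, and defer to shelat--Shen (sS13) for soundness. The paper's version is far terser --- it does not explicitly argue your point (i) that the extraction is message-free local computation, nor your point (ii) that the ``input cannot be recovered'' abort coincides with the check's abort up to negligible probability --- so your write-up fills in details the paper glosses over, but the underlying argument is the same.
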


\begin{proof}
The experiment can perform a hash of the generator's input for each evaluation circuit (we know the split from Hybrid1). If any of these hashes are different, then we know that the generator tried to cheat~\cite{shelat13} and can abort if necessary - this is done by the cloud in the real version of the protocol.
\end{proof}

\noindent
{\bf Generating, checking, and evaluating partial input gates}

$Hybrid5^{(B)}(k,x;r)$: This experiment is the same as \\$Hybrid4^{(B)}(k,x;r)$, except for the following: for check circuits, the experiment uses the data sent by the generator, as well as $POut0_{i,j}$, $POut1_{i,j}$, and $CSeed_i$ (recovered during the cut-and-choose) to generate the partial input gates in the same manner as shown in phase 4. It then compares these gates to those the generator sent. If any gate does not match, we know the generator tried to cheat. For evaluation circuits, it finds the point and permute bit and produces the value $GInx_{i,j}$ (this is needed for future hybrids).
\begin{lemma}
$Hybrid4^{(B)}(k, x; r) \approx Hybrid5^{(B)}(k, x; r)$
\end{lemma}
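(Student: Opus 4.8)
The plan is to show that the partial-input-gate machinery introduced in $Hybrid5^{(B)}$ is nothing more than a faithful, in-the-head reproduction of the honest cloud's phase-4 computation, so that B*'s view is distributed identically up to negligible hash-failure events. First I would observe that the only externally observable consequence of the new step is a possible abort: for check circuits the experiment regenerates each partial input gate and aborts on a mismatch, whereas for evaluation circuits it merely computes the internal labels $GInx_{i,j}$, which are never transmitted to B* and only feed later hybrids. It therefore suffices to argue (i) that the experiment possesses exactly the inputs the real cloud uses in the check, and (ii) that the abort predicate is the same deterministic function of B*'s messages in both worlds.

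For (i), recall that during $Hybrid1^{(B)}$ the experiment already fixed the check/evaluation split and, for every check circuit, recovered $CSeed_i$ from the cut-and-choose OT. Since the saved wire values $POut0_{i,j}$ and $POut1_{i,j}$ are held from the previous execution, and $R_i$ together with the permuted truth-table entries $TT0_{i,j}, TT1_{i,j}$ and the bit locations are sent by B* in phase 4, the experiment has precisely the quantities the honest cloud consumes: it recomputes $t0 = hash(POut0_{i,j} \oplus R_i)$ and $t1 = hash(POut1_{i,j} \oplus R_i)$, reruns $setPPBitGen$ seeded from $CSeed_i$, and reconstructs each gate. Because this is bit-for-bit the same function the cloud evaluates in the real protocol, the comparison succeeds or fails on exactly the same transcripts, which gives (ii). The computation of $GInx_{i,j}$ on evaluation circuits is likewise a deterministic internal step invisible to B*, so it cannot contribute any distinguishing advantage.

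The delicate point, and the step I expect to be the main obstacle, is ruling out a selective-failure-style distinguisher: a malicious B* might try to craft gates that pass on circuits it hopes are checks while deviating elsewhere, using the abort signal as a covert channel. I would dispel this by invoking the sS13 cut-and-choose already established in $Hybrid1^{(B)}$, under which B* never learns which circuits are check versus evaluation; hence the abort event is independent of anything B* could correlate with the evaluator's hidden wire value, and an incorrect gate that slips through on an evaluation circuit is absorbed by the phase-6 majority vote. The abort thus depends only on B*'s messages and on the split that is hidden from it, exactly as in the real world. The remaining gap is purely the negligible probability that $setPPBitGen$ fails to isolate a differing bit of the wire pair or that a hash collision lets a malformed gate reconstruct identically; bounding these by the collision resistance of $hash$, as in the earlier garbled-circuit-generation claim, closes the argument and yields $Hybrid4^{(B)}(k, x; r) \approx Hybrid5^{(B)}(k, x; r)$.
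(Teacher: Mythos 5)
Your proposal is correct and follows essentially the same approach as the paper's proof: the experiment already holds the $POut$ values, the split and $CSeed_i$ from Hybrid1, and receives $R_i$, the truth tables, and bit locations from B*, so it can reproduce the cloud's phase-4 gate regeneration, checks, and $GInx_{i,j}$ computation exactly, making the two hybrids' views identical. Your added discussion of selective-failure (neutralized because the cut-and-choose split is hidden from B*) and of negligible hash/collision events is a more careful elaboration than the paper's brief argument, but it is the same underlying route.
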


\begin{proof}
The experiment already has all the $POut$ values from previous hybrids, and receives \\$R_i, TT0_{i,j},TT1_{i,j}$,  and bit location ($setPPBitGen$) from B*. It also has $CSeed_i$ from Hybrid1 - thus, it can generate the partial input gates as described in phase 4, and perform all the necessary checks. Since we also know the split from Hybrid1, we can also generate $GInx_{i,j}$ values for the evaluation circuits (which are then entered into the garbled circuit by the cloud in the real version of the protocol).
\end{proof}

\noindent
{\bf Simulating the output check}
$Hybrid6^{(B)}(k, x; r)$: This experiment is the same as $Hybrid5^{(B)}(k,x;r)$ except that during the output phase the experiment prepares the result received from the trusted third party as the output instead of the output from the circuit.

\begin{lemma}
$Hybrid5^{(B)}(k, x; r) \approx Hybrid6^{(B)}(k, x; r)$
\end{lemma}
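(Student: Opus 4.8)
The plan is to show that replacing the circuit's output with the trusted oracle's output is undetectable to $B^*$, by arguing that the two plaintext outputs coincide with overwhelming probability and that the experiment holds exactly the material needed to repackage the oracle output into the ciphertext-and-MAC format an honest cloud would send. First I would invoke the security of the underlying garbled-circuit scheme: by $Hybrid4^{(B)}$ the experiment has already recovered $B^*$'s full input $b^*$ and verified its consistency across the evaluation circuits, and it knows the honest evaluator's input, so it can query the trusted party and obtain precisely the value a correctly evaluated circuit outputs. Because the cut-and-choose guarantees that a majority of evaluation circuits are correct, the per-bit majority vote reproduces this same value, so the circuit output and the oracle output agree except with the negligible probability already absorbed in the earlier hybrids.

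Next I would observe that the recovered input $b^*$ contains the generator's appended one-time-pad key $outKey_{gen}$ and its $K$-bit MAC secret key, since phase 1 extends both parties' inputs with exactly these values. The experiment can therefore form $out_{gen} || MAC(out_{gen})$ from the oracle output $out_{gen}$ and encrypt it under $outKey_{gen}$, producing a message identically distributed to the one the honest cloud transmits in $Hybrid5^{(B)}$; when $B^*$ decrypts with its own pad and recomputes the MAC, the verification succeeds exactly as in the real execution. Simulating the per-bit majority vote is immediate because the experiment controls all evaluation circuits, and everything else in $Hybrid6^{(B)}$ is literally identical to $Hybrid5^{(B)}$.

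The main obstacle is ruling out any discrepancy arising from the fact that $B^*$, as the generator, knows the internal garbled wire labels and circuit seeds it produced and might hope to detect that the returned output was not computed by its own gates. I would address this by noting that the cloud returns only the one-time-pad-encrypted string together with its MAC, and that this string is a deterministic function of $outKey_{gen}$, the MAC key, and the output value---all of which the experiment can reproduce from $b^*$ and the oracle. Since the oracle output equals the circuit output and the pad and MAC keys are fixed, the view $B^*$ receives is distributed identically in the two experiments, which establishes the lemma; the residual work is the routine bookkeeping that no other message changes between the hybrids.
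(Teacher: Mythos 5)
Your proposal is correct and follows essentially the same route as the paper's proof: both argue that the trusted-party output and the circuit output coincide (via garbled-circuit security and the cut-and-choose majority), that the experiment holds the generator's one-time-pad and MAC keys (recovered as part of $b^*$, which phase 1 appends to the input) and can therefore repackage the oracle output into the exact ciphertext-and-MAC message an honest cloud would send, and that simulating the majority vote is trivial. Your version merely spells out more explicitly where the keys come from and why $B^*$'s knowledge of wire labels does not help it distinguish, which the paper compresses into its reference to the earlier hybrids.
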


\begin{proof}
Based on the security of garbled circuits, the trusted third party output and the circuit output will be indistinguishable when provided with the input of B*, $b^*$. Since we have the k-bit secret key and know the output, we can produce the necessary MAC under the one time pad (similar to Hybrid1). Simulating a majority vote is trivial; the experiment then decrypts the ciphertext using its OTP key, and then performs a MAC on the output, and compares with the MAC produced within the garbled circuit. This, again, is indistinguishable from the circuit version to B*, since we know the appropriate OTP key.
\end{proof}

\subsubsection{Malicious Cloud}

Here, both the generator and the evaluator participate honestly in the protocol.

\noindent
{\bf Replacing inputs for the OOT}
$Hybrid1^{(C)}(k, x; r)$: This experiment is the same as $REAL^{(C)}(k, x; r)$ except that during the OOT, the experiment replaces A's input $ea$ with a string of zeros $ea' = 0^{2 \cdot l \cdot n}$. This value is then used to select garbled input values from B in the OOT, which are then forwarded to C* according to the protocol.
\begin{lemma}
$REAL^{(C)}(k, x; r) \approx Hybrid1^{(C)}(k, x; r)$
\end{lemma}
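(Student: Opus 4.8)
The plan is to treat this lemma as the opening move in constructing the simulator $Sim_C$ for a malicious cloud $C^*$: since the cloud contributes no input to $f$ and (in the ideal model) receives no output, $Sim_C$ must reproduce $C^*$'s entire view using only the honest parties' protocol interfaces and the trusted party, never relying on the honest generator's or evaluator's true inputs. The first step toward this is to argue that the generator's genuine contribution $ea$ to the outsourced oblivious transfer may be swapped for the all-zero string $ea'=0^{2\cdot l\cdot n}$ without $C^*$ detecting the change. The intuition is that in the OOT the cloud sits strictly downstream of the base OT between $A$ and $B$: it never participates in choosing values, it only receives the forwarded, already-selected garbled input labels together with the OT-extension transcript.

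First I would pin down exactly what $C^*$ observes during the OOT, namely (i) the messages of the base and extension OT that are routed through or visible to the cloud, and (ii) the selected garbled input labels that $A$ forwards after using $ea$ to choose among the values offered by $B$. I would then reduce indistinguishability to the sender/receiver privacy of the underlying OOT primitive that we import from CMTB: by that privacy guarantee, the transcript and the forwarded labels are computationally independent of the selecting string, so their joint distribution is the same whether $A$ selects with its real $ea$ or with $ea'=0^{2\cdot l\cdot n}$. Concretely, I would show that any PPT distinguisher separating $REAL^{(C)}(k,x;r)$ from $Hybrid1^{(C)}(k,x;r)$ can be converted into an adversary that breaks the privacy of the base OT (equivalently, one that distinguishes the two families of selected outputs), contradicting the assumed security of the CMTB OOT; since every other message in $Hybrid1^{(C)}$ is produced exactly as in $REAL^{(C)}$, this yields $REAL^{(C)}(k,x;r)\approx Hybrid1^{(C)}(k,x;r)$.

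The hard part will be ruling out an indirect distinguishing attack: I must make sure that no value revealed to $C^*$ elsewhere in the protocol is cross-checkable against the zeroed OOT output. In particular the garbled input labels the cloud later feeds into the evaluation circuits, the point-and-permute bits, and the input-wire commitments opened during the cut-and-choose could in principle let $C^*$ test consistency and notice that the selected labels were generated from $ea'$ rather than $ea$. The argument must therefore confine itself to the claim that, from $C^*$'s vantage point, the forwarded labels are pseudorandom outputs of the OT whose distribution does not depend on the selecting input, and it must lean on the non-collusion assumption so that $C^*$ cannot obtain the generator's private randomness that would otherwise link $ea$ to those labels. Subsequent hybrids will be responsible for re-aligning the circuit and commitments with the substituted input, so for this lemma I would keep the claim local to the OOT transcript and invoke $S_{OT}$ only where the cloud's view of the transfer is concerned.
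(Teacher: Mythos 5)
Your proposal reaches the right conclusion but by a different route than the paper. The paper does not invoke the OOT's privacy as a black-box property; it opens up the CMTB construction and argues directly about the three things $C^*$ actually sees during the transfer: the random matrix $T$ from the Ishai extension (identically distributed in both experiments), the encrypted commitment-key pairs $Y$ (the unselected halves are masked by the hash $H(j,s)$ and hence computationally random without $B$'s secret $s$), and the blinded select string $ea \oplus p$ (a one-time-pad argument, since $p$ is uniform). Your version black-boxes exactly this analysis as ``privacy of the OOT against the third party'' and converts a distinguisher into an adversary against that primitive; this is sound and more modular, but note that the property you cite is essentially what the paper's inline argument is proving, so a careful write-up would still have to unpack the CMTB messages as above (in particular, the cloud is not purely passive downstream of the base OT: it receives $ea \oplus p$ and itself decrypts the selected entries of $Y$, so the simulator must emit that blinded string, and the pad masking is what makes this possible). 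Your extra care about cross-checkable values (commitments, point-and-permute bits) and your decision to defer realignment of downstream values to later hybrids is a genuine improvement over the paper's terse ``the rest of the protocol follows $REAL^{(C)}$ exactly.'' One slip to correct: $ea$ is the \emph{evaluator's} encoded select-bit string, not the generator's contribution --- in the OOT the generator offers the garbled labels and the evaluator selects; you have the roles of selector $A$ and offerer $B$ right, but the name wrong (a confusion the paper itself invites, since its setup declares $A$ to be the generator while its proofs use $A$ for the evaluator).
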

\begin{proof}
In a real execution, C* will observe the random matrix T, the encrypted commitment keys Y, and A's
input XOR'd with the permutation string $ea \oplus p$. Since $p$ is random, $ea \oplus p$ is indistinguishable from $p$. Since T is randomly generated in both $REAL^{(C)}(k, x; r)$ and $Hybrid1^{(C)}(k, x; r)$, they are trivially indistinguishable. Considering the output pairs Y, half of the commitment keys (those not selected by A) will consist of values computationally indistinguishable from random (since they are XOR'd with a hash), and the keys can only be recovered if C* can find a collision with the hash value $H(j, s)$ without having B's random value s. Thus, C* cannot distinguish an execution of OOT with A's input $ea$ and the simulator's input replacement $ea'$. Since the rest of the protocol follows $REAL^{(C)}(k, x; r)$ exactly, this proves the lemma.
\end{proof}

\noindent
{\bf Replacing inputs for the consistency check}\\
$Hybrid2^{(C)}(k, x; r)$: This experiment is the same as \\$Hybrid1^{(C)}(k, x; r)$ except that the experiment replaces B's input b with all zeros $0^{2 \cdot n}$. This value is then prepared and checked according to the protocol for consistency across evaluation circuits.

\begin{lemma}
$Hybrid1^{(C)}(k, x; r) \approx Hybrid2^{(C)}(k, x; r)$
\end{lemma}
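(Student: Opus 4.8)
The plan is to show that the single change introduced in $Hybrid2^{(C)}$---substituting the honest party's true input $b$ with $0^{2 \cdot n}$ before the generator's input-consistency check---yields a view for $C^*$ that is computationally indistinguishable from its view in $Hybrid1^{(C)}$. First I would isolate precisely which messages observed by $C^*$ depend on this input. The affected values are the garbled input wire labels $garbledGenInput_i$ that $C^*$ holds for the evaluation circuits, the universal-hash values $t_i = UHF(garbledGenInput_i)$ that $C^*$ computes and compares during phase 3, and, downstream, the gates that $C^*$ subsequently evaluates using these labels together with the encrypted circuit outputs. Everything else in the two experiments is identical by construction, so it suffices to bound the distinguishing advantage arising from these values alone.

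Next I would argue indistinguishability of the consistency-check view directly from the security of the CMTB garbling scheme. Each $garbledGenInput_i$ is a vector of wire labels that, to a party ignorant of the label-to-bit correspondence, is computationally indistinguishable from a freshly sampled random string; since $C^*$ never receives the semantics of these labels, it cannot tell whether a given label encodes a real bit or a $0$. Hence the distributions of labels for the true input and for $0^{2 \cdot n}$ are indistinguishable, and any distinguisher would yield an attack on the garbling scheme. Moreover, because the generator is honest its input is consistent across all circuits in both experiments, so the comparison $t_0 = t_1 = \dots = t_{N-1}$ succeeds identically and the abort behavior is unchanged; the check therefore leaks nothing that separates the two experiments. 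This step mirrors $Hybrid4^{(A)}$, and I would reuse the sS13 consistency-check argument, changing only the party being simulated.

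The main obstacle is the downstream coupling: the substituted labels are not confined to the hash check but are also fed into circuit evaluation, and combined with the substitution already made in $Hybrid1^{(C)}$ the circuit is effectively evaluated on $(0,0)$. I must argue this causes no observable discrepancy for $C^*$. The key fact is that $C^*$ never obtains the cleartext output: every output wire is masked by a one-time pad (and authenticated by an in-circuit MAC) whose key is held only by the generator and evaluator, so the values $C^*$ forwards are, from its view, pseudorandom and carry no information about the underlying input. Thus the garbled gates it evaluates and the encrypted outputs it sees are indistinguishable whether the garbled inputs encode the real inputs or zeros, again by the security of the garbling scheme and the pad. Assembling these observations---label pseudorandomness, identical check and abort behavior for an honest, consistent generator, and output hiding via the one-time pad---gives $Hybrid1^{(C)}(k, x; r) \approx Hybrid2^{(C)}(k, x; r)$, completing the reduction to primitives already assumed secure.
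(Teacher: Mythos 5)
Your proposal is correct and follows essentially the same route as the paper: the heart of both arguments is that, by the security of the garbling scheme, the garbled input wire labels observed by $C^*$ carry no label-to-bit semantics, so the labels encoding $b$ and those encoding $0^{2 \cdot n}$ are computationally indistinguishable. You additionally spell out two points the paper leaves implicit---that the honest generator's input is consistent in both hybrids so the check and abort behavior are identical, and that downstream evaluation outputs are hidden from $C^*$ by the one-time pad---but these are elaborations of the paper's argument rather than a different approach.
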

\begin{proof}
In this hybrid, C* observes a set of garbled input wire values from B. Based on the security of garbled circuits, observing one set of garbled input wire values is indistinguishable from observing any other set of input wire values, such that C* cannot distinguish between the garbled input for b and the garbled input for $0^{2 \cdot n}$. The rest of the hybrid is the same as $Hybrid1^{(C)}(k, x; r)$.
\end{proof}

\noindent
{\bf Partial gates}
$Hybrid3^{(C)}(k, x; r)$: This experiment is the same as $Hybrid2^{(C)}(k, x; r)$ except that the experiment sends the appropriate truth table information, $R_i$, $POut$ values, and the bit location from $setPPBitGen$ to C*.

\begin{lemma}
$Hybrid2^{(C)}(k, x; r) \approx Hybrid3^{(C)}(k, x; r)$
\end{lemma}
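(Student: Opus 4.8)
The plan is to show that the partial-input-gate data added to C*'s view in this hybrid---the permuted truth tables $TT0_{i,j}, TT1_{i,j}$, the transformation values $R_i$, the $POut$ labels, and the bit locations from \emph{setPPBitGen}---is distributed independently of the substituted inputs, so that disclosing it cannot help C* distinguish the two experiments. First I would split the analysis into check circuits and evaluation circuits, since the cloud's view of a partial input gate differs between the two.

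For the check circuits, I would observe that the cloud is handed $R_i$ together with both saved labels $POut0_{i,j}$ and $POut1_{i,j}$ and regenerates each gate from $CSeed_i$ exactly as in phase~4. Since the experiment already holds all of these quantities from its simulation of the prior execution, it produces precisely the gates an honest generator would, and the cloud's regeneration matches; no secret is embedded in a check gate, so indistinguishability here is immediate.

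For the evaluation circuits, the core step is to argue that each transmitted truth-table entry looks uniformly random. I would use the fact that the garbled input labels $GInx_{i,j}$ are fresh uniform values under $CKey_i$ chosen by the experiment and that each entry satisfies $TTx_{i,j} = GInx_{i,j} \oplus PInx_{i,j}$; because $GInx_{i,j}$ is uniform and independent, $TTx_{i,j}$ is effectively a one-time-pad ciphertext and carries no information about the bit it encodes beyond the point-and-permute position that is already permitted to leak. The value $R_i$ is independently uniform in both experiments, and each $POutx_{i,j}$ is a single garbled output label that, by the security of the underlying CMTB garbled circuits, is indistinguishable from random and independent of the plaintext.

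The hard part, I expect, will be ruling out that the combination of $R_i$ and the single label $POutx_{i,j}$ lets C* recover the other saved label of the pair and thereby detect the zero-input substitutions made in $Hybrid1^{(C)}(k,x;r)$ and $Hybrid2^{(C)}(k,x;r)$. To close this I would appeal to the fact that every output gate of the prior execution is encrypted: by garbled-circuit security the cloud never obtains the unselected output label, and since $R_i$ is XORed identically into both labels before hashing it gives no additional leverage. With this established, C*'s view of the partial input gates is independent of whether the real inputs or the substituted zeros were used, yielding $Hybrid2^{(C)}(k,x;r) \approx Hybrid3^{(C)}(k,x;r)$.
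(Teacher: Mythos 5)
Your proposal is correct and takes essentially the same route as the paper: the paper's own proof is a one-line appeal noting that partial input gates are operationally identical to ordinary garbled gates, so garbled-circuit security makes one set of garbled wire and truth-table values indistinguishable from any other, and hence C* cannot distinguish the garbled values for $b$ from those for $0^{2 \cdot n}$. Your check/evaluation split, the one-time-pad view of the truth-table entries, and the argument that the unselected $POut$ label stays hidden are a more detailed unpacking of that same appeal rather than a different argument.
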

\begin{proof}
In this hybrid, C* observes a set of garbled wire and truth table values from B. Since these partial gates are identical to garbled circuit gates as far as operation is concerned, the security of garbled circuits ensures that observing one set of garbled input wire values is indistinguishable from observing any other set of wire values, such that C* cannot distinguish between the garbled value for b and the garbled value for $0^{2 \cdot n}$. The rest of the hybrid is the same as $Hybrid2^{(C)}(k, x; r)$.
\end{proof}

\noindent
{\bf Checking the output}

$Hybrid4^{(C)}(k, x; r)$: This experiment is the same as \\$Hybrid3^{(C)}(k, x; r)$ except that after the circuit is evaluated, the experiment checks that the result output by C* is as expected; otherwise, the experiment immediately aborts.

\begin{lemma}
$Hybrid3^{(C)}(k, x; r) \approx Hybrid4^{(C)}(k, x; r)$
\end{lemma}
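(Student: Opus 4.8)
The plan is to observe that $Hybrid3^{(C)}(k,x;r)$ and $Hybrid4^{(C)}(k,x;r)$ are \emph{identical} in every message exchanged with C*, and differ only in a single internal event: in $Hybrid4^{(C)}$ the experiment aborts whenever the output returned by C* deviates from the value it should have produced, whereas $Hybrid3^{(C)}$ never aborts on this basis. So indistinguishability reduces to bounding the probability of the distinguishing event — namely, that C* returns an incorrect output that nonetheless would \emph{not} be caught by the honest parties' output consistency check in the real protocol. I would show this event occurs only with negligible probability, so that the two experiments abort on exactly the same executions up to a negligible statistical gap, and everything else being syntactically identical, the distributions are close.

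First I would recall the structure of the output from Phase 6: every evaluation circuit emits $out \oplus outputKey$ together with a MAC of $out$ computed \emph{inside} the garbled circuit under a $k$-bit secret key, the whole pair being masked by a one-time pad supplied as a circuit input. Because both the one-time pad and the MAC key enter the computation only as garbled input wires of the honest generator and evaluator, and because C* never learns the wire labels corresponding to the complementary values, C* sees only a value that is information-theoretically uniform (the pad mask) together with labels that, by the security of garbled circuits, reveal nothing about the underlying plaintext. Consequently C* has no information that lets it compute a valid $(out \, \Vert \, \mathrm{MAC})$ ciphertext for any output other than the genuine one.

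Next I would argue the forgery bound. For C* to make the honest verifier accept a modified output $out' \neq out$, it must produce, under the correct one-time pad, a string whose MAC portion is a valid tag for $out'$ under the secret key; since that key is uniform and unknown to C*, the probability of producing such a tag is at most $2^{-k}$ (for an information-theoretic MAC) or otherwise negligible in the security parameter. I would also fold in the cut-and-choose guarantee already established in the circuit-generation claim, namely that a majority of evaluation circuits are honestly generated, so that the ``expected'' output is well defined and equals the genuine circuit output except with negligible probability. Hence the event that C* alters the output and escapes detection has negligible probability, and the abort in $Hybrid4^{(C)}$ coincides with a real-protocol abort except on this negligible set.

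The main obstacle I anticipate is handling the majority vote carefully: a malicious cloud need not report any single circuit's output faithfully, so I must rule out the cloud \emph{mixing} outputs from distinct (possibly corrupted) evaluation circuits, or suppressing honest ones, to shift the majority without touching a check circuit. I would close this by invoking the claim that at most a minority of evaluation circuits can be corrupted without detection, so the honest majority fixes the correct $out$, and any reported deviation still forces the cloud to forge a MAC under the unknown key — returning the argument to the negligible forgery bound above. With that in place, the abort conditions of the two experiments agree except with negligible probability and no other message distribution changes, which establishes $Hybrid3^{(C)}(k,x;r) \approx Hybrid4^{(C)}(k,x;r)$.
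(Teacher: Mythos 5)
Your proposal takes essentially the same approach as the paper's proof: both reduce the indistinguishability of the two hybrids to the fact that the experiment holds the one-time pad and MAC key from earlier hybrids, so the added abort condition in $Hybrid4^{(C)}$ coincides with the real protocol's output consistency check (decrypt under the OTP, recompute the MAC, compare) except when C* successfully forges a MAC under an unknown key, which happens with negligible probability. Your write-up simply fills in details the paper's two-sentence proof leaves implicit, such as the explicit forgery bound and the handling of the majority vote across evaluation circuits.
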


\begin{proof}
After the cloud sends the output (under the OTP from phase 1) to us, we can decrypt, since we have the OTP from previous hybrids. We then perform a MAC and compare with the MAC calculated within the garbled circuit to verify that C* did not modify the output.
\end{proof}

\section{Performance Evaluation}
\label{sec:experiments}

We now demonstrate the efficacy of PartialGC through a comparison with the CMTB
outsourcing system. Apart from the performance gains from using cut-and-choose
from sS13, PartialGC provides other benefits through generating partial input
values after the first execution of a program. On subsequent executions, the
partial inputs act to amortize overall costs of execution and bandwidth.

We demonstrate that the evaluator in the system can be a mobile device
outsourcing computation to a more powerful system. We also show that other
devices, such as server-class machines, can act as evaluators, to show the
generality of this system. Our testing environment includes a 64-core server
containing 1 TB of RAM, which we use to model both the Generator and Outsourcing
Proxy parties. We run separate programs for the Generator and Outsourcing Proxy,
giving them each 32 threads. For the evaluator, we use a Samsung Galaxy Nexus
phone with a 1.2 GHz dual-core ARM Cortex-A9 and 1 GB of RAM running Android
4.0, connected to the server through an 802.11 54 Mbps WiFi in an isolated
environment. In our testing we also use a single server process as the evaluator. For these tests we create that process on our 64-core server as well. We ran the CMTB
implementation for comparison tests under the same setup.

\subsection{Execution Time}

The PartialGC system is particularly well suited to complex computations that
require multiple stages and the saving of intermediate state. Previous garbled
circuit execution systems have focused on single-transaction evaluations, such
as computing the ``millionaires'' problem (i.e., a joint evaluation of which
party inputs a greater value without revealing the values of the inputs) or
evaluating an AES circuit.

Our evaluation considers two comparisons: the improvement of our system
compared with CMTB without reusing saved values, and comparing our protocol
for saving and reusing values against CMTB if such reuse was implemented in
that protocol. We also benchmark the overhead for saving and loading values
on a per-bit basis for 256 circuits, a necessary number to achieve a
security parameter of $2^{-80}$ in the malicious model. In all cases, we run
10 iterations of each test and give timing results with 95\% confidence
intervals. Other than varying the number of circuits our system parameters are set for 80-bit security.

The programs used for our evaluation are exemplars of differing input sizes
and differing circuit complexities:

\noindent {\bf Keyed Database:} In this program, one party enters a database
and keys to it while the other party enters a key that indexes into the
database, receiving a database entry for that key. This is an example of a
program expressed as a small circuit that has a very large amount of
input.

\noindent {\bf Matrix Multiplication:} Here, both parties enter 32-bit
numbers to fill a matrix.  Matrix multiplication is performed before the
resulting matrix is output to both parties. This is an example of a program with a large amount of inputs with a large circuit.

\noindent {\bf Edit (Levenstein) Distance:} This program finds the distance
between two strings of the same length and returns the difference. This is an
example of a program with a small number of inputs and a medium sized circuit.

\noindent {\bf Millionaires:} In this classic SFE program, both parties
enter a value, and the result is a one-bit output to each party to let them
know whether their value is greater or smaller than that of the other party.
This is an example of a small circuit with a large amount of input.

\begin{table*}[t]
\centering
\small
\begin{tabular}{ |c | c | c| c| c| c| c| c| c|c| }
\hline
 & CMTB & PartialGC  \\ \hline

KeyedDB 64 & 6,080 & 20,891  \\\hline
KeyedDB 128 & 12,160 & 26,971 \\\hline
KeyedDB 256 & 24,320 & 39,131  \\\hline
MatrixMult8x8 & 3,060,802 & 3,305,113 \\\hline
Edit Distance 128 & 1,434,888  & 1,464,490  \\\hline
Millionaires 8192 &  49,153 & 78,775  \\\hline
LCS Incremental 128  & 4,053,870  & 87,236  \\\hline
LCS Incremental 256  & 8,077,676  & 160,322  \\\hline
LCS Incremental 512  & 16,125,291  & 306,368  \\\hline
LCS Full 128  & 2,978,854  & -  \\\hline
LCS Full 256  & 13,177,739  & -  \\\hline
\multicolumn{3}{c}{}
\end{tabular}
\caption{Non-XOR gate counts for the various circuits. In the first 6
circuits, the difference between CMTB and PartialGC gate counts is in the
consistency checks. The explanation for the difference in size between the incremental versions of longest common substring (LCS) is given in  {\it Reusing Values}.}
\label{table:gatecounts}
\end{table*}

Gate counts for each of our programs can be found in
Table~\ref{table:gatecounts}. The only difference for the programs described
above is the additional of a MAC function in PartialGC. We discuss the reason for
this check in Section~\ref{sub:dis}.


\begin{table*}[t]
\centering
\small
\begin{tabular}{ |c | c | c| c| c| c| c| c| c|c| }
\hline
  & \multicolumn{3}{|c|}{ 16 Circuits}  & \multicolumn{3}{|c|}{ 64 Circuits}  & \multicolumn{3}{|c|}{ 256 Circuits}  \\
\hline
 & CMTB & PartialGC &  & CMTB & PartialGC&  & CMTB & PartialGC &  \\ \hline

KeyedDB 64 & 18 $\pm$ 2\% & 3.5 $\pm$ 3\% & 5.1x &72 $\pm$ 2\% & 8.3 $\pm$ 5\% & 8.7x &290 $\pm$ 2\% & 26 $\pm$ 2\% & 11x \\\hline
KeyedDB 128 & 33 $\pm$ 2\% & 4.4 $\pm$ 8\% & 7.5x &140 $\pm$ 2\% & 9.5 $\pm$ 4\% & 15x &580 $\pm$ 2\% & 31 $\pm$ 3\% & 19x \\\hline
KeyedDB 256 & 65 $\pm$ 2\% & 4.6 $\pm$ 2\% & 14x &270 $\pm$ 1\% & 12 $\pm$ 6\% & 23x &1200 $\pm$ 3\% & 38 $\pm$ 5\% & 32x \\\hline
MatrixMult8x8 & 48 $\pm$ 4\% & 46 $\pm$ 4\% & 1.0x &110 $\pm$ 8\% & 100 $\pm$ 7\% & 1.1x &400 $\pm$ 10\% & 370 $\pm$ 5\% & 1.1x \\\hline
Edit Distance 128  & 21 $\pm$ 6\% & 22 $\pm$ 3\% & 0.95x &47 $\pm$ 7\% & 50 $\pm$ 9\% & 0.94x &120 $\pm$ 9\% & 180 $\pm$ 6\% & 0.67x \\\hline
Millionaires 8192 & 35 $\pm$ 3\% & 7.3 $\pm$ 6\% & 4.8x &140 $\pm$ 2\% & 20 $\pm$ 2\% & 7.0x &580 $\pm$ 1\% & 70 $\pm$ 2\% & 8.3x \\\hline
\multicolumn{10}{c}{}

\end{tabular}
\caption{Timing results comparing PartialGC to CMTB without saving any values. All times in seconds.}
\label{table:keyeddbphoneserver}
\end{table*}

Table~\ref{table:keyeddbphoneserver} shows the results from our experimental
tests. In the best case, execution time was reduced by a factor of 32 over
CMTB, from 1200 seconds to 38 seconds, a 96\% speedup over CMTB. Ultimately,
our results show that our system outperforms CMTB when the input checks are
the bottleneck. This run-time improvement is due to improvements we added
from sS13 and occurs in the keyed database, millionaires, and matrix
multiplications programs. In the other program, edit distance, the input
checks are not the bottleneck and PartialGC does not outperform CMTB. The
total run-time increase for the edit distance problem is due to overhead of
using the new sS13 OT cut-and-choose technique which requires sending each
gate to the evaluator for check circuits and evaluation circuits. This is
discussed further in Section~\ref{sub:dis}.  The typical use case we imagine
for our system, however, is more like the keyed database program, which has
a large amount of inputs and a very small circuit. We expand upon this use
case later in this section.

\noindent{\bf Reusing Values}

For a test of our system's wire saving capabilities we tested a dynamic programming problem, longest common substring, in both PartialGC and CMTB. This program determines the length of the longest common substring between two strings. Rather than use a single
computation for the solution, our version incrementally adds a single bit of input to both strings each time the computation is run and outputs the results each time to the evaluator. We believe this is a realistic comparison to a real-world application that incrementally adds data during each computation where it is faster to save the intermediate state and add to it after seeing an intermediate result than rerun the entire computation many times after seeing the result. 

For our testing, PartialGC uses our technique to reuse wire values. In CMTB, we
save each desired internal bit under a one-time pad and re-enter them into the
next computation, as well as the information needed to decrypt the ciphertext.
We use a MAC (the AES circuit of KSS12) to verify that the party saving the output
bits did not modify them. We also use AES to generate a one-time pad inside the
garbled circuit. We use AES as this is the only cryptographically secure
function used in CMTB. Both parties enter private keys to the MAC functions.
This program is labeled {\it CMTB-Inc}, for CMTB {\it incremental}. The size of
this program represents the size of the total strings. We also created a circuit
that computes the complete longest common substring in one computation labeled
{\it CMTB-Full}.

\begin{figure}[t]
\centering
\includegraphics[width=3in]{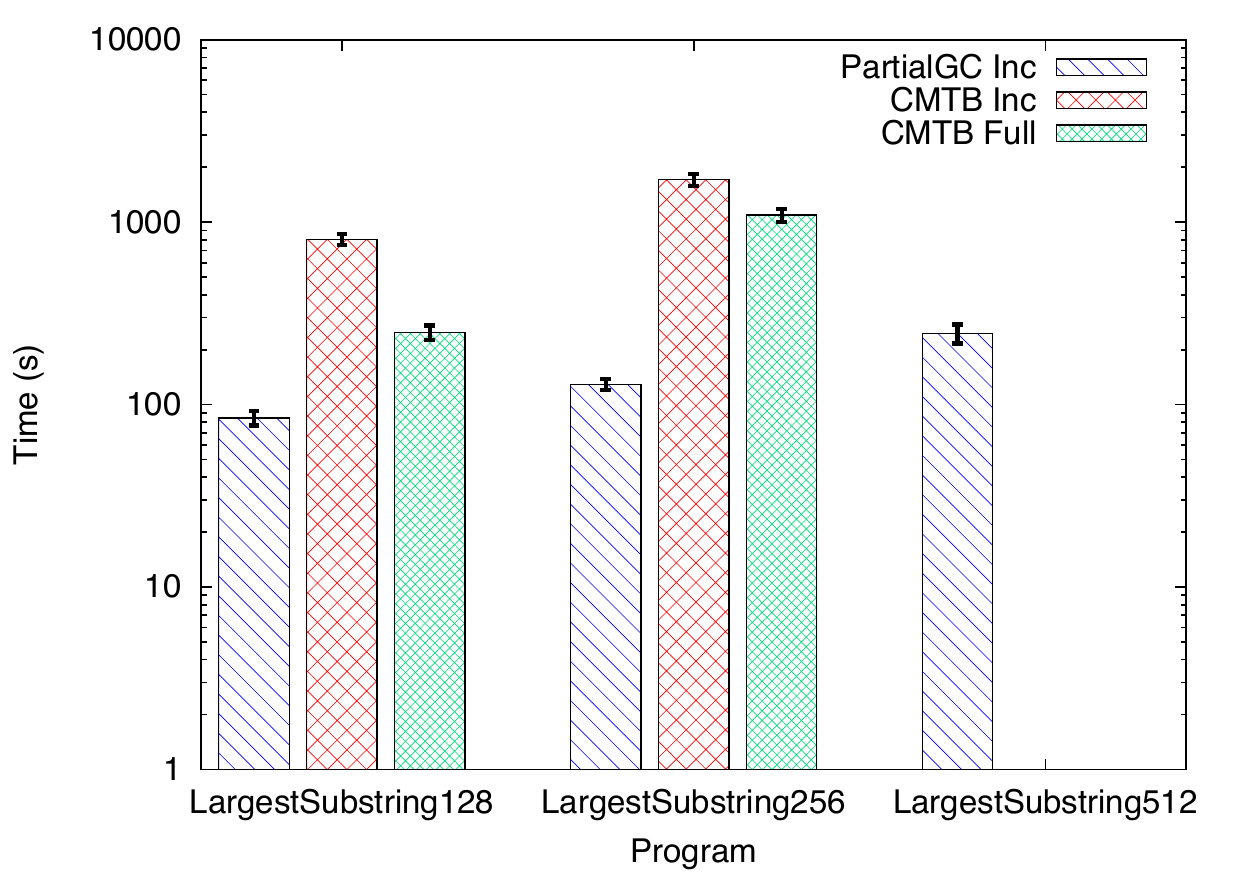}
\caption{Results from testing our largest common substring (LCS) programs for PartialGC and CMTB. This shows when changing a single input value is more efficient under PartialGC than either CMTB program. CMTB crashed on running LCS Incremental of size 512 due to memory requirements.  We were unable to complete the compilation of CMTB Full of size 512.}
\label{fig:largesub}
\end{figure}

The resulting size of the PartialGC and CMTB circuits are shown in
Table~\ref{table:gatecounts}, and the results are shown in
Figure~\ref{fig:largesub}. This result shows that saving and reusing values in
PartialGC is more efficient than completely rerunning the computation. The input
consistency check adds considerably to the memory use on the phone for {\em
CMTB-Inc} and in the case of input bit 512, the {\em CMTB-Inc} program will not
complete. In the case of the 512-bit {\em CMTB-Full}, the program would not
complete compilation in over 42 hours. In our {\em CMTB-Inc} program, we assume
the cloud saves the output bits so that multiple phones can have a shared
private key. 

Note that the growth of {\em CMTB-Inc} and {\em CMTB-Full} are
different. {\em CMTB-Full} grows at a larger rate (4x for each 2x factor
increase) than {\em CMTB-Inc} (2x for each 2x factor increase), implying
that although at first it seems more efficient to rerun the program if small
changes are desired in the input, eventually this will not be the case. Even with a more efficient AES function, {\em CMTB-Inc} would not be faster as the bottleneck is the input, not the size of the circuit.






The overhead of saving and reusing values is discussed further in Appendix~\ref{appendix:reuse}.

\begin{table*}[t]
\centering
\small
\begin{tabular}{ |c | r | r| r| c| c| c| c| c| c| c| c| c|}
\hline
  & \multicolumn{3}{|c|}{ 16 Circuits}  & \multicolumn{3}{|c|}{ 64 Circuits}  & \multicolumn{3}{|c|}{ 256 Circuits}  \\
\hline
 & CMTB & PartialGC &  & CMTB & PartialGC&  & CMTB & PartialGC &  \\ \hline

KeyedDB 64 & 6.6 $\pm$ 4\% & 1.4 $\pm$ 1\% & 4.7x &27 $\pm$ 4\% & 5.1 $\pm$ 2\% & 5.3x &110 $\pm$ 2\% & 24.9 $\pm$ 0.3\% & 4.4x \\\hline
KeyedDB 128 & 13 $\pm$ 3\% & 1.8 $\pm$ 2\% & 7.2x &54 $\pm$ 4\% & 5.8 $\pm$ 2\% & 9.3x &220 $\pm$ 5\% & 27.9 $\pm$ 0.5\% & 7.9x \\\hline
KeyedDB 256 & 25 $\pm$ 4\% & 2.5 $\pm$ 1\% & 10x &110 $\pm$ 7\% & 7.3 $\pm$ 2\% & 15x &420 $\pm$ 4\% & 33.5 $\pm$ 0.6\% & 13x \\\hline
MatrixMult8x8  & 42 $\pm$ 3\% & 41 $\pm$ 4\% & 1.0x &94 $\pm$ 4\% & 79 $\pm$ 3\% & 1.2x &300 $\pm$ 10\% & 310 $\pm$ 1\% & 0.97x \\\hline
Edit Distance 128 & 18 $\pm$ 3\% & 18 $\pm$ 3\% & 1.0x &40 $\pm$ 8\% & 40 $\pm$ 6\% & 1.0x &120 $\pm$ 9\% & 150 $\pm$ 3\% & 0.8x \\\hline
Millionaires 8192& 13 $\pm$ 4\% & 3.2 $\pm$ 1\% & 4.1x &52 $\pm$ 3\% & 8.5 $\pm$ 2\% & 6.1x &220 $\pm$ 5\% & 38.4 $\pm$ 0.9\% & 5.7x \\\hline
\multicolumn{10}{c}{}
\end{tabular}
\caption{Timing results from outsourcing the garbled circuit evaluation from a single server process. Results in seconds.}
\label{table:toserver2}
\end{table*}

\noindent{\bf Outsourcing to a Server Process}

PartialGC can be used in other scenarios than just outsourcing to a mobile
device. It can outsource garbled
circuit evaluation from a single server process and retain performance
benefits over a single server process of CMTB. For this experiment the outsourcing party has a single
thread. Table~\ref{table:toserver2} displays these results and shows that in
the KeyedDB 256 program, PartialGC has a 92\% speedup over CMTB.
As with the outsourced mobile case, keyed database problems perform
particularly well in PartialGC. Because the computationally-intensive input
consistency check is a greater bottleneck on mobile devices than servers,
these improvements for most programs are less dramatic. In particular, both
edit distance and matrix multiplication programs benefit from higher
computational power and their bottlenecks on a server are no longer input
consistency; as a result, they execute faster in CMTB than in
PartialGC.

\subsection{Bandwidth}

\begin{table}[t]
\centering
\small
\begin{tabular}{ |c | r | r| r| c| c| c| c| c| c| c| c| c|}
\hline
  & \multicolumn{3}{|c|}{ 256 Circuits}  \\
\hline
  & CMTB & PartialGC &  \\ \hline

KeyedDB 64  &64992308 & 3590416 & 18x \\\hline
KeyedDB 128   &129744948 & 3590416 & 36x \\\hline
KeyedDB 256   &259250228 & 3590416 & 72x \\\hline
MatrixMult8x8  &71238860 & 35027980 & 2.0x \\\hline
Edit Distance 128  &2615651 & 4108045 & 0.64x \\\hline
Millionaires 8192 &155377267 & 67071757 & 2.3x \\\hline
\multicolumn{4}{c}{}

\end{tabular}
\caption{Bandwidth comparison of CMTB and PartialGC. Bandwidth counted by instrumenting  PartialGC to count the bytes it was sending and receiving and then adding them together. Results in bytes.}
\label{table:keyeddbphoneband2}
\end{table}

Since the main reason for outsourcing a computation is to save on resources,
we give results showing a decrease in the evaluator's bandwidth. Bandwidth
is counted by making the evaluator to count the number of bytes
PartialGC sends and receives to either server. Our best result gives a 98\%
reduction in bandwidth (see Table~\ref{table:keyeddbphoneband2}). For the
edit distance, the extra bandwidth used in the outsourced oblivious transfer
for all circuits, instead of only the evaluation circuits, exceeds any
benefit we would otherwise have received.

\subsection{Secure Friend Finder}
\label{sec:app}

Many privacy-preserving applications can benefit from using PartialGC to cache
values for state. As a case study, we developed a privacy-preserving friend
finder application, where users can locate nearby friends without any user
divulging their exact location. In this application, many different mobile phone
clients use a consistent generator (a server application) and outsource
computation to a cloud. The generator must be the same for all computations; the
cloud must be the same for each computation. The cloud and generator are two
different parties. After each computation, the map is updated when PartialGC
saves the current state of the map as wire labels. Without PartialGC outsourcing
values to the cloud, the wire labels would have to be transferred directly
between mobile devices, making a multi-user application difficult or impossible.

We define three privacy-preserving operations that comprise the
application's functionality:

\noindent {\bf MapStart} - The three parties (generator, evaluator, cloud)
create a ``blank'' map region, where all locations in the map are blank and
remain that way until some mobile party sets a location to his or her ID.

\noindent {\bf MapSet} - The mobile party sets a single map cell to a new
value. This program takes in partial values from the generator and cloud and
outputs a location selected by the mobile party.

\noindent {\bf MapGet} - The mobile party retrieves the contents of a single map
cell. This program retrieves partial values from the generator and cloud and
outputs any ID set for that cell to the mobile.

In the application, each user using the {\it Secure Friend Finder} has a
unique ID that represents them on the map. We divide the map into ``cells'',
where each cell is a set amount of area. When the user presses ``Set New
Location,'' the program will first look to determine if that cell is
occupied. If the cell is occupied, the user is informed he is near a friend.
Otherwise the cell is updated to contain his user ID and remove his ID
from his previous location. We assume a maximum of 255 friends in our
application since each cell in the map is 8 bits. 

\begin{figure}[t]
\centering
\includegraphics[width=3in]{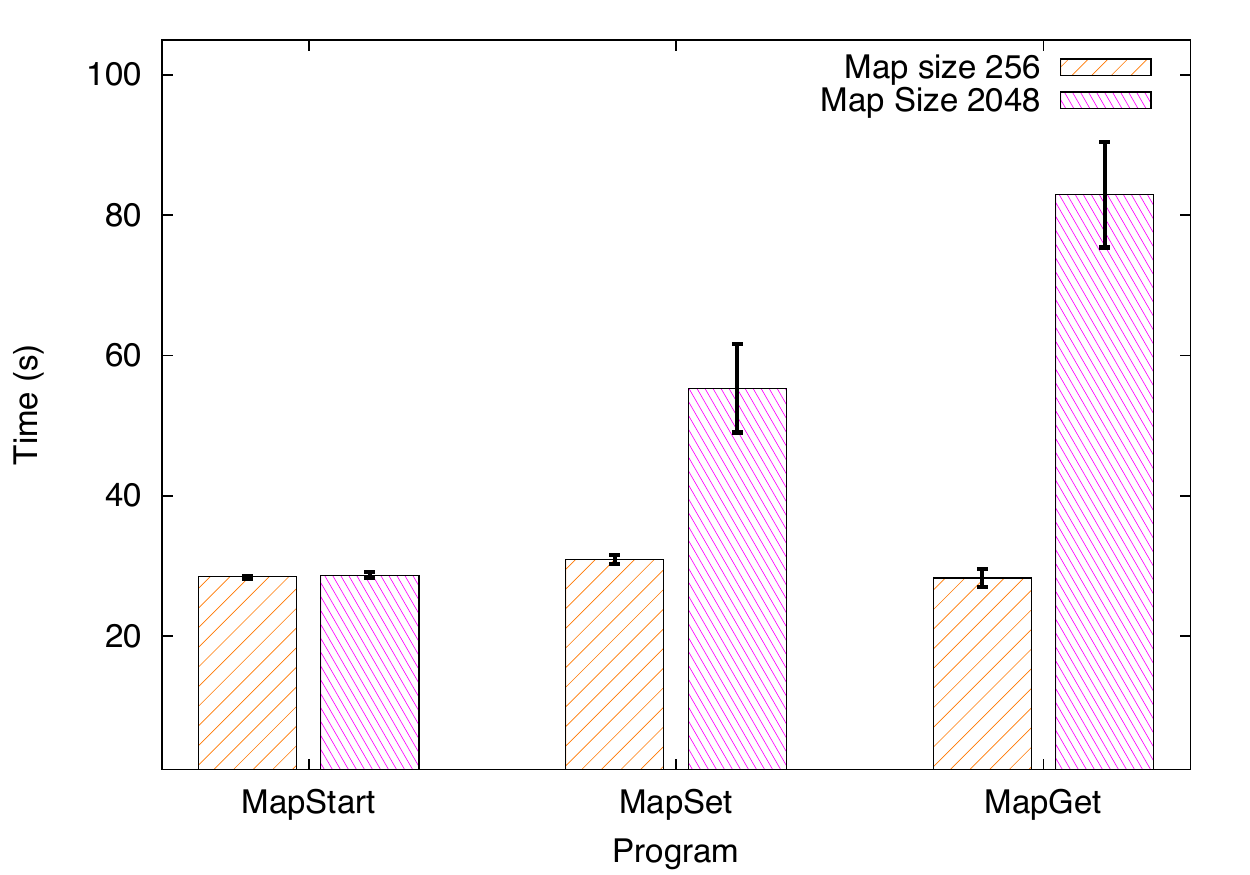}
\caption{Run time comparison of our map programs with two different map sizes.}
\label{fig:mapprogs}
\end{figure}

\begin{figure}[t]
  \centering
  \subfloat[Location selected.]{
\includegraphics[width=0.22\textwidth]{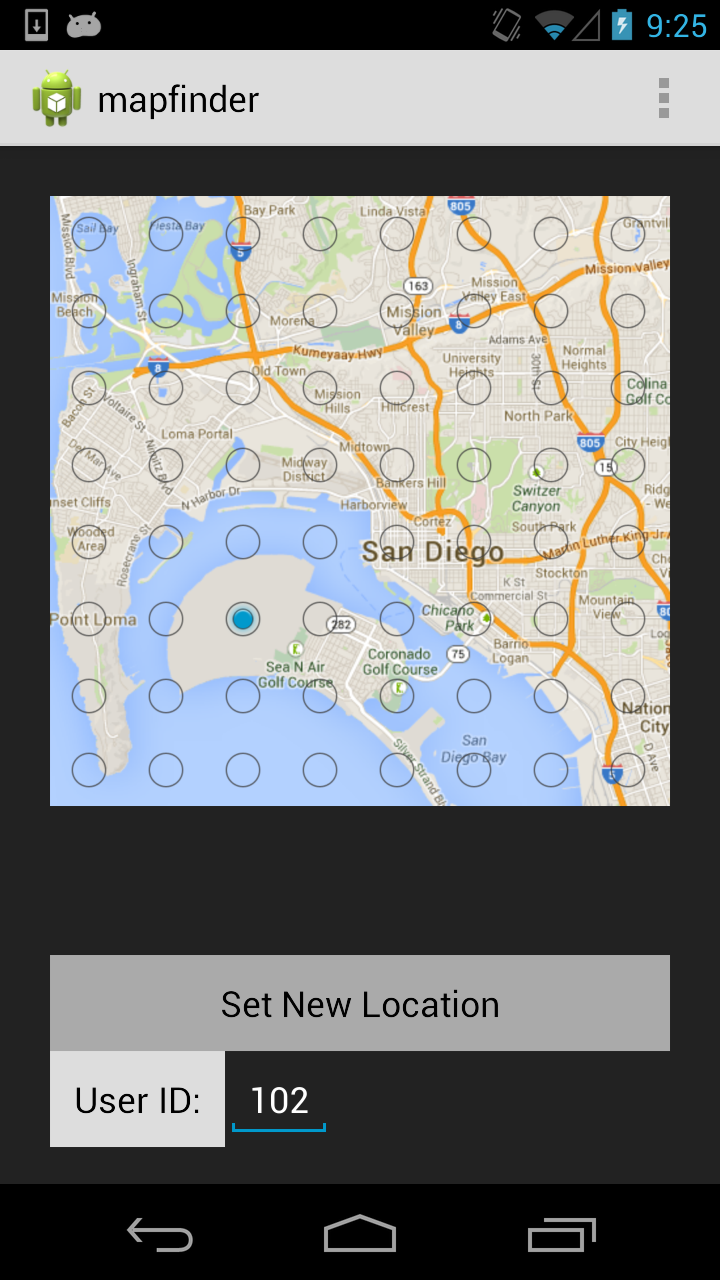}
\label{fig:dij20}
}
\subfloat[After computation.]{
\includegraphics[width=0.22\textwidth]{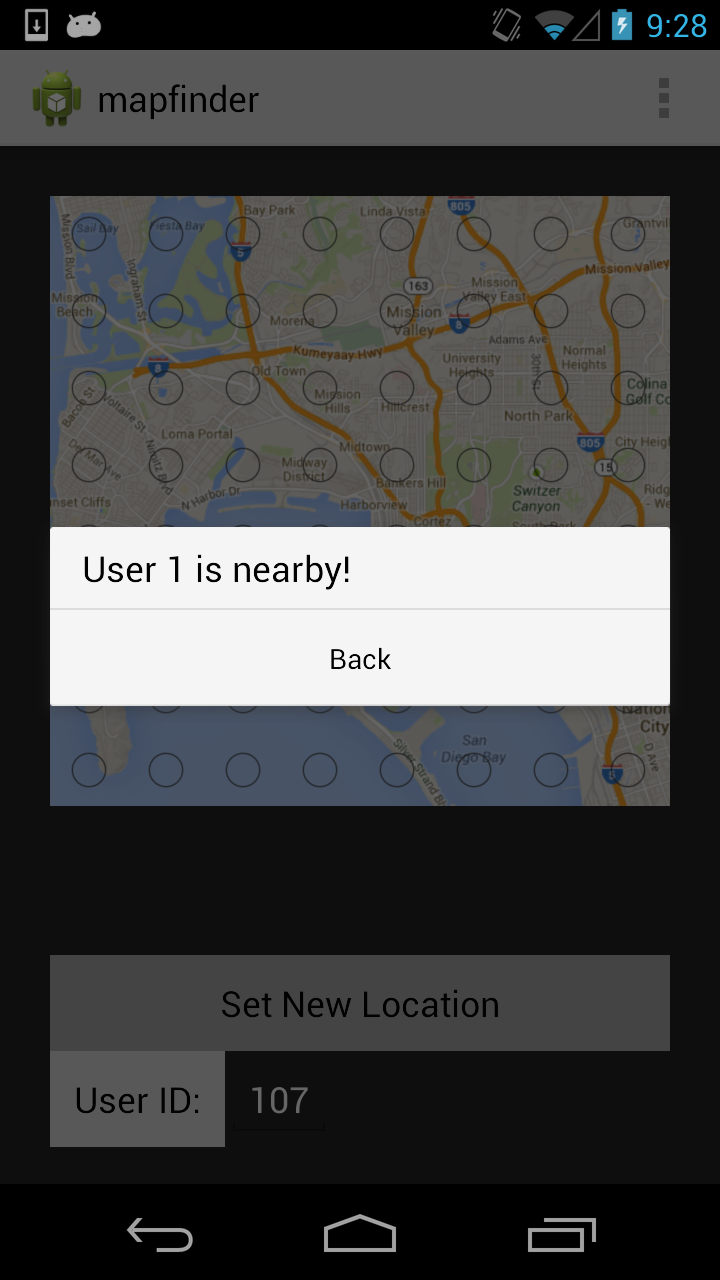}
\label{fig:dij50}
}
\caption{Screenshots from our application. (a) shows the map
with radio buttons a user can select to indicate position. (b) show the result after ``set new position" is pressed when a user is present. The application is set to use 64 different map locations. Map image from Google Maps.} 
\label{fig:maps}
\end{figure}

Figure~\ref{fig:mapprogs} shows the performance of these programs in the
malicious model with a $2^{-80}$ security parameter (evaluated over 256
circuits). We consider map regions containing both 256 and 2048 cells.
For maps of 256 cells, each operation takes
about 30 seconds.\footnote{Our 64-cell map, as seen in figure~\ref{fig:mapprogs}, also takes about 30 seconds for each operation.} 
As there are three operations for each ``Set New
Location'' event, the total execution time is about 90 seconds, while
execution time for 2048 cells is about 3 minutes. The
bottleneck of the 64 and 256 cell maps is the outsourced oblivious transfer,
which is not affected by the number of cells in the map. The vastly larger
circuit associated with the 2048-cell map makes getting and setting values
slower operations, but these results show such an application is practical
for many scenarios. 

\noindent {\bf Example} - As an example, two friends initiate a friend finder computation using Amazon as the cloud and Facebook as the generator. The first friend goes out for a coffee at a caf\'{e}. The second friend, riding his bike, gets a message that his friend is nearby and looks for a few minutes and finds him in the caf\'{e}. Using this application prevents either Amazon or Facebook from knowing either user's location while they are able to learn whether they are nearby.

\subsection{Discussion}
\label{sub:dis}

{\noindent\bf Analysis of improvements}

We analyzed our results and found the improvements came from three places: the improved sS13 consistency check, the saving and reusing of values, and the fixed oblivious transfer. In the case of the sS13 consistency check, there are two reasons for the improvement: first, there is less network traffic, and second, it uses symmetric key operations instead of exponentiations. In the case of saving and reusing values, we save time with the faster input consistency check and by not requiring a user to recompute a circuit multiple times. Lastly, we reduced the runtime and bandwidth by fixing parts of the OOT. The previous outsourced oblivious transfer performed the primitive OT $S$ ($S$ being the number of circuits) times instead of a single time, which turn forced many extra exponentiations. Each amount of improvement varies depending upon the circuit. 

{\noindent\bf Output check}


Although the garbled circuit is larger for our output check, this check performs less cryptographic operations for the outsourcing party, as the evaluator only has to perform a MAC on the output of the garbled circuit. We use this check to demonstrate using a MAC can be an efficient output check for a low power device when the computational power is not equivalent across all parties. 

{\noindent\bf Commit Cut-and-Choose vs OT Cut-and-Choose}

Our results unexpectedly showed that the sS13 OT cut-and-choose used in
PartialGC is actually slower than the KSS12 commit cut-and-choose used in CMTB
in our experimental setup. Theoretically, sS13, which requires fewer
cryptographic operations, as it generates the garbled circuit only once, should
be the faster protocol. The difference between the two cut-and-choose protocols
is the network usage -- instead of $\frac{2}{5}$ of the circuits (CMTB), {\it all} 
the circuits must be transmitted in sS13. The sS13 cut-and-choose is
required in our protocol so that the cloud can check that the generator creates
the correct gates.

\begin{techreport}
\subsection{Implementation Optimizations}

We proceeded to optimize our system in light of the slowdown we saw when compared to CMTB for circuits with large amounts of gates. We made the following changes: (1) turn AES-NI on, as it was not turned on by default in CMTB (or \cite{Kreuter2012}, which CMTB is based on), (2) hand-optimize the garbled gate generation and evaluation to remove excess memory operations, (3) remove the need for network I/O for XOR gates from the underlying implementation (previously, 4 bytes were spuriously transmitted for each XOR gate), (4) batch process gates to reduce the overhead of networking for each gate, and (5) remove unnecessary hash calls that existed in PartialGC as an artifact of being built on CMTB.

\subsection{Corrections of Underlying Implementation}

We made two corrections to the implementation of PartialGC that are artifacts of the underlying implementations. The first error was from KSS and while the other was from CMTB. We performed the following changes: (1) further correct the OT phase of CMTB and (2) add a missing input encoding phase that was supposed to exist in KSS. The first error was straightforward: rather than performing a single set of OTs and then extending it to all circuits, after the single set of OTs in CMTB, a matrix transformation was performed for each circuit (instead of a single matrix transformation). We removed this error and added the necessary correction, i.e. after the single set of OTs were performed, the results from the OTs were extended in the same manner as in our protocol description. To correct the second error, we added the missing input encoding step for the evaluator's input. Note that KSS and all subsequent systems built from it do not have this input encoding. Without the input encoding, a selective failure attack can be performed easily by the generator in order to gain information about a single bit of the evaluator`s input. 

\subsection{Results from Correct and More Optimal Implementation}

\begin{table*}[t]
\centering\small
  \begin{tabular}{| c | r |  r | r | r |  r |  r | r |  r |  r|}
\hline
  & \multicolumn{3}{|c|}{ 16 Circuits}  & \multicolumn{3}{|c|}{ 64 Circuits}  & \multicolumn{3}{|c|}{ 256 Circuits}  \\
\hline
 & \multicolumn{1}{|c|}{Imp.} & \multicolumn{1}{|c|}{Orig.} &  & \multicolumn{1}{|c|}{Imp.} & \multicolumn{1}{|c|}{Orig.} &  & \multicolumn{1}{|c|}{Imp.} & \multicolumn{1}{|c|}{Orig.}  &  \\ \hline

KeyedDB 64 & 4 $\pm$ 10\% &3.5 $\pm$ 3\% & 0.92x&4.4 $\pm$ 5\% &8.3 $\pm$ 5\% & 1.9x&7.6 $\pm$ 6\% &26 $\pm$ 2\% & 3.4x\\[1pt]\hline 
KeyedDB 128 & 3.8 $\pm$ 10\% &4.4 $\pm$ 8\% & 1.1x&4.5 $\pm$ 8\% &9.5 $\pm$ 4\% & 2.1x&8.1 $\pm$ 4\% &31 $\pm$ 3\% & 3.8x\\[1pt]\hline 
KeyedDB 256 & 4.0 $\pm$ \phantom{0}4\% &4.6 $\pm$ 2\% & 1.1x&4.7 $\pm$ 9\% &12 $\pm$ 6\% & 2.7x&9.3 $\pm$ 4\% &38 $\pm$ 5\% & 4.0x\\[1pt]\hline 
MatrixMult8x8 & 21 $\pm$ \phantom{0}2\% &46 $\pm$ 4\% & 2.2x&29 $\pm$ 4\% &100 $\pm$ 7\% & 3.5x&69 $\pm$ 2\% &370 $\pm$ 5\% & 5.4x\\[1pt]\hline 
EditDist 128 & 7.8 $\pm$ \phantom{0}4\% &22 $\pm$ 3\% & 2.8x&10 $\pm$ 4\% &50 $\pm$ 9\% & 4.8x&21 $\pm$ 2\% &180 $\pm$ 6\% & 8.9x\\[1pt]\hline 
Millionaires 8192 & 24 $\pm$ \phantom{0}5\% &7.3 $\pm$ 6\% & 0.30x&30 $\pm$ 3\% &20 $\pm$ 2\% & 0.68x&78 $\pm$ 3\% &70 $\pm$ 2\% & 0.89x\\[1pt]\hline 
\multicolumn{10}{c}{}
  \end{tabular}
  \caption{Comparing the original PartialGC and the improved version of PartialGC. Results in seconds.}
  \label{tab:opexecution}
\end{table*}

In Table~\ref{tab:opexecution} we present results from the corrected and more optimal implementation of PartialGC. We observe the following:

\begin{enumerate}
\item The program that has a large evaluator's input and very little gates is slightly slower due to the fixed OT error and added input encoding (Millionaires).

\item The program with a large circuit size when compared with the input sizes of both the generator and evaluator has improved runtime performance (Edit distance).

\item The program we tested that has high input and also has a high gate count is improved (Matrix Mult).

\item The program that relies mostly on the generator's input size with a low amount of gates is largely unaffected by the OT change or the added input encoding but is still improved by the optimizations to the garbled gate runtime (Keyed DB).
\end{enumerate}

\end{techreport}

\begin{techreport}

\subsection{ SFE Engineering Insights}

Given our experience from building on other frameworks, we provide our insights:

\vspace{.5mm}

\setlist{nolistsep=0,leftmargin=3mm}
\begin{enumerate}
\item If runtime results do not describe the intuition of the protocol then there is most likely something incorrect in the implementation. For instance, if the average time to evaluate garbled gates is greater than the average time to generate the garbled gates there is most likely a problem in the garbled circuit evaluation phase.
\item Although comparing the time of garbling and evaluating can be interesting in its own right, evaluating the total time of full garbled circuit garbling and evaluation (including network overhead) is also insightful as networking and related operations can be the bottleneck in a practical system. This includes network usage, the effects of a cut-and-choose protocol, and the time it takes to get the next gate from the interpreter or circuit file.
\item When using another implementation, check to verify each protocol step exists in the implementation.
\item Implementing checks at the circuit layer that are exposed to an end-user is not worth the time saved by not encoding them directly into the garbled circuit. This comes from our experience with our output consistency check, which was difficult to create correctly for each test program.
\item Ensure that all the features of a developed compiler and execution system are thoroughly unit tested.

\end{enumerate}
\end{techreport}

\section{Related Work}
\label{sec:relwork}

SFE was first described by Yao in his seminal paper~\cite{Yao1982} on the
subject. The first general purpose platform for SFE,
Fairplay~\cite{Malkhi2004}, was created in 2004. Fairplay had both a compiler
for creating garbled circuits, and a run-time system for executing them.
Computations involving three or more parties have also been examined; one of the
earliest examples is FairplayMP~\cite{Ben-David2008}. There have been multiple
other implementations since, in both
semi-honest~\cite{Burkhart2010,Damgard2009,Henecka2010,Holzer2012,Zhang2013} and
malicious settings~\cite{Kreuter2013,Shelat2011}.

Optimizations for garbled circuits include the free-XOR
technique~\cite{Kolesnikov2008}, garbled row reduction~\cite{Pinkas2009},
rewriting computations to minimize SFE~\cite{Ker13}, and
pipelining~\cite{Huang2011}. Pipelining allows the evaluator to proceed with the
computation while the generator is creating gates.

Kreuter {\it et al.}~\cite{Kreuter2012} included both an optimizing compiler and an efficient
run-time system using a parallelized implementation of SFE in the malicious
model from~\cite{Shelat2011}.

The creation of circuits for SFE in a fast and efficient manner is one of
the central problems in the area. Previous compilers, from Fairplay to
KSS12, were based on the concept of creating a complete circuit and then
optimizing it. PAL~\cite{Mood2012} improved such systems by
using a simple template circuit, reducing memory usage by 
orders of magnitude. PCF~\cite{Kreuter2013} built from this and used a more
advanced representation to reduce the disk space used.

Other methods for performing MPC involve homomorphic encryption~\cite{Bendlin2011,Gentry2012}, secret sharing~\cite{BLW08}, and ordered binary decision diagrams~\cite{Kruger2006}.  A general privacy-preserving computation protocol that uses homomorphic encryption and was designed specifically for mobile devices can be found in~\cite{EMOC}. There are also custom protocols designed for particular privacy-preserving computations; for example, Kamara et al.~\cite{MSR201363} showed how to scale server-aided Private Set Intersection to billion-element sets with a custom protocol.

Previous reusable garbled-circuit schemes include that of Brand\~{a}o~\cite{Bran2013}, which uses homomorphic encryption,  Gentry {\it et al.}~\cite{Gentry2013}, which uses attribute-based functional encryption, and Goldwasser {\it et al.}~\cite{Goldwasser2013}, which introduces a succinct functional encryption scheme. These previous works are purely theoretical; none of them provides experimental performance analysis.  There is also recent theoretical work on reusing encrypted garbled-circuit values~\cite{LO13,GHLORW14,LO14} in the ORAM model; it uses a variety of techniques, including garbled circuits and identity-based encryption, to execute the underlying low-level operations (program state, read/write queries, etc.). Our scheme for reusing encrypted values is based on completely different techniques; it enables us to do new kinds of computations, thus expanding the set of things that can be computed using garbled circuits.

The Quid-Pro-Quo-tocols system~\cite{HKE12} allows fast execution with a
single bit of leakage. The garbled circuit is executed
twice, with the parties switching roles in the latter execution, then 
running a secure protocol to ensure that the output from both
executions are equivalent; if this fails, a single bit may be leaked due to
the selective failure attack.

\section{Conclusion}
\label{sec:conc}

This paper presents PartialGC, a server-aided SFE scheme allowing the
reuse of encrypted values to save the costs of input validation and to allow
for the saving of state, such that the costs of multiple computations may be
amortized. Compared to the server-aided outsourcing scheme by CMTB, we
reduce costs of computation by up to 96\% and bandwidth costs by up to 98\%.
Future work will consider the generality of the encryption re-use
scheme to other SFE evaluation systems and large-scale systems
problems that benefit from the addition of state, which can open up new and
intriguing ways of bringing SFE into the practical realm.

\paragraph{Acknowledgements}
This
material is based on research sponsored by the Defense Advanced Research
Projects Agency (DARPA) and the Air Force Research Laboratory under
contracts
FA8750-11-2-0211 and FA8750-13-2-0058. The U.S. Government is authorized to reproduce and distribute reprints for Governmental purposes notwithstanding any copyright
notation thereon. The views and conclusions contained herein are those of
the authors and should not be interpreted as necessarily representing the
official policies or endorsements, either expressed or implied, of DARPA
or the U.S. Government.

\small

\bibliographystyle{abbrv}
\bibliography{arxiv}

\normalsize
\appendix
\section{CMTB Protocol}\label{appendix:CMTB}

As we are building off of the CMTB garbled circuit execution system, we give
an abbreviated version of the protocol. In our description we refer to the
generator, the cloud, and the evaluator. The cloud is the party the
evaluator outsources her computation to.

\paragraph{\bf Circuit generation and check}  
The template for the garbled circuit is augmented to add one-time XOR pads on the
output bits and split the evaluator's input wires per the input encoding
scheme. The generator generates the necessary garbled circuits and
commits to them and sends the commitments to the evaluator. The generator
then commits to input labels for the evaluator's inputs. 

CMTB relies on Goyal et al.'s~\cite{Goyal2008} random seed check, which was
implemented by Kreuter et al.~\cite{Kreuter2012} to combat generation of
incorrect circuits. This technique uses a cut-and-choose style protocol to
determine whether the generator created the correct circuits by creating
and committing to many different circuits. Some of those circuits are used for
evaluation, while the others are used as check circuits.

\paragraph{\bf Evaluator's inputs} 
Rather than a two-party oblivious transfer, we
perform a three-party {\em outsourced oblivious transfer}.
An outsourced oblivious transfer is an OT that gets the
select bits from one party, the wire labels from another,
and returns the selected wire labels to a third party.
The party that selects the wire labels does not learn
what the wire labels are, and the party that inputs the 
wire labels does not learn which wire was selected; 
the third party only learns the selected wire labels. 
In CMTB, the generator offers up wire labels, the evaluator 
provides the select bits, and the cloud receives the selected labels.
CMTB uses the Ishai OT extension~\cite{Ishai2003} to reduce 
the number of OTs.

CMTB uses an encoding technique from Lindell and Pinkas \cite{Lindell2007}, which prevents the generator from finding out any
information about the evaluator's input if a selective failure attack
transpires. CMTB also uses the commitment technique of Kreuter et
al.~\cite{Kreuter2012} to prevent the generator from swapping the two
possible outputs of the oblivious transfer. To ensure the evaluator's input is consistent across all circuits, CMTB uses a technique from Lindell and Pinkas~\cite{Lindell2007}, whereby the inputs are derived from a single oblivious transfer.

\paragraph{\bf Generator's input and consistency check} 
The generator sends his input to the cloud for the evaluation circuits. Then
the generator, evaluator, and cloud all work together to prove the input
consistency of the generator's input.  For the generator's input consistency
check, CMTB uses the malleable-claw free construction from shelat and
Shen~\cite{Shelat2011}.  

\paragraph{\bf Circuit evaluations} 
The cloud evaluates the garbled circuits marked for evaluation and checks
the circuits marked for checking. The cloud enters in the generator and
evaluator's input into each garbled circuit and evaluates each circuit. The
output for any particular bit is then the majority output between all
evaluator circuits. The cloud then recreates each check circuit.  The cloud
creates the hashes of each garbled circuit and sends those hashes to the
evaluator. The evaluator then verifies the hashes are the same as the ones
the generator previously committed to.

\paragraph{\bf Output consistency check and output} 
The three parties prove together that the cloud did not modify the output
before she sent it to the generator or evaluator.  Both the evaluator and
generator receive their respective outputs. All outputs are blinded by
the respective party's one-time pad inside the garbled circuit to prevent
the cloud from learning what any output bit represents.

CMTB uses the XOR one-time pad technique from Kiraz~\cite{Kiraz2006} to
prevent the evaluator from learning the generator's real output. To prevent
output modification, CMTB uses the
witness-indistinguishable zero-knowledge proof from Kreuter et
al.~\cite{Kreuter2012}.

\section{Overhead of Reusing Values}\label{appendix:reuse}

\begin{figure}[t]
\centering
\includegraphics[width=3in]{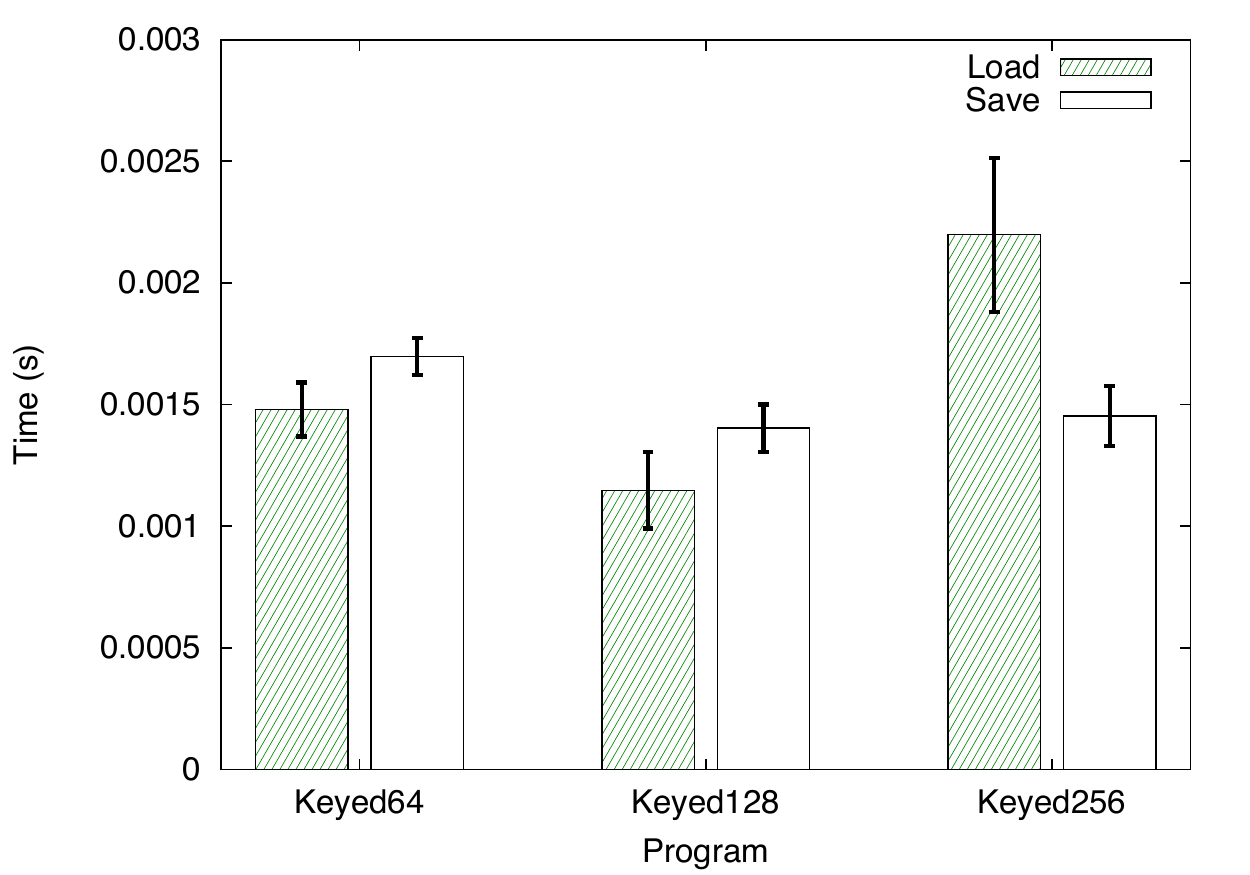}
\caption{The amount of time it takes to save and load a bit in PartialGC when using 256 circuits.}
\label{fig:saveload}
\end{figure}

We created several versions of the keyed database program to determine the runtime of saving and loading the database on a per bit basis using our system (See Figure~\ref{fig:saveload}). This figure shows it is possible to save and load a large amount of saved wire labels in a relatively short time. The time to load a wire label is larger than the time to save a value since saving only involves saving the wire label to a file and loading involves reading from a file and creating the partial input gates. Although not shown in the figure, the time to save or load a single bit also increases with the circuit parameter. This is because we need $S$ copies of that bit - one for every circuit.

\section{Example Program}\label{appendix:example}

In this section we describe the execution of an {\it attendance application}. Consider a scenario where the host wants each user to sign in from their phones to keep a log of the guests, but also wants to keep this information secret.

This application has three distinct programs. The first program initializes a counter to a number input by the evaluator. The second program, which is used until the last program is called, takes in a name and increments the counter by one. The last program outputs all names and returns the count of users. For this application, users (specifically, their mobile phones) assume the role of evaluators in the protocol (Section~\ref{sec:protocolchanges}).

First, the host runs the initial program to initialize a database. We cannot execute the second program to add names to the log until this is done, lest we reveal that there is no memory saved ({\it i.e.}, there is no one else present).

\noindent
{\bf Protocol in Brief}: In this first program, the cut-and-choose OT is executed to select the circuit split (the circuits that are for evaluation and generation). Both parties save the decryption keys: the cloud saves the keys attained from the OT and the generator saves both possible keys that could have been selected by the cloud. The evaluator performs the OOT with the other parties to input the initial value into the program. There is no input by the generator so the generator's input check does not execute. There is no partial input so that phase of the protocol is skipped. The garbled circuit to set the initial value is executed;  while there is no output to the generator or evaluator, a partial output is produced: the cloud saves the garbled wire value, which it possesses, and the generator saves both possible wire values (the generator does not know what value the cloud has, and the cloud does not know what the value it has saved actually represents). The cloud also saves the circuit split.

Saved memory after the program execution (when the evaluator inputs 0 as the initial value):

\begin{center}
\begin{tabular}{|c|}
\hline
  Count\\\hline
  0\\\hline\hline
  Saved Guests \\\hline
  \\\hline
\end{tabular}
\end{center}

Guest 1 then enters the building and executes the program, entering his name (``Guest 1'') as input.

\noindent
{\bf Protocol in Brief for Second Program}: In this second program, the cut-and-choose OT is not executed. Instead, both the generator and cloud load the saved decryption key values, hash them, and use those values for the check and evaluation circuit information (instead of attaining new keys through an OT, which would break security). The new keys are saved, and the evaluator then performs the OOT for input. The generator does not have any input in this program so the check for the generator's input is skipped. Since there exists a partial input, the generator loads both possible wire values and creates the partial input gates. The cloud loads the attained values, receives the partial input gates from the generator, and then executes (and checks) the partial input gates to receive the garbled input values. The garbled circuit is then executed and partial output saved as before (although there is more data to save for this program as there is a name present in the database).

After executing the second program the memory is as follows:

\begin{center}
\begin{tabular}{|c|}
\hline
  Count\\\hline
  1\\\hline\hline
  Saved Guests \\\hline
  Guest1 \\\hline
\end{tabular}
\end{center}

Guest 2 then enters the dwelling and runs the program. The execution is similar to the previous one (when Guest 1 entered), except that it's executed by Guest 2's phone.

At this point, the memory is as follows:

\begin{center}
\begin{tabular}{|c|}
\hline
  Count\\\hline
  2\\\hline\hline
   Saved Guests \\\hline
  Guest1 \\\hline
  Guest2 \\\hline
\end{tabular}
\end{center}

Guest 3 then enters the dwelling and executes the program as before. At this point, the memory is as follows:

\begin{center}
\begin{tabular}{|c|}
\hline
  Count\\\hline
  3\\\hline\hline
   Saved Guests \\\hline
  Guest1 \\\hline
  Guest2 \\\hline
  Guest3 \\\hline
\end{tabular}
\end{center}

Finally, the host runs the last program that outputs the count and the guests in the database. In this case the count is $3$ and the guests are $Guest1$, $Guest2$, and $Guest3$.

\end{document}